\let\llncssubparagraph\subparagraph
\let\subparagraph\paragraph
\let\llncssubparagraph\subparagraph
\let\subparagraph\paragraph
\let\subparagraph\llncssubparagraph
\titlespacing{\subsection}{0pt}{6pt plus 4pt minus 2pt}{6pt plus 0pt minus 4pt}
\titlespacing{\subsubsection}{0pt}{6pt plus 0pt minus 2pt}{6pt plus 0pt minus 4pt}
\theoremstyle{plain}
\newtheorem{thm}{Theorem}
\def\thm@space@setup{%
  \thm@preskip=6pt plus 0cm minus 4pt
  \thm@postskip=\thm@preskip 
}
\tikzset{
->, 
>=stealth, 
node distance=3cm, 
shorten >=1pt,
every state/.style={thick, fill=gray!10}, 
inner sep=0pt,
minimum size=0pt,
initial text=$ $, 
}
\newcommand{\inputs}{{\mathcal I}}
\newcommand{\outputs}{{\mathcal O}}
\newcommand{\inp}{{\sf i}}
\newcommand{\out}{{\sf o}}
\newcommand{\fIO}{f_{\mathbbm{i}\mathbbm{o}}}
\newcommand{\prefs}{{\sf Prefs}}
\newcommand{\MM}{{\mathcal{M}}}
\newcommand{\PP}{{\mathcal{P}}}
\newcommand{\TT}{{\mathcal{T}}}
\newcommand{\atm}{{\mathcal{A}}}
\newcommand{\spec}{{\mathcal{S}}}
\newcommand{\inproj}{{\sf in}}
\newcommand{\outproj}{{\sf out}}
\newcommand{\sem}[1]{{[\!\![ #1]\!\!]}}
\newcommand{\transS}{\textsf{Post}}
\newcommand{\transO}{\textsf{Out}}
\newcommand{\trans}{\Delta}
\newcommand{\finer}{\sqsubseteq}
\newcommand{\finerstrict}{\sqsubset}
\newcommand{\anti}{\mathcal{A}\mathcal{C}}
\let\oldexample\example
\renewcommand{\example}{\oldexample\normalfont}
\setlist[itemize]{itemsep=0.8pt, topsep=2pt}
\setlist[enumerate]{itemsep=0.8pt, topsep=2pt}
\newcommand{\tocheck}[1]{\textcolor{black}{#1}}
\begin{document}
\title{LTL Reactive Synthesis with a Few Hints}

\author{Mrudula Balachander \thanks{Mrudula Balachander: Mrudula Balachander is a Research Fellow at F.R.S-FNRS.} \and Emmanuel Filiot \thanks{Emmanuel Filiot: Emmanuel Filiot is a senior research associate at F.R.S-FNRS.}  \and
Jean-Fran\c{c}ois Raskin} 
\institute{Universit\'e libre de Bruxelles, Brussels, Belgium \thanks{This work was partially supported by the Fonds de la Recherche Scientifique – F.R.S.-FNRS under the MIS project F451019F and by the ASP-Fellowship grant.}}


\maketitle

\begin{abstract}
    We study a variant of the problem of synthesizing Mealy machines that enforce LTL specifications against \tocheck{all possible behaviours of the environment including hostile ones}.
    In the variant studied here, the user provides the high level LTL specification $\varphi$ of the system to design, and a set $E$ of examples of executions that the solution must produce. 
    Our synthesis algorithm works in two
    phases.
    First, it generalizes the decisions taken along the examples $E$ using tailored extensions of automata learning algorithms.
    This phase generalizes the user-provided examples in $E$ while preserving realizability of $\varphi$.
    Second, the algorithm turns the (usually) incomplete Mealy machine obtained by the learning phase into a complete Mealy machine that realizes $\varphi$.
    The examples are used to guide the synthesis procedure.
    We provide a completness result that shows that our procedure can learn any Mealy machine $M$ that realizes $\varphi$ with a small (polynomial) set of examples.
    We also show that our problem, that generalizes the classical LTL synthesis problem (i.e. when $E=\emptyset$), matches its worst-case complexity.
    The additional cost of learning from $E$ is even polynomial in the size of $E$ and in the size of a symbolic representation of solutions that realize $\varphi$.
    This symbolic representation is computed by the synthesis algorithm implemented in {\sc Acacia-Bonzai} when solving the plain LTL synthesis problem. 
    We illustrate the practical interest of our approach on a set of examples.
\end{abstract}

\section{Introduction}


Reactive systems are notoriously difficult to design and even to specify correctly~\cite{DBLP:conf/icalp/AbadiLW89,DBLP:journals/tosem/DIppolitoBPU13}. As a consequence, formal methods have emerged as useful tools to help designers to built reactive systems that are correct. For instance, model-checking asks the designer to provide a model, in the form of a Mealy machine $\MM$, that describes the reactions of the system to events generated by its environment, together with a description of the {\em core correctness properties} that must be enforced. Those properties are expressed in a logical formalism, typically as an LTL formula $\varphi_{{\sf CORE}}$. Then an algorithm decides if $\MM \models \varphi_{{\sf CORE}}$, i.e. if all executions of the system in its environment satisfy the specification. 
Automatic reactive synthesis is more ambitious: it aims at automatically generating a model from a high level description of the ``{\em what}'' needs to be done instead of the ``{\em how}'' it has to be done. Thus the user is only required to provide an LTL specification $\varphi$ and the algorithm automatically generates a Mealy machine $\MM$ such that $\MM \models \varphi$ whenever $\varphi$ is {\em realizable}. Unfortunately, it is most of the time not sufficient to provide the core correctness properties $\varphi_{{\sf CORE}}$ to obtain a Mealy machine $\MM$ that is useful in practice, as illustrated next.

\begin{example}[Synthesis from $\varphi_{{\sf CORE}}$ - Mutual exclusion]\label{ex:mutex}
Let us consider the classical problem of {\em mutual exclusion}. In the simplest form of this problem, we need to design an arbiter that receives requests from two processes, modeled by two atomic propositions $r_1$ and $r_2$ controlled by the environment, and that grants accesses to the critical section, modeled as two atomic propositions $g_1$ and $g_2$ controlled by the system. The core correctness properties (the {\em what}) are: $(i)$ mutual access, i.e. it is never the case that the access is granted to both processes at the same time, $(ii)$ fairness, i.e. processes that have requested access eventually get access to the critical section. These core correctness specifications for mutual exclusion ({\sf ME}) are easily expressed in LTL as follows:
$\varphi^{{\sf ME}}_{{\sf CORE}} \equiv \square ( \neg g_1 \lor \neg g_2) \land \square ( r_1 \rightarrow \lozenge g_1) \land \square ( r_2 \rightarrow \lozenge g_2)$. Indeed, this formula expresses the core correctness properties that we would model check no matter {\em how} $\MM$ implements mutual exclusion, e.g. Peterson, Dedekker, Backery algorithms, etc. Unfortunately, if we submit $\varphi^{{\sf ME}}_{{\sf CORE}}$ to an LTL synthesis procedure, implemented in tools like {\sc Acacia-Bonzai}~\cite{DBLP:journals/corr/abs-2204-06079}, {\sc BoSy}~\cite{DBLP:conf/cav/FaymonvilleFT17}, or {\sc Strix}~\cite{DBLP:conf/cav/MeyerSL18}, we get the solution $\MM$ depicted in~\ref{fig:ME-simple}-(left) (all three tools return this solution). While this solution is perfectly correct and realizes the specification $\varphi^{{\sf ME}}_{{\sf CORE}}$, the solution ignores the inputs from the environment and grants access to the critical sections in a round robin fashion. Arguably, it may not be considered as an \tocheck{efficient} solution to the mutual exclusion problem. This illustrates the limits of the synthesis algorithm to solve the design problem by providing {\em only} the core correctness specification of the problem, i.e. the  {\em what}, only. To produce useful solutions to the mutual exclusion problem, more guidance must be provided.
\end{example}

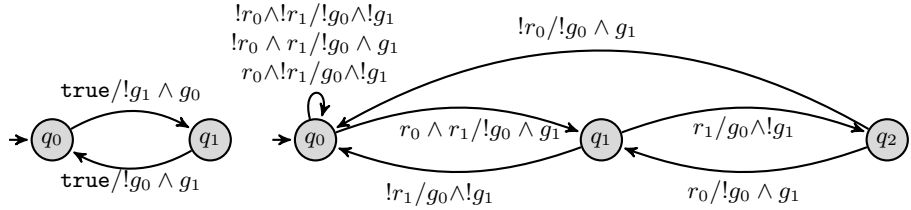
\begin{figure}
    \centering
    \begin{tikzpicture}[->,>=stealth',shorten >=1pt,auto,node distance=3.8cm,
                    thick,inner sep=0pt,minimum size=0pt,scale=0.7]
  \tikzstyle{every state}=[fill=gray!30,text=black,inner
  sep=2pt,minimum size=12pt]
  
        \node[state, initial] (q0) {$q_0$};
        \node[state] (q1) at (3,0) {$q_1$};
        \draw
            (q0) edge[bend left, above] node[yshift=1mm]{{\small
                $\mathtt{true}/!g_1\land g_0$}} (q1)
            (q1) edge[bend left, below] node{{\small $\mathtt{true}/!
                g_0\land g_1$}} (q0);

                    \node[state, initial] (q0x) at (5,0) {$q_0$} ;
        \node[state, right of=q0x] (q1x) {$q_1$};
        \node[state, right of=q1x] (q2x) {$q_2$};
        \draw (q0x) edge[loop above, above, align=center]
        node[yshift=1mm]{$!r_0\land !r_1/!g_0\land !g_1$ \\ $!r_0\land
          r_1/!g_0\land g_1$ \\ $r_0\land !r_1/g_0\land !g_1$} (q0x)
        (q0x) edge[bend left=20, right] node[yshift=-3mm,xshift=-8mm]{$r_0\land r_1/!g_0\land g_1$} (q1x)
        (q1x) edge[bend left=20, left] node[yshift=-3mm,xshift=5mm]{$!r_1/g_0\land !g_1$} (q0x)
        (q1x) edge[bend left=20, below] node[yshift=-1mm]{$r_1/g_0\land !g_1$} (q2x)
        (q2x) edge[bend right=30, left] node[yshift=3mm,xshift=5mm]{$!r_0/!g_0\land g_1$} (q0x)
        (q2x) edge[bend left=20, below] node[yshift=-1mm]{$r_0/!g_0\land g_1$} (q1x);

        \end{tikzpicture}
    \caption{(Left) The solution provided by
      Strix to the mutual exclusion problem  for the high level specification $\varphi_{{\sc LOW}}^{{\sf ME}}$.
    Edge labels are of the form $\varphi /\psi$ where $\varphi$ is a Boolean formula on the input atomic
    propositions (the Boolean variables controlled by the environment)
    and $\psi$ is a maximally consistent
    conjunction  of literals over the set of output propositions (the
    Boolean variables controlled by the system). (Right) A natural solution that we would write by hand, and is automatically produced by our learning and synthesis algorithm for the same specification together with two simple examples.}
    \label{fig:ME-simple}
    \label{fig:nicesolution}
\end{figure}

The main question is now: {\em how should we specify these additional properties ?} Obviously, if we want to use the "plain" LTL synthesis algorithm, there is no choice: we need to reinforce the specification $\varphi^{{\sf ME}}_{{\sf CORE}}$ with additional lower level properties $\varphi^{{\sf ME}}_{{\sf LOW}}$. Let us go back to our running example.

\begin{example}[Synthesis from $\varphi^{{\sf ME}}_{{\sf CORE}}$ and $\varphi^{{\sf ME}}_{{\sf LOW}}$]
To avoid solutions with {\em unsolicited grants}, we need to reinforce the core specification. The Strix online demo website proposes to add the following 3 LTL formulas $\varphi^{{\sf ME}}_{{\sf LOW}}$  to $\varphi^{{\sf ME}}_{{\sf CORE}}$ (see Full arbitrer $n=2$, at \texttt{https://meyerphi.github.io/strix-demo/}):
$(1)$ $\bigwedge_{i \in \{1,2\}} \square((g_i \land \square \neg r_i) \rightarrow \lozenge  \neg g_i)$, $(2)$ $\bigwedge_{i \in \{1,2\}} \square(g_i \land \bigcirc (\neg r_i \land \neg g_i) \rightarrow \bigcirc (r_i {\sf R} \neg g_i))$, and $(3)$ $\bigwedge_{i \in \{1,2\}} (r_i {\sf R} \neg g_i)$.
Now, while the specification $\varphi^{{\sf ME}}_{{\sf CORE}} \land \varphi^{{\sf ME}}_{{\sf LOW}}$ allows Strix to provide us with a better
solution, it is more complex than needed (it has 9 states and can be
seen in App.~\ref{app:nounsoliticed}) and clearly does not look like an optimal solution to our mutual exclusion problem. For instance, the model of Fig.~\ref{fig:nicesolution}-(right) is arguably more natural. How can we get this model without coding it into the LTL specification,  which would diminish greatly the interest of using a synthesis procedure in the first place?
\end{example}

\tocheck{In general, higher level properties are ones that are concerned with safety and are the ones needed to be verified on all implementations. In contrast, lower level properties are more about a specific implementation, i.e., they talk more about expected behaviour and are concerned with the efficiency of the implementation.}
At this point, it is legitimate to question the adequacy of LTL as a specification language for {\em lower level} properties, and so as a way to guide the synthesis procedure towards relevant solutions to realize $\varphi_{{\sf CORE}}$.
In this paper, we introduce an alternative to guide synthesis toward useful solutions that realize $\varphi_{{\sf CORE}}$: we propose to use examples of executions that illustrate behaviors of expected solutions.
We then restrict the search to solutions that {\em generalize} those examples.
Examples, or scenarios of executions, are accepted in requirement engineering as an adequate tool to elicit requirements about complex systems~\cite{DBLP:conf/sigsoft/DamasLL06}.
For reactive system design, examples are particularly well-suited as they are usually much easier to formulate than full blown solutions, or even partial solutions.
It is because, when formulating examples, the user controls {\em both} the inputs {\em and} the outputs, avoiding the main difficulty of reactive system design: having to cope with {\em all} possible environment inputs.
We illustrate this on our running example.

\begin{example}[Synthesis from $\varphi^{{\sf ME}}_{{\sf CORE}}$ and examples] \label{ex:mutexTraces}Let us keep, as the LTL specification, $\varphi^{{\sf ME}}_{{\sf CORE}}$ only, and let us consider the following simple prefix of executions that illustrate how solutions to mutual exclusion should behave: 
  \begin{itemize}
      \item[$(1)$] $\{!r_1,!r_2\} . \{!g_1,!g_2\} \#\{r_1,!r_2\} . \{g_1,!g_2\} \# \{!r_1,r_2\} . \{!g_1,g_2\}$
      \item[$(2)$] $\{r_1,r_2\} . \{g_1,!g_2\} \# \{!r_1,!r_2\} . \{!g_1,g_2\}$
  \end{itemize}
These prefixes of traces prescribe reactions to typical {\em fixed} finite sequences of inputs: $(1)$ if there is no request initially, then no access is granted (note that this excludes already the round robin solution), if process 1 requests and subsequently process 2 requests, process 1 is granted first and then process 2 is granted after, $(2)$ if both process request simultaneously, then process 1 is granted first and then process 2 is granted after. Given those two simple traces together with $\varphi_{{\sf CORE}}$, our algorithm generates the solution of Fig.~\ref{fig:nicesolution}-(right). Arguably, the solution is now simple and natural. 
\end{example}

\paragraph{{\bf Contributions}} First, we provide a synthesis algorithm {\sc SynthLearn} that, given an LTL specification $\varphi_{{\sf CORE}}$ and a finite set $E$ of prefixes of executions, returns a Mealy machine $\MM$ such that $\MM \models \varphi_{{\sf CORE}}$, i.e. $\MM$ realizes $\varphi_{{\sf CORE}}$, and $E \subseteq {\sf Prefix}(L(\MM))$, i.e. $\MM$ is compatible with the examples in $E$, if such a machine $\MM$  exists. It returns {\em unrealizable} otherwise. Additionally, we require {\sc SynthLearn} to {\em generalize} the decisions illustrated in $E$. This learnability requirement is usually formalized in automata learning with a {\em completeness criterium} that we adapt here as follows: for all specifications $\varphi_{{\sf CORE}}$, and for all Mealy machines $\MM$ such that $\MM\models \varphi_{{\sf CORE}}$, there is a small set of examples $E$ (polynomial in $|\MM|$) such that $L(\text{{\sc SynthLearn}}(\varphi_{{\sf CORE}},E))=L(\MM)$. We prove this completeness result in Theorem~\ref{thm:mealycompleteness} for safety specifications and extend it to $\omega$-regular and LTL specifications in Section~\ref{sec:omega-reg}, by reduction to safety. 

Second, we prove that the worst-case execution time of {\sc
  SynthLearn} is {\sc 2ExpTime} (Theorem~\ref{thm:complexityLTL}), and
this is worst-case optimal as the plain LTL synthesis problem (when
$E=\emptyset$) is already known to be {\sc
  2ExpTime-Complete}~\cite{DBLP:conf/icalp/PnueliR89}. {\sc
  SynthLearn} first {\em generalizes} the examples provided by the
user while maintaining realizability of $\varphi_{{\sf CORE}}$. This
generalization leads to a Mealy machine with possibly missing
transitions (called a preMealy machine). Then, this preMealy machine
is extended into a (full) Mealy machine that realizes $\varphi_{{\sf
    CORE}}$ against all behaviors of the environment. During the
completion phase, {\sc SynthLearn} reuses as much as possible
decisions that have been generalized from the examples. The
generalization phase is essential to get the most out of the examples.
Running classical synthesis algorithms on $\varphi_{{\sf CORE}} \land
\varphi_E$, where $\varphi_E$ is an LTL encoding of $E$, often leads
to more complex machines that fail to generalize the decisions taken
along the examples in $E$. While the overall complexity of {\sc
  SynthLearn} is {\sc 2ExpTime} and optimal, we show that it is only
polynomial in the size of $E$ and in a well-chosen symbolic
representation a set of Mealy machines that realize $\varphi_{{\sf
    CORE}}$, see Theorem~\ref{thm:onlypoly}. This symbolic
representation takes the form of an antichain of functions and tends
to be compact in practice~\cite{DBLP:journals/fmsd/FiliotJR11}. It is
computed by default when {\sc Acacia-Bonzai} is solving the plain LTL
synthesis problem of $\varphi_{{\sf CORE}}$. So, generalizing examples
while maintaining realizability only comes at a marginal polynomial
cost. We have implemented our synthesis algorithm in a prototype,
which uses {\sc Acacia-Bonzai} to compute the symbolic antichain
representation. We report on the results we obtain on several examples.

\paragraph{{\bf Related works}}
Scenarios of executions have been advocated by
researchers in requirements engineering to elicite specifications, see
e.g.~\cite{DBLP:conf/sigsoft/DamasLL06,DBLP:journals/aai/DupontLDL08}
and references therein. 
In~\cite{DBLP:conf/tacas/RahaRFN22}, learning techniques are used to transform examples into LTL formulas that generalize them.  
Those methods are complementary to our work, as
they can be used to obtain the high level
specification $\varphi_{{\sf CORE}}$.

\tocheck{In non-vacuous synthesis~\cite{BCES17}, examples are added automatically to an LTL specification in order to force the synthesis procedure to generate solutions that are non-vacuous in the sense of~\cite{KV99}. The examples are generated directly from the syntax of the LTL specification and they cannot be proposed by the user. This makes our approach and this approach orthogonal and complementary. Indeed, we could use the examples generated automatically by the non-vacuous approach and ask the user to validate them as desirable or not. Our method is more flexible, it is semi-automatic and user centric: the user can provide any example he/she likes and so it offers more flexibility to drive the synthesis procedure to solutions that the user deems as interesting. Furthermore, our synthesis procedure is based on learning algorithms, while the algorithm in~\cite{BCES17} is based on constraint solving and it does not offer guarantees of generalization contrary to our algorithm (see Theorem~\ref{thm:mealycompleteness}).}

Supplementing the formal specification with additional user-provided
information is at the core of the  \emph{syntax-guided
synthesis} framework (SyGuS~\cite{DBLP:series/natosec/AlurBDF0JKMMRSSSSTU15}), implemented for instance in \emph{program by sketching}~\cite{DBLP:journals/sttt/Solar-Lezama13}: in SyGuS, the specification is a
logical formula and candidate programs are syntactically restricted by
a user-provided grammar, to limit and guide the search. 
The search is
done by using counter-example guided inductive synthesis techniques (CEGIS)
which rely on learning~\cite{DBLP:conf/asplos/Solar-LezamaTBSS06}. In contrast to our approach,
examples are not user-provided but automatically generated by
model-checking the candidate programs against the specification. The techniques are also orthogonal to ours: SyGuS targets programs syntactically defined by expressions over a decidable background theory, and heavily relies on SAT/SMT solvers. 
Using examples to synthesise programs
(\emph{programming by example}) has been
for instance explored in the context of string processing programs for
spreadsheets, based on learning~\cite{DBLP:conf/popl/SinghG16}, and is
a current trend in AI (see for
example~\cite{DBLP:conf/aistats/NatarajanSDJG19} and the citations
therein). However this approach only relies on
examples and not on logical specifications.

\cite{DBLP:conf/hvc/AlurMRSTU14} explores the use of formal
specifications and scenarios to synthesize distributed
protocols. Their approach also follows two phases: first,
an incomplete machine is built from the scenarios and second, it is
turned into a complete one. But there are two important differences
with our work. First, their first phase does not rely on learning
techniques and does not try to generalize the provided examples. Second,
in their setting, all actions are controllable and there is no adversarial
environment, so they are solving a satisfiability problem and not a
realizability problem as in our case. Their problem is thus
computationally less demanding than the problem we solve: {\sc
  Pspace} versus {\sc 2ExpTime} for LTL specs.

The synthesis problem targeted in this paper extends the LTL synthesis problem. Modern solutions for this problem use automata constructions that avoid Safra's construction as first proposed in~\cite{DBLP:conf/focs/KupfermanV05}, and simplified in~\cite{DBLP:conf/atva/ScheweF07a,DBLP:conf/cav/FiliotJR09}, and more recently in~\cite{DBLP:conf/tacas/EsparzaKRS17}. Efficient implementations of Safraless constructions are available, see e.g.~\cite{DBLP:conf/cav/BohyBFJR12,DBLP:conf/cav/FaymonvilleFT17,DBLP:conf/cav/MeyerSL18,DBLP:journals/corr/abs-2206-11366}. Several previous works have 
proposed alternative approaches to improve on the quality of solutions that synthesis algorithms can offer. A popular research direction, orthogonal and complementary to the one proposed here, is to extend the formal specification with quantitative aspects, see e.g.~\cite{DBLP:conf/cav/BloemCHJ09,DBLP:journals/iandc/BruyereFRR17,DBLP:conf/csr/Kupferman16,DBLP:conf/concur/AlmagorKV16}, and only synthesize solutions that are optimal. 

The first phase of our algorithm is inspired by automata learning techniques based on state merging algorithms like RPNI~\cite{DBLP:series/synthesis/2015Heinz,DBLP:journals/sttt/GiantamidisTB21}. Those learning algorithms need to be modified carefully to generate partial solutions that preserve realizability of $\varphi_{{\sf CORE}}$. Proving completeness as well as termination of the completion phase in this context requires particular care.

\section{Preliminaries on the reactive synthesis problem}\label{sec:prelims}

\paragraph{Words, languages and automata} An alphabet is a finite set of
symbols.   A \emph{word} $u$
(resp. $\omega$-word) over an alphabet $\Sigma$ is a finite
(resp. infinite sequence) of symbols from $\Sigma$. We write
$\epsilon$ for the empty word, and denote by
$|u|\in\mathbb{N}\cup \{\infty\}$ the length of $u$. In particular,
$|\epsilon|=0$. For $1\leq i\leq j\leq |u|$, we let $u[i{:}j]$ be the
infix of $u$ from position $i$ to position $j$, both included, and
write $u[i]$ instead of $u[i{:}i]$. The set of finite (resp. $\omega$-)
words over $\Sigma$ is denoted by $\Sigma^*$ (resp. $\Sigma^\omega$). We let $\Sigma^\infty
=\Sigma^*\cup \Sigma^\omega$. Given two words $u\in\Sigma^*$ and
$v\in\Sigma^\infty$, $u$ is a \emph{prefix} of $v$,
written $u\preceq v$,  if $v = uw$ for some $w\in\Sigma^\infty$. The
set of prefixes of $v$ is denoted by $\prefs(v)$. Finite words are linearly
ordered according to the length-lexicographic order $\preceq_{ll}$,
assuming a linear order $<_{\Sigma}$ over $\Sigma$: $u\preceq_{ll} v$
if $|u|<|v|$ or $|u|=|v|$ and $u=p\sigma_1u'$, $v=p\sigma_2 v'$ for
some $p,u',v'\in\Sigma^*$ and some $\sigma_1<_{\Sigma} \sigma_2$. In
this paper, whenever we refer to the order $\preceq_{ll}$ for words
over some alphabet, we implicitly assume the existence of an arbitrary
linear order over that alphabet. A \emph{language} (resp. $\omega$-language) over an alphabet $\Sigma$
is a subset $L\subseteq \Sigma^*$ (resp. $L\subseteq \Sigma^\omega$).

In this paper, we fix two alphabets $\inputs$ and $\outputs$
whose elements are called inputs and outputs respectively. Given a
word $u\in (\inputs\outputs)^\infty$, we let $\inproj(u) \in
\inputs^\infty$ be the word obtained by erasing all $\outputs$-symbols
from $u$. We define $\outproj(u)$ similarly and naturally extend both
functions to languages. 

\paragraph{Automata over $\omega$-words} A {\em parity automaton}
 is a tuple $\atm=(Q,Q_{{\sf init}},\Sigma,\delta,d)$ where $Q$ is a
finite non empty set of states, $Q_{{\sf init}} \subseteq Q$ is a set
of initial states, $\Sigma$ is a finite non empty alphabet, $\delta :
Q \times \Sigma \rightarrow 2^Q \setminus \{\emptyset\}$ is the transition
function, and $d : Q \rightarrow \mathbb{N}$ is a parity function. The
automaton $\atm$ is {\em deterministic} when $|Q_{{\sf init}}|=1$ and
$|\delta(q,\sigma)|=1$ for all $q\in Q$. The transition function is
extended naturally into a function $\transS^* : Q\times \Sigma^*\rightarrow 2^Q
\setminus \{\emptyset\}$ inductively as follows:
$\transS^*(q,\epsilon)=\{q\}$ for all $q\in Q$ and for all
$(u,\sigma)\in\Sigma^*\times \Sigma$, $\transS^*(q,u\sigma) =
\bigcup_{q'\in \transS^*(q,u)}\delta(q',\sigma)$.

A run of $\atm$ on an $\omega$-word $w=w_0 w_1 \dots$ is an
infinite sequence of states $r=q_0 q_1 \dots$ such that
$q_0 \in Q_{{\sf init}}$, and for all $i \in \mathbb{N}$, $q_{i+1} \in
\delta(q_i,w_i)$. The run $r$ is said to be {\em accepting} if the
minimal colour it visits infinitely often is even, i.e. $\liminf
(d(q_i))_{i\geq 0}$ is {\it even}. We say that $\atm$ is a
{\em B\"uchi automaton} when ${\sf dom}(d)=\{0,1\}$ ($1$-coloured states are
called accepting states), a {\em co-B\"uchi automaton} when ${\sf dom}(d)=\{1,2\}$,
a {\em safety automaton} if it is a B\"uchi automaton such that the set
of $1$-coloured states, called \emph{unsafe states} and denoted
$Q_{\sf usf}$, forms a \emph{trap}: for all $q\in Q_{\sf usf}$, for all
$\sigma\in\Sigma$, $\delta(q,\sigma)\subseteq Q_{\sf usf}$, and 
a \emph{reachability automaton} if it is $\{0,1\}$-coloured and the
set of $0$-coloured states forms a trap.

Finally, we consider the existential and universal interpretations of
nondeterminism, leading to two different notions of $\omega$-word
languages: under the {\em existential (resp. universal) interpretation}, a word $w \in
\Sigma^{\omega}$ is in the language of $\atm$, if there exists a
run $r$ on $w$ such that $r$ is accepting (resp. for all runs $r$ on
$w$, $r$ is accepting). We denote the two languages defined by these
two interpretations $L^{\exists}(\atm)$ and $L^{\forall}(\atm)$ respectively.
Note that if $\atm$ is deterministic, then the existential and
universal interpretations agree, and we write $L(\atm)$ for
$L^\forall(\atm) = L^\exists(\atm)$. Sometimes, for a deterministic automaton $\atm$, we change the initial state to a state $q \in Q$, and note $\atm[q]$ for the deterministic automaton $\atm$ where the initial state is fixed to the singleton $\{q\}$.

For a {\em co-B\"uchi automaton}, we also define a strengthening of
the acceptance condition, called $K$-co-B\"uchi, which requires,
for $K \in \mathbb{N}$, that a run visits at most $K$ times a state
labelled with $1$ to be accepting. Formally, a run $r=q_0 q_1 \dots
q_n \dots$ is {\em accepting} for the $K$-co-B\"uchi acceptance
condition if $|\{ i \geq 0 \mid d(q_i))=1 \}| \leq K$. The language
defined by $\atm$ for the $K$-co-B\"uchi acceptance condition and
universal interpretation is denoted by $L^{\forall}_K(\atm)$. Note
that this language is a {\em safety} language because if a prefix of a
word $p \in \Sigma^*$ is such that $\atm$ has a run prefix on $p$ that
visits more than $K$ times a states labelled with color $1$, then all
possible extensions $w \in \Sigma^{\omega}$ of $p$ are rejected by
$\atm$. 



\paragraph{(Pre)Mealy machines}  Given a (partial) function $f$ from a set $X$ to a set $Y$, we denote by
$\textsf{dom}(f)$ its domain, i.e. the of elements $x\in X$ such that
$f(x)$ is defined. A \emph{preMealy machine} $\MM$ on an input alphabet $\inputs$ and output
alphabet $\outputs$ is a triple $(M,m_{{\sf init}},\trans)$ such that $M$ is a
non-empty set of states, $m_{{\sf init}} \in M$ is the initial state, $\trans : Q
\times \inputs \rightarrow \outputs \times M$ is a partial function. A
pair $(m,\inp)$ is a hole in $\MM$ if $(m,\inp) \not\in {\sf dom}(\trans)$.
A \emph{Mealy machine} is a preMealy machine such that $\trans$ is total,
i.e., ${\sf dom}(\trans)=M \times \inputs$.

We define two semantics of a preMealy machine $\MM = (M,m_{{\sf
    init}},\trans)$ in terms of the languages of finite and infinite
words over $\inputs\cup\outputs$ they define. First, 
    we define two (possibly partial functions) $\transS_\MM : M\times
    \inputs\rightarrow M$ and $\transO_\MM : M\times \inputs\rightarrow
    \outputs$ such that $\trans(m,\inp) =
    (\transS_\MM(m,\inp),\transO_\MM(m,\inp))$ for all $(m,\inp)\in
    M\times\inputs$ if $\trans(m,\inp)$ is defined. We naturally extend these two functions to any
    sequence of inputs $u\in\inputs^+$, denoted $\transS_\MM^*$ and
    $\transO_\MM^*$. In particular, for $u\in\inputs^+$,
    $\transS_\MM^*(m,u)$ is the state reached by $\MM$ when reading
    $u$ from $m$, while $\transO_\MM^*(m,u)$ is the last output in $\outputs$
    produced by $\MM$ when reading $u$. The subcript $\MM$ is ommitted
    when $\MM$ is clear from the context. Now, the language $L(\MM)$
    of finite words in $(\inputs\outputs)^*$ accepted by $\MM$ is
    defined as $L(\MM)=\{ \inp_1\out_1\dots \inp_n\out_n\mid \forall 1\leq
    j\leq n,\ \transS_\MM^*(m_{\sf init},\inp_1\dots\inp_j) \text{ is defined and }
    \out_j = \transO_\MM^*(m_{\sf init},\inp_1\dots\inp_j)\}$. The language
    $L_\omega(\MM)$ of
    infinite words accepted by $\MM$ is the topological closure of
    $L(\MM)$: $L_\omega(\MM) = \{ w\in
    (\inputs\outputs)^\omega\mid \prefs(w)\cap
    (\inputs\outputs)^*\subseteq  L(\MM)\}$.

\paragraph{The reactive synthesis problem} A \emph{specification} is a language $\spec \subseteq
(\inputs \outputs)^\omega$. The \emph{reactive synthesis problem} (or
just synthesis problem for short) is the problem of constructing,
given a specification $\spec$, a Mealy machine $\MM$ such that
$L_\omega(\MM)\subseteq \spec$ if it exists. Such a machine $\MM$ is said to
\emph{realize} the specification $\spec$, also written
$\MM\models \spec$. We also say that $\spec$ is
\emph{realizable} if some Mealy machine $\MM$ realizes it. The induced decision problem is called the \emph{realizability problem}. 

It is well-known that if $\spec$ is
$\omega$-regular \tocheck{(recognizable by a parity automaton \cite{T91})} the realizability problem is decidable~\cite{DBLP:conf/icalp/AbadiLW89} and moreover, a Mealy machine realizing the specification can be effectively constructed.
The realizability problem is {\sc 2ExpTime-Complete} if $\spec$ is given as an LTL formula~\cite{DBLP:conf/icalp/PnueliR89} and {\sc ExpTime-Complete} if $\spec$ is given as a universal coB\"uchi automaton.

\begin{thm}[\cite{DBLP:reference/mc/BloemCJ18}]\label{thm:folklore}
    The realizability problem for a specification $\spec$ given as a universal
    coB\"uchi automaton $\atm$ is {\sc ExpTime-Complete}. Moreover, if $\spec$
    is realizable and $\atm$ has $n$ states, then $\spec$ is realizable by a Mealy machine with $2^{O(n
      log_2 n)}$ states. 
\end{thm}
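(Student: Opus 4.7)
The plan is to follow the standard Safraless route: strengthen the universal coBüchi condition into a $K$-coBüchi (hence safety) condition for a well-chosen bound $K$, build a finite safety game via a subset-with-counters construction, solve it in time polynomial in the game's size, and read off both the complexity bound and a Mealy machine of the claimed size. Hardness will be handled separately by a reduction from a known \textsc{ExpTime}-hard problem.

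First I would show the upper bound. Given $\atm$ with $n$ states, note that if $\spec = L^\forall(\atm)$ is realizable at all, then it is realized by a Mealy machine $\MM$ such that every run of $\atm$ on every word of $L_\omega(\MM)$ visits the $1$-coloured (rejecting) states at most $K$ times, for some $K$ that depends only on $n$; the standard bound is $K \in O(n^2)$ (this is the Safraless bound of Kupferman--Vardi, refined by Schewe--Finkbeiner, cf.\ \cite{DBLP:conf/focs/KupfermanV05,DBLP:conf/atva/ScheweF07a,DBLP:conf/cav/FiliotJR09}). Thus realizability of $\spec$ coincides with realizability of the safety language $L^\forall_K(\atm)$. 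Next I would encode realizability of $L^\forall_K(\atm)$ as a two-player safety game $G$ whose positions are functions $f : Q \to \{-1,0,1,\dots,K\}$ assigning to each state of $\atm$ the maximal number of $1$-coloured states seen by any run reaching it (with $-1$ meaning unreachable). Player~1 (system) picks an output, Player~0 (environment) picks an input, and positions where some $f(q) > K$ are losing. The size of $G$ is $(K+2)^n = 2^{O(n \log n)}$.

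Then I would solve $G$ by the classical attractor computation for safety games, which runs in time polynomial in $|G|$, hence in $2^{O(n\log n)}$, giving the \textsc{ExpTime} upper bound. Because $G$ is a safety game, if Player~1 wins then she wins with a positional (memoryless) strategy $\sigma : V_1 \to V$, from which I extract a Mealy machine by taking as states the positions of $G$ reachable under $\sigma$ from the initial position. Each such state is a counting function $f$, so the state space has size at most $(K+2)^n = 2^{O(n \log n)}$, which gives the claimed bound. Conversely, soundness follows from the fact that any Mealy machine consistent with $\sigma$ produces only $\omega$-words on which every run of $\atm$ visits rejecting states at most $K$ times, hence satisfies the universal coBüchi condition.

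For the lower bound, I would cite (or sketch) a reduction from a canonical \textsc{ExpTime}-hard problem, for instance the acceptance problem for alternating linear-space Turing machines, or equivalently from safety games on succinctly represented graphs; such reductions are standard in the synthesis literature and already appear in \cite{DBLP:reference/mc/BloemCJ18}. The main technical obstacle is pinning down the bound $K$ precisely so that the safety game is both correct (i.e.\ its winner coincides with the winner of the original coBüchi game) and of size $2^{O(n\log n)}$ rather than $2^{O(n^2)}$; this is exactly the content of the Safraless bound, and I would invoke it as a black box rather than reprove it.
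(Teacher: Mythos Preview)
The paper does not prove this theorem; it is stated with a citation to~\cite{DBLP:reference/mc/BloemCJ18} and used as a black box throughout. Your Safraless outline is indeed one standard route---and the paper develops exactly the $K$-coB\"uchi and counting-function machinery in Section~\ref{sec:omega-reg} for its own purposes---but your instantiation has a real gap: the claim that $K\in O(n^2)$ suffices is incorrect. The reduction from coB\"uchi to $K$-coB\"uchi is complete only for $K=n\cdot\beta$, where $\beta$ bounds the size of \emph{some} realizing Mealy machine (this is precisely the lemma from~\cite{DBLP:conf/cav/FiliotJR09} that the paper quotes inside the proof of Theorem~\ref{thm:transferpreal}); since $\beta$ may be as large as $2^{O(n\log n)}$, the correct $K$ is exponential in $n$, not polynomial. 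Note also a circularity: the only way you know $\beta\le 2^{O(n\log n)}$ a priori is via a Safra-style argument, which is exactly what you are trying to avoid.

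With the correct $K$, your safety game has $(K{+}2)^n=2^{O(n^2\log n)}$ positions. That is still enough for the \textsc{ExpTime} upper bound on the decision problem, but the positional strategy you extract from it yields a Mealy machine with $2^{O(n^2\log n)}$ states, not the $2^{O(n\log n)}$ the theorem asserts. The tight size bound comes from a different construction: determinize $\atm$ into a deterministic parity automaton with $2^{O(n\log n)}$ states and $O(n)$ priorities (Safra/Piterman) and invoke memoryless determinacy of the resulting parity game. In short, your route can be repaired to give the complexity class, but it does not give the second clause of the theorem; the two clauses really come from two different arguments.
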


We generalize this result to the following
realizability problem  which we describe first informally. Given a specification $\spec$ and a preMealy
machine $\PP$, the goal is to decide whether $\PP$ can be completed
into a Mealy machine which realizes $\spec$. We now define this problem
formally. Given two preMealy machines $\PP_1,\PP_2$, we write $\PP_1\preceq
\PP_2$ if $\PP_1$ is a subgraph of $\PP_2$ in the following sense: there
exists an injective mapping $\Phi$ from the states of $\PP_1$ to the
states of $\PP_2$ which preserves the initial state ($s_0$ is the
initial state of $\PP_1$ iff $\Phi(s_0)$ is the initial state of $\PP_2$)
and the transitions ($\Delta_{\PP_1}(p,\inp)=(\out,q)$ iff
$\Delta_{\PP_2}(\Phi(p),\inp)=(\out,\Phi(q))$. As a consequence,
$L(\PP_1)\subseteq L(\PP_2)$ and $L_\omega(\PP_1)\subseteq L_\omega(\PP_2)$. Given a preMealy machine $\PP$, we say that a specification \emph{$\mathcal{S}$ is $\PP$-realizable} if there exists a Mealy machine $\MM$ such that
$\PP\preceq \MM$ and $\MM$ realizes $\spec$. Note that if $\PP$
is a (complete) Mealy machine, $\spec$ is $\PP$-realizable
iff $\PP$ realizes $\spec$.

\begin{thm}
\label{thm:sizeSol}
    Given a universal co-B\"uchi automaton $\atm$ with $n$ states defining a
    specification $\spec = L^\forall(\atm)$ and a preMealy machine $\PP$
    with $m$ states and $n_h$ holes,
    deciding whether $\spec$ is $\PP$-realizable is \textsc{ExpTime}-hard and
    in \textsc{ExpTime} (in $n$ and polynomial in $m$). Moreover, if $\spec$ is $\PP$-realizable, it is
    $\PP$-realizable by a Mealy machine with $m+n_h2^{O(n log_2
      n)}$ states. Hardness holds even if $\PP$ has two states and $\atm$ is a deterministic reachability automaton.
\end{thm}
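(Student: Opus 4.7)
My approach follows the Safraless methodology to reduce $\PP$-realizability to a safety game, then extracts a Mealy machine of the claimed size from a positional strategy. For the lower bound, I would reduce from realizability of a universal co-B\"uchi specification, using the two holes of $\PP$ to force the completion to internally realize the harder spec.

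\textbf{Upper bound via a safety game.} The first step is to invoke the $K$-co-B\"uchi strengthening underlying Theorem~\ref{thm:folklore}: there exists $K = 2^{O(n\log n)}$ such that $\spec = L^\forall(\atm)$ is realizable iff $L^\forall_K(\atm)$ is realizable, and realizability is preserved under $\PP$ as well, since any Mealy machine extending $\PP$ is bounded-memory. The $K$-co-B\"uchi language is recognised by a deterministic safety automaton $\atm_K$ whose states are counting functions $Q \to \{0,\dots,K\}\cup\{\bot\}$, yielding $(K+2)^n = 2^{O(n\log n)}$ states. I then build a two-player safety game $G$ whose positions are pairs $(m,f)$ with $m$ a state of $\PP$ and $f$ a state of $\atm_K$. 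On non-hole pairs $(m,\inp)$, the game follows $\trans$ deterministically and updates $f$ via $\atm_K$; on a hole $(m,\inp)$, the system chooses both the output $\out \in \outputs$ and a target state $m'$ (either an existing $\PP$-state or a fresh ``completion'' state), and $f$ is updated on $\inp\out$. The environment chooses the next input. The system loses when $f$ reaches $\bot$. By construction, $G$ is won by the system iff $\spec$ is $\PP$-realizable.

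\textbf{Extracting the Mealy machine.} A positional winning strategy in $G$ gives a Mealy machine whose states are the reachable game positions. For positions that lie on the $\PP$-part of the arena I reuse the $m$ states of $\PP$ directly (the strategy is forced there). Each of the $n_h$ holes opens an independent sub-game which is essentially the standard synthesis game for $L^\forall_K(\atm)$ starting from whatever $\atm_K$-state has been reached on arrival at that hole; by the bound in Theorem~\ref{thm:folklore}, each such sub-game admits a Mealy winner with $2^{O(n\log n)}$ states. Summing across holes gives $m + n_h\cdot 2^{O(n\log n)}$ states. Solving the safety game itself is polynomial in the arena, i.e.\ in $m \cdot 2^{O(n\log n)}$, which is \textsc{ExpTime} in $n$ and polynomial in $m$.

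\textbf{Lower bound and main obstacle.} For \textsc{ExpTime}-hardness, I would reduce from realizability of a universal co-B\"uchi specification $\spec'$ with $n$ states (Theorem~\ref{thm:folklore}), building a deterministic reachability automaton $\atm$ and a two-state preMealy $\PP$ with states $s_0,s_1$. The idea is to use $\PP$ as a ``commit-then-verify'' gadget: a fresh marker input forces a transition $s_0\to s_1$ whose output must encode, via the completion chosen at the hole at $s_1$, a Mealy machine for $\spec'$; the reachability automaton $\atm$ then only reaches its accepting trap on traces that are \emph{consistent} with a correct run of $\spec'$'s universal co-B\"uchi automaton on the produced interaction. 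Because the completion at the unique hole of $\PP$ can be exponential in size (states are newly introduced), this encoding can simulate the ExpTime-hard co-B\"uchi synthesis problem. The main technical obstacle is precisely this encoding: proving that a reachability condition plus a two-state $\PP$ suffice to force the completion to embed a co-B\"uchi acceptance check, despite reachability synthesis alone being in \textsc{PTime}. The delicate point will be choosing the alphabet and the reachability trap so that any $\PP$-completion reaching the target must indeed exhibit, on every environment branch, a witness of the co-B\"uchi run's boundedness.
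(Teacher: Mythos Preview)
Your upper bound has a soundness gap. In the game $G$ you define, positions carry both the $\PP$-state $m$ and the counting function $f$, and a positional strategy may play differently at $(m,f_1)$ and $(m,f_2)$ when the same $\PP$-state $m$ is reached via different prefixes. But $\PP$-realizability demands a Mealy machine $\MM$ with $\PP\preceq\MM$: there must be an injective map from $\PP$-states to $\MM$-states preserving transitions, so the completion at a hole $(p,\inp)$ must be \emph{the same} regardless of which prefix in $\textsf{Left}_p$ was taken. A concrete counterexample: let $\PP$ have $m_0\xrightarrow{a\mid 0}m_1$ and $m_0\xrightarrow{b\mid 0}m_1$ with $m_1$ a hole on every input, and let $\spec$ require the second output to be $0$ if the first input was $a$ and $1$ if it was $b$. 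Your game is won (play differently at $(m_1,f_a)$ and $(m_1,f_b)$), yet $\spec$ is not $\PP$-realizable. Consequently the equivalence ``$G$ is won by the system iff $\spec$ is $\PP$-realizable'' fails in the forward direction, and the machine you extract does not have $\PP$ as a subgraph. Your phrase ``starting from whatever $\atm_K$-state has been reached on arrival at that hole'' is exactly where the argument breaks: there is in general no single such state.

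The paper's upper bound avoids this by proving directly that $\spec$ is $\PP$-realizable iff $L_\omega(\PP)\subseteq\spec$ and, for each hole $h=(p,\inp)$, some $\out_h$ and a \emph{single} Mealy machine realize the intersection $\bigcap_{u\in\textsf{Left}_p}(u\inp\out_h)^{-1}\spec$. Because $\atm$ is universal, that intersection is recognized by $\atm$ with its initial states replaced by the (polynomially computable) set of $\atm$-states reachable along $\textsf{Left}_p\cdot\inp\out_h$; realizability of this $n$-state universal co-B\"uchi automaton is then decided by Theorem~\ref{thm:folklore}, yielding the $2^{O(n\log n)}$ per-hole bound and the total $m+n_h2^{O(n\log n)}$. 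If you want to salvage a game formulation, you must collapse all positions $(m,f)$ with the same $m$ into one, labelled by the least upper bound of the counting functions reachable at $m$; this is essentially what the paper does later in Lemma~\ref{lem:efficientpprealizability}.

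Your lower-bound sketch does not work either, and for the same structural reason. Reachability synthesis is in \textsc{PTime}, so hardness must come from the interaction with $\PP$; your two-state gadget as described (a single marker transition $s_0\to s_1$) does not force the completion to solve anything harder than a single reachability game. The paper instead reduces from intersection emptiness of $n$ deterministic top-down tree automata: $\PP$ sends \emph{all} of the $n$ distinct initial inputs $\inp_1,\dots,\inp_n$ to the \emph{same} hole state, so any completion must produce one tree accepted by all $n$ automata simultaneously. The \textsc{ExpTime}-hardness arises precisely from this enforced uniformity across contexts---the very constraint your upper-bound game failed to model.
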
 
\tocheck{Brought proof from Appendix to here.}
Before proving Theorem~\ref{thm:sizeSol}, let us note that the $\PP$-realizability problem generalizes the classical
realizability, as the latter is equivalent to the
$\PP_0$-realizability where $\PP_0$ is the preMealy machine composed
of single state (which is initial) without any transition. So, we
inherit the \textsc{ExpTime} lower bound of
Theorem~\ref{thm:folklore}. However, we prove that the
$\PP$-realizability problem is intrinsically harder: indeed, we
show that the \textsc{ExpTime} hardness holds even if $\PP$ is a fixed preMealy machine
and $\spec$ is given as a \emph{deterministic reachability} automaton. This is
in contrast to the classical realizability problem: deciding the
realizability of a specification given as a deterministic reachability
automaton is in
\textsc{PTime}~\cite{DBLP:journals/jacm/ChandraKS81}. Our synthesis
algorithm from specifications and examples extensively rely on
sucessive calls to a $\PP$-realizability checker, for various preMealy
machines $\PP$. However, we show in Sec~\ref{sec:omega-reg} that modulo
pre-computing, in worst-case exponential time, some symbolic (and in
practice compact) representation of some realizable configurations of
the specification automaton, all those calls can be done in
polynomial time in this representation.

\begin{proof} [Proof of \cref{thm:sizeSol}]
We first prove the upper-bound. Let $Q_\PP$ be the set of states of $\PP$, $\trans_\PP$ its transition function and $p_0$ its initial state. For any $p\in Q_\PP$, we define its \emph{left language} $\textsf{Left}_{p}$ as
    $$
    \textsf{Left}_{p} = \{ u\in (I.O)^*\mid \transS_\PP^*(p_0,u) = p \}
    $$
Then, $\PP$-realizability is characterized by the following property:
\begin{claim} 
$\spec$ is $\PP$-realizable iff,
        $L_\omega(\PP)\subseteq \spec$ and for every hole
    $h = (p,\inp)$ of $P$, there exists $\out_h\in \outputs$ and a Mealy
    machine $\MM_h$ such that for all $u\in
    \textsf{Left}_{p}$, $\MM_h$ realizes $(u\inp\out_h)^{-1}\spec$.\footnote{For an alphabet $\Sigma$, a set
      $A\subseteq \Sigma^\omega$ and $u\in \Sigma^*$, $u^{-1}A = \{
      v\in \Sigma^\omega\mid uv\in A\}$.}
\end{claim}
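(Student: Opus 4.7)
The plan is to prove both implications of the characterization by thinking of $\PP$-realizability as ``completion'' of $\PP$: to extend $\PP$ to a full Mealy machine realizing $\spec$, one must (i) keep every infinite behaviour already present in $\PP$ inside $\spec$, and (ii) independently, for each hole $(p,\inp)$, commit to a single output $\out_h$ and graft a sub-machine that handles all possible continuations. The delicacy lies in the fact that several input/output histories $u\in\textsf{Left}_p$ can lead to the same state $p$ in $\PP$, so the grafted sub-machine must realize the residual of $\spec$ after any of these histories simultaneously, which is exactly why the characterization quantifies universally over $\textsf{Left}_p$.

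For the forward direction, I will assume a Mealy machine $\MM$ with $\PP\preceq \MM$ via some injection $\Phi$ such that $L_\omega(\MM)\subseteq\spec$. The inclusion $L_\omega(\PP)\subseteq L_\omega(\MM)\subseteq\spec$ follows since $\Phi$ preserves transitions, so every run of $\PP$ is a run of $\MM$. For a hole $h=(p,\inp)$, determinism of $\MM$ together with the fact that $\Phi$ preserves the initial state and all transitions of $\PP$ implies that every $u\in\textsf{Left}_p$ satisfies $\transS_\MM^*(\Phi(p_0),u)=\Phi(p)$; hence $\MM$ produces the same output $\out_h:=\transO_\MM(\Phi(p),\inp)$ on $\inp$ after any such $u$ and reaches the same state $q_h:=\transS_\MM(\Phi(p),\inp)$. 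Taking $\MM_h:=\MM[q_h]$ (the Mealy machine $\MM$ with initial state shifted to $q_h$), a short verification shows that $L_\omega(\MM_h)\subseteq (u\inp\out_h)^{-1}\spec$ for every $u\in\textsf{Left}_p$, so $\MM_h$ realizes this residual.

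For the backward direction, the construction is to build $\MM$ from $\PP$ by grafting, at every hole $h=(p,\inp)$, a fresh disjoint copy of the witness $\MM_h$ with the new transition $\Delta_\MM(p,\inp)=(\out_h,m_h^{\sf init})$, where $m_h^{\sf init}$ is the initial state of $\MM_h$. This yields a (complete) Mealy machine with $\PP\preceq\MM$; I will bound its size by $m+n_h\cdot 2^{O(n\log n)}$ using Theorem~\ref{thm:folklore} applied to each residual. To check $L_\omega(\MM)\subseteq \spec$, I will split an arbitrary $w\in L_\omega(\MM)$ according to whether its run ever leaves $\PP$: if it never leaves, then $w\in L_\omega(\PP)\subseteq\spec$ by hypothesis; otherwise $w$ admits a unique factorization $w=u\,\inp\,\out_h\,v$ where $(p,\inp)$ is the first hole traversed, $u\in\textsf{Left}_p$, and $v\in L_\omega(\MM_h)$. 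The hypothesis on $\MM_h$ gives $v\in (u\inp\out_h)^{-1}\spec$, hence $w\in\spec$.

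The main obstacle, and the reason the characterization is stated with a universal quantification over $\textsf{Left}_p$ rather than a single representative, is precisely this grafting step: the same grafted machine $\MM_h$ is traversed after every history $u\in\textsf{Left}_p$, and each such history produces a different residual $(u\inp\out_h)^{-1}\spec$, so $\MM_h$ must realize the intersection of all of them. The forward direction shows that this uniform realizability comes for free from determinism of $\MM$; the backward direction is where we genuinely use it. Once these two facts are in place, the equivalence follows, and the size bound on $\MM$ drops out of the construction together with Theorem~\ref{thm:folklore}.
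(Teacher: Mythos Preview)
Your proposal is correct and follows essentially the same approach as the paper: for the forward direction you shift the initial state of $\MM$ to the successor of the hole, and for the backward direction you graft disjoint copies of the $\MM_h$ at each hole and split on whether the run ever leaves $\PP$. The minor differences (direct case split instead of contradiction, explicit use of the injection $\Phi$, and the size-bound aside which properly belongs to the surrounding theorem rather than to this claim) are purely presentational.
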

\begin{proof}[Proof of claim] For the 'if' direction, we prove
        that $\PP$ can be extended
    into a Mealy machine $\MM$ which $\PP$-realizes $\spec$ as
    follows: $\MM$ consists of $\PP$ taken in disjoint union, for all
    holes $h$ of $\PP$, with the Mealy machine $\MM_h$, extended with
    the transition $\trans_\MM(h) = (\out_h,\textsf{init}_{h})$ where
    $\textsf{init}_{h}$ is the initial state of $\MM_h$. Clearly,
    $\PP$ is a subgraph of $\MM$. We prove that $\MM$ realizes
    $\spec$. Let $w\in L_\omega(\MM)$. Suppose that $w\not\in \spec$ and
    let us derive a contradiction. Since $L_\omega(\PP)\subseteq
    L^\forall(\atm)$, $w\not\in L_\omega(\PP)$. It implies that the execution
    of $\MM$ on $w$ necessarily visits a hole $h = (p,\inp)$ of $\PP$. So, $w$ can be
    decomposed as $w = u\inp \out_h v$ where $u$ is the longest prefix
    of $w$ such that $u\in \textsf{Left}_p$. Since $w\not\in \spec$, we
    get that $v\not\in (u\inp\out_h)^{-1}\spec$. By definition of $\MM$,
    we have $v\in L_\omega(\MM_h)$, so $\MM_h$ does not realize 
    $(u\inp\out_h)^{-1}\spec$, which is a contradiction.

    Conversely, suppose that $\mathcal{S}$ is $\PP$-realizable by some Mealy machine
    $\MM$. Since $L_\omega(\PP)\subseteq L_\omega(\MM)$ and
    $L_\omega(\MM)\subseteq \spec$, we get $L_\omega(\PP)\subseteq
    \spec$. Now, consider a hole $h = (p,\inp)$. Since $\PP$ is a subgraph of $\MM$, $p$ is a state
    of $\MM$ and since $\trans_\MM$ is total,
    there exists $\out_h\in \outputs$ such that $\trans_\MM(h) =
    (\out_h, p')$ for some state $p'$ of $\MM$. Consider the machine
    $\MM_{p'}$ which is identical to $\MM$ except that its initial
    state is $p'$: $\MM_{p'}$ is a Mealy machine which
    realizes $(u\inp\out_h)^{-1}\spec$ for all $u\in
    \textsf{Left}_p$. Indeed, let $v\in L_\omega(\MM_{p'})$. By
    definition of $\MM_{p'}$, we have $u\inp\out_h v\in
    L_\omega(\MM)\subseteq \spec$. Hence, $v\in (u\inp\out_h)^{-1}\spec$. 
    $\hfill$ .
    \end{proof}

It remains to show that the characterization of the claim can be decided in
    \textsc{ExpTime}. First, deciding whether $L_\omega(\PP)\subseteq
    L^\forall(\atm) = \spec$ is a standard automata inclusion problem. Indeed,
    $\PP$ is can be viewed as a deterministic B\"uchi automaton all
    states of which are accepting, and $\atm$ is a universal
    co-B\"uchi automaton, which can be complemented in linear-time
    into a non-deterministic B\"uchi automaton $\mathcal{B}$. Then, it
    suffices to test whether $L_\omega(\PP)\cap L^\exists(\mathcal{B}) =
    \varnothing$. This is doable in \textsc{PTime} in the size of both
    machines. So, testing whether $L_\omega(\PP)\subseteq L^\forall(\atm) = \spec$
    can be done in \textsc{PTime}.

Now, we want to decide the second part of the characterization.  Note that given a hole $h = (p,\inp)$ and
    $\out_h\in \outputs$, there exists a Mealy machine $\MM_h$ such
    that for all $u\in\textsf{Left}_p$, $\MM_h$ realizes
    $(u\inp\out_h)^{-1}\spec$, iff the specification $\bigcap_{u\in
      \textsf{Left}_p} (u\inp\out_h)^{-1}\spec$ is realizable. Given $h$
    and $\out_h$, we construct in linear-time a universal co-B\"uchi
    automaton recognizing
    $\bigcap_{u\in \textsf{Left}_p} (u\inp\out_h)^{-1}\spec$. First, we compute the set of states 
$$
R^{\atm,\PP}_p =  \{ q\in Q_\atm\mid \exists u\in (I.O)^*, \transS_\PP^*(p_0,u) = p\wedge  \transS_\atm^*(q_0,u\inp\out_h)= q\}
$$
This can be done in ptime. Then, we define the universal co-B\"uchi
automaton denoted $\atm_p$ which is exactly $\atm$ where the set of initial
states is set to $R^{\atm,\PP}_p$. We have $L^\forall(\atm_p) = \bigcap_{u\in
  \textsf{Left}_p} (u\inp\out_h)^{-1}\spec$, and then we use
Theorem~\ref{thm:folklore} to decide, in \textsc{ExpTime} in the size
of $\atm_p$, which is linear in the size of $\atm$, whether
$L^\forall(\atm[p])$ is realizable.

If $\spec$ is $\PP$-realizable, then it is $\PP$-realizable by the machine
$\MM$ as constructed in the proof of the claim. For each hole $h = (p,\inp)$ of
$\PP$, by Theorem~\ref{thm:folklore}, we can bound the size of the
machine $\MM_h$ by $2^{O(n log_2 n)}$ where $n$ is the number of
states of $\atm_p$, which is exactly the number of states of
$\atm$. So, if $\PP$ has $n_h$ holes, $\spec$ is $\PP$-realizable by a
Mealy machine with $m + n_h 2^{O(n log_2 n)}$ states.

For the lower bound, we reduce the problem of deciding whether the
intersection of $n$ languages of finite trees is non-empty, when those
languages are defined by deterministic top-down tree automata. This
problem is known to be \textsc{ExpTime}-c~\cite{tata}. This allows us
to show the lower bound for $\PP$-realizability even for
specifications given by deterministic reachability automata. This is
in contrast to plain realizability, which is solvable in
\textsc{PTime} for this class of
specifications~\cite{DBLP:reference/mc/2018}. Intuitively, high-level reason
why $\PP$-realizability is harder than realizability is because $\PP$
imposes strong constraints on the solution. In particular, it enforces
that the system which $\PP$-realizes $\spec$ behaves the same after
any prefix which reaches the same state of $\PP$. This is why in the
\textsc{ExpTime} solution above one needs to check realizability of
intersection of specifications of the form $\bigcap_{u\in
  \textsf{Left}_p} (u\inp\out_h)^{-1}\spec$, which is a harder problem
than trying to realize monolithic specifications.

We now give the detailed proof to obtain the lower-bound. It reduces the following \textsc{ExpTime}-c
problem~\cite{tata}: given $n$ deterministic top-down tree automata $(\mathcal
T_i)_{i=1}^n$, decide whether $\bigcap_{i=1}^n L(\mathcal T_i)\neq
\varnothing$. The main idea is already captured by the restricted problem
where the $\mathcal T_i$ are DFA, known to be \textsc{PSpace}-c, so we
first expose that case. Let $(\mathcal D_i = (Q_i,in_i,F_i,\delta_i))_{i=1}^n$ be $n$ DFA over
some alphabet $\Sigma$. We let $\inputs = \{ \inp_1,\dots,\inp_n\}$ and $\outputs =
\Sigma\cup \{ \textsf{skip},\textsf{exit}\}$. For all
$j\in\{1,\dots,n\}$, we let $\inputs\otimes L(\mathcal{D}_j)$ the set
of words of the form $\lambda_1\sigma_1\lambda_2\sigma_2\dots
\lambda_k\sigma_k\in (\inputs\outputs)^*$
such that $\sigma_1\dots \sigma_k\in  L(\mathcal{D}_j)$. Consider the following
specification:
$$
\spec = \bigcup_{j=1}^n \{
\inp_j.\textsf{skip}.u.\inp.\textsf{exit}.x\mid u\in\inputs\otimes L(\mathcal
D_j),\inp\in \inputs, x\in (\inputs\outputs)^\omega\}
$$

We also define the following $2$-states preMealy machine $\PP$: from its
initial state $m_0$, whenever it reads $\inp_j$ for any $j=1,\dots,n$,
it outputs $\textsf{skip}$ and move to its second state $m$, which is a
hole. 

We prove that:

\begin{enumerate}
  \item $\spec$ is recognizable by a deterministic reachability automaton
    $\atm_\spec$ of polynomial size
  \item $\spec$ is $\PP$-realizable iff $\bigcap_{i=1}^n L(\mathcal
    D_i)\neq \varnothing$.
\end{enumerate}

First, note that $\spec$ is recognizable by a deterministic reachability
automaton $\atm_\spec$ of polynomial size. Informally, each automaton
$\mathcal{D}_i$ is modified in such a way that any input symbol from
$\inputs$ can be
read in between two output letters, so that it recognizes
$\inputs\otimes L(\mathcal{D}_i)$. Let us write $\inputs\otimes
\mathcal{D}_i$ the modified automaton, and assume all the 
automata $\inputs\otimes \mathcal{D}_i$ have disjoint sets of states. 
From its single initial state, $\atm_\spec$ can
read for all $j=1,\dots,n$ the sequence of two symbols
$\inp_j.\textsf{skip}$ and go the initial state of $\inputs\otimes
\mathcal{D}_j$. Additionally, we add a single state $q_{reach}$, the unique state
to be accepting (in the sense that it has colour $0$ while any other
state has colour $1$). From $q_{reach}$, any sequence is accepting (it
is a trap). Finally, for all accepting states $q_f$ of $\mathcal{D}_j$, and
all inputs $\inp\in \inputs$, we make $\atm_\spec$ transition to
$q_{reach}$ when reading $\inp.\textsf{exit}$ from state $q_f$.

For the second assertion, the main intuitive idea behind its proof is
that $\PP$ transitions to the same state $m$ for any possible initial input
while $\atm_\spec$ transitions to different states. Therefore, $\PP$
enforces that whatever the initial input $i_j$, the same
strategy should be played afterwards, while on the other hand, the
definition of $\spec$ is dependent on the initial input. Formally, suppose
that $\MM$ is a Mealy machine $\PP$-realizing $\spec$. Then, since $\PP$
is a subgraph of $\MM$, the language of $\MM$ is necessarily of the
form 
$$
L_\omega(\MM) = \inputs.\textsf{skip}.L' \hfill\qquad\qquad (1)
$$
for some $L'$ such that $\inputs.\textsf{skip}.L'\subseteq
\spec$. Let $w\in L_\omega(\MM)$. It is necessarily of the form
$w = \inp_j.\textsf{skip}.u.\inp.\textsf{exit}.x$ for some
$j=1,\dots,n$, $u\in (\inputs\otimes L(\mathcal D_j)$,
$\inp\in\inputs$ and $x\in (\inputs\outputs)^\omega$. From $(1)$,
we get that for any other $j'\neq j$, $w' =
\inp_{j'}.\textsf{skip}.u.\inp.\textsf{exit}.x \in L_\omega(\MM)$ and
therefore, $u\in (\inputs\otimes L(\mathcal D_{j'})$. So,
$\bigcap_{i=1}^nL(\mathcal{D}_j)\neq \varnothing$.

The converse is proved similarly: if $v\in
\bigcap_{i=1}^nL(\mathcal{D}_j)$, then to $\PP$-realize $\spec$, it
suffices for the system to play $\textsf{skip}$, then $v$, and then
\textsf{exit} forever. This strategy can easily be described by a
Mealy machine extending $\PP$.

This shows \textsc{PSpace}-hardness. The extension of the latter
reduction to deterministic top-down tree automata (over finite binary $\Sigma$-trees)
is standard: the environment picks the direction $\{1,2\}$ in the tree while the
system picks the labels. We let $\inputs = \{
\inp_1,\dots,\inp_j\}\cup \{1,2\}$ and $\outputs = \Sigma\cup
\{\textsf{exit},\textsf{skip}\}$ as before. The specification $\spec$ is modified as follows:
$\spec = \bigcup_{j=1}^n \spec_j$ where each $\spec_j$ is  the set of words of the
form $\inp_j.\textsf{skip}.u.\inp.\textsf{exit}.x$ such that there
exists finite binary tree $t\in L(\mathcal T_j)$ such that $u$ is a
root-to-leaf branch of $t$, i.e. $u = d_1\sigma_1\dots d_k\sigma_k$
where each $d_i\in\{1,2\}$ is a direction, and each $\sigma_i$ is the
label of the node of $t$ identified by the root-to-node path $d_1\dots
d_i$. The preMealy machine $\PP$ is the same as before, and it is
easily seen that the new specification $\spec$ is definable by a
deterministic reachability automaton of polynomial size: this is due
to the fact that the tree automata are deterministic top-down, and the
path languages of deterministic top-down tree automata are regular,
recognizable by DFA of polynomial size~\cite{tata}.

Let us sketch the correctness of the construction. If $t\in
\bigcap_{i=1}^n L(\TT_i)$, then $\PP$ can be extended into a full
Mealy machine which after the state $m$ exactly mimics the structure of $t$:
states are paths in $t$ and when getting a new direction as input, it
outputs the label of $t$ reached following that direction. If instead,
the current path is a leaf of $t$, then the Mealy machine, whatever it
receives as input in the future, outputs $\textsf{exit}$
foreover. This machine is guaranteed to realize the specification,
because whatever the initial input, all the branches of the tree
induced by the choices of the environment are accepted by all the tree
automata.

Conversely, if there is a Mealy machine $\MM$ extending $\PP$ and
realizing the specification, then whatever the initial input, it plays
the same strategy afterwards. It is then possible to reconstruct a
tree accepted by all tree automata using the
choices made by the environment (directions), which describe paths in
the tree, and the choices made by the system, which correspond to
the labels of nodes identified by those paths. Since \textsf{exit}
must eventually be output on all outcomes, the tree construct in such
a way is guaranteed to be finite.
\end{proof}

\section{Synthesis from safety specifications and examples}\label{sec:learningframework}
In this section, we present the learning framework we use to synthesise Mealy machines from examples, and safety specifications.
Its generalization to any $\omega$-regular specification is described in Section~\ref{sec:omega-reg} and solved by reduction
to safety specifications.
It is a two-phase algorithm that is informally described here:(1) it tries to generalize the examples as much as possible while maintaining realizability of the specification, and outputs a preMealy machine, (2) it completes the preMealy machine into a full Mealy machine.

\subsection{Phase 1: Generalizing the examples}\label{subsec:gen}
This phase exploits the examples by generalizing them as much as possible while maintaining realizability of  the specification. It outputs a preMealy machine which is consistent with the examples and realizes the specification, if it exists. It is an RPNI-like learning algorithm~\cite{DBLP:series/synthesis/2015Heinz,DBLP:journals/sttt/GiantamidisTB21} which includes specific tests to maintain realizability of the specification.

\tocheck{The first step of this phase involves building a tree-shaped preMealy machine whose accepted language is exactly the set of prefixes $\prefs(E)$ of the given set of examples $E$, called a \emph{prefix-tree acceptor} (PTA).
Formally, we define PTA as follows:}

\paragraph{Prefix Tree Acceptor} \tocheck{A set $E\subseteq
(\inputs\outputs)^*$ (not necessarily finite) is \emph{consistent} if for all
$e\in\prefs(E)\cap (\inputs\outputs)^*\inputs$, there exists a unique
output denoted
$\out_E(e)\in\outputs$ such that $e.\out_E(e)\in \prefs(E)$. When $E$ is
consistent and finite, we can canonically associate with $E$ a tree-shaped preMealy
machine denoted $\textsf{PTA}(E)$ such that $L(\textsf{PTA}(E)) =
\prefs(E)\cap (\inputs\outputs)^*$, as follows:
$$
\textsf{PTA}(E) = (\prefs(E)\cap(\inputs\outputs)^*, \epsilon,
(e,\inp)\mapsto (\out_E(e\inp), e\inp\out_E(e\inp)))
$$}
\begin{example}\label{ex:PTA}
      \tocheck{Let $\inputs = \{ \inp,\inp'\}$ and $\outputs = \{\out,\out'\}$
    and consider $E_0 = \{\inp'\out,\inp\out\inp\out\inp'\out'\}$. Then
    $E_0$ is consistent and  $\textsf{PTA}(E_0)$ is depicted on
    the left of Figure~\ref{fig:PTA}. For conciseness, we denote its states by
    $0,\dots,4$ where $0=\epsilon$, $1=\inp'\out$, $2=\inp\out$,
    $3=\inp\out\inp\out$ and $4=\inp\out\inp\out\inp'\out'$.}
\end{example}
\begin{figure}[t]
    \begin{tabular}{cc}
      \begin{minipage}{0.45\linewidth}
          \centering
    \begin{tikzpicture}[->,>=stealth',shorten >=1pt,auto,node distance=1.8cm,
                    thick,inner sep=0pt,minimum size=0pt]
  \tikzstyle{every state}=[fill=gray!30,text=black,inner sep=4pt,minimum size=12pt]

        \node[state,initial] (q0) {$0$};
        \node[state] (q1) at (1.5, -1) {$1$};
        \node[state] (q2) at (1.5, 1) {$2$};
        \node[state] (q3) at (3, 1) {$3$};
        \node[state] (q4) at (4.5, 1) {$4$};
        \path (q0) edge[left] node[xshift=1mm,yshift=-2mm]{$\inp'/\out$} (q1) ;
        \path  (q0) edge[right] node[xshift=-2mm,yshift=-3mm]{$\inp/\out$} (q2) ;
        \path (q2) edge[right] node[xshift=-2mm,yshift=2mm]{$\inp/\out$} (q3) ;
        \path (q3) edge[right] node[xshift=-3mm,yshift=2mm]{$\inp'/\out'$} (q4) ;
    \end{tikzpicture}
\end{minipage}
      &
    \begin{minipage}{0.45\linewidth}
    \begin{tikzpicture}[->,>=stealth',shorten >=1pt,auto,node distance=1.8cm,
                    thick,inner sep=0pt,minimum size=0pt]
  \tikzstyle{every state}=[fill=gray!30,text=black,inner sep=1pt,minimum size=26pt]

        \node[state, initial] (q0) {$0,1$};
        \node[state] (q1) at (3,0) {$2,3,4$};
        \draw
            (q0) edge[loop above] node[yshift=1mm]{$\inp'/\out$} (q0)
            (q0) edge[above] node{$\inp/\out$} (q1)
            (q1) edge[loop above] node[yshift=1mm]{$\inp'/\out'$} (q1)
            (q1) edge[loop below] node[yshift=-1mm]{$\inp/\out$} (q1);
    \end{tikzpicture}
\end{minipage}
\end{tabular}

    \caption{The preMealy machine
      $\textsf{PTA}(\{\inp'\out,\inp\out\inp\out\inp'\out'\})$ of
      Example~\ref{ex:PTA} and its quotient by the equivalence
      relation induced by the partition $\{\{0,1\},\{2,3,4\}\}$ as
      described in Example~\ref{ex:congruence}.\label{fig:PTA}}
\end{figure}

\tocheck{In the next step of this phase, the algorithm tries to merge as many as possible states of the PTA.} 
The strategy used to select a state to merge another given state with, is a parameter of the algorithm, and is called a \emph{merging strategy} $\sigma_G$. 
Formally, a \emph{merging} strategy $\sigma_G$ is defined over $4$-tuples $(\MM,m,E,X)$ where $\MM$ is a preMealy machine, $m$ is a state of $\MM$, $E$ is a set of examples and $X$ is subset of states of $\MM$ (the candidate states to merge $m$ with), and returns a state of $X$, i.e., $\sigma_G(\MM,m,E,X)\in X$. \tocheck{The formal definition is as follows:
\paragraph{State merging} We now define the classical state merging operation of RPNI adapted to Mealy machines. 
An equivalence relation $\sim$ over $M$ is called \emph{a congruence} for $\MM$ if for all $x\sim x'$ and $\inp\in\inputs$, if $\trans_\MM(x,\inp)$ and $\trans_\MM(x',\inp)$ are both defined, then $\transS_\MM(x,\inp)\sim \transS_\MM(x',\inp)$. 
It is \emph{Mealy-congruence} for $\MM$ if additionally, $\transO_\MM(x,\inp)=\transO_\MM(x',\inp)$.
When $\MM$ is clear from the context, we simply say congruence and Mealy-congruence.
If $\sim$ is an Mealy-congruence, then the following preMealy machine (called the quotient of $\MM$ by $\sim$) is a well-defined preMealy machine (it does not depend on the choice of representatives): $\MM/_{\sim} = (M/_{\sim}, [m_{\textsf{init}}], ([s],\inp)\mapsto (\transO(s,\inp), [\transS(s,\inp)]))$. In this  definition, $[s]$ denotes the class of $s$ by $\sim$, and we take a representative $s$ such that $\trans(s,\inp)$ is defined. If no such representative exists, the transition is undefined on $\inp$.}

The pseudo-code for Phase~1 is given by Algo~\ref{algo:gen}. 
\tocheck{We provide here a running example to better illustrate the working of algorithm.}
Initially, the algorithm tests whether the set of examples $E$ is consistent\footnote{$E$ is consistent if outputs uniquely depends on prefixes. 
Formally, it means for all prefixes $u\in\prefs(E)\cap (\inputs\outputs)^*\inputs$, there is a unique output $\out\in \outputs$ such that $u\out\in \prefs(E)$.} and if that is the case, whether $\textsf{PTA}(E)$ can be completed into a Mealy machine realizing the given specification $\spec$, thanks to Theorem~\ref{thm:sizeSol}.
\begin{figure}[t]
    \begin{tabular}{cc}
    \begin{minipage}{\linewidth}
    \centering
    \begin{tikzpicture}[->,>=stealth',shorten >=1pt,auto,node distance=1.8cm,thick,inner sep=0pt,minimum size=0pt]
        \tikzstyle{every state}=[fill=gray!30,text=black,inner sep=4pt,minimum size=12pt]

        \node[state,initial] (q0) {$0$};
        \node[state, below left of=q0] (q1) {$1$};
        \node[state, below right of=q0] (q2) {$2$};
        \node[state, below of=q1] (q3) {$3$};
        \node[state, below of=q2] (q4) {$4$};
        \node[state, below of=q3] (q5) {$5$};
        \path (q0) edge[left] node[xshift=-2mm] {$\neg r_1 \land \neg r_2/\neg g_1 \land \neg g_2$} (q1);
        \path  (q0) edge[right] node[xshift=2mm] {$r_1 \land r_2/ g_1 \land \neg g_2$} (q2);
        \path (q1) edge[left] node[xshift=-2mm] {$r_1 \land \neg r_2/g_1 \land \neg g_2$} (q3) ;
        \path (q2) edge[right] node[xshift=2mm] {$\neg r_1 \land \neg r_2/\neg g_1 \land g_2$} (q4);
        \path (q3) edge[left] node[xshift=-2mm] {$\neg r_1 \land r_2/\neg g_1 \land g_2$} (q5) ;
    \end{tikzpicture}
    \end{minipage}
    \end{tabular}
    \caption{The preMealy machine
      $\textsf{PTA}$ of Example~\ref{ex:mutexPTA}. Here, we find that $\varphi^{{\sf ME}}_{{\sf CORE}}$ is $\textsf{PTA-realizable}$ \label{fig:mutexPTA}}
\end{figure}

\begin{example} [Synthesis from $\varphi^{{\sf ME}}_{{\sf CORE}}$ and examples]
    \label{ex:mutexPTA}
    \tocheck{
    Let us consider the classical problem of mutual exclusion described in Example~\ref{ex:mutexTraces} with the LTL specification, $\varphi^{{\sf ME}}_{{\sf CORE}}$, and the prefixes of executions: 
    \begin{itemize}
        \item[$(1)$] $\{!r_1,!r_2\} . \{!g_1,!g_2\} \#\{r_1,!r_2\} . \{g_1,!g_2\} \# \{!r_1,r_2\} . \{!g_1,g_2\}$
        \item[$(2)$] $\{r_1,r_2\} . \{g_1,!g_2\} \# \{!r_1,!r_2\} . \{!g_1,g_2\}$
    \end{itemize}
    We begin by building the $\textsf{PTA}$ as shown in Fig.~\ref{fig:mutexPTA} and then check if  $\varphi^{{\sf ME}}_{{\sf CORE}}$ is ${\sf PTA-realizable}$.}
\end{example}
If that is the case, then it takes all prefixes of $E$ as the set of examples, and enters a loop which consists in iteratively coarsening again and again some congruence $\sim$ over the states of $\textsf{PTA}(E)$, by merging some of its classes. 
The congruence $\sim$ is initially the finest equivalence relation.
It does the coarsening in a specific order: examples (which are states of $\textsf{PTA}(E)$) are taken in length-lexicographic order.
When entering the loop with example $e$, the algorithm computes at line~\ref{line:mergeable} all the states, i.e., all the examples $e'$ which have been processed already by the loop ($e'\prec_{ll} e$) and whose current class can be merged with the class of $e$ (predicate $\textsf{Mergeable}(\textsf{PTA}(E),\sim,e,e')$).
\tocheck{State merging is a standard operation in automata learning algorithms} which intuitively means that merging the $\sim$-class of $e$ and the $\sim$-class of $e'$, and propagating this merge to the descendants of $e$ and $e'$, does not result any conflict. 
At line~\ref{line:real}, it filters the previous set by keeping only the states which, when merged with $e$, produce a preMealy machine which can be completed into a Mealy machine realizing $\spec$ (again by Theorem~\ref{thm:sizeSol}).
If after the filtering there are still several candidates for merge, one of them is selected with the merging strategy $\sigma_G$ and the equivalence relation is then coarsened via class merging (operation $\textsf{MergeClass}(\textsf{PTA}(E),\sim,e,e')$). 
At the end, the algorithm returns the quotient of $\textsf{PTA}(E)$ by the computed Mealy-congruence. 
As a side remark, when $\spec$ is universal, i.e. $\spec = (\inputs\outputs)^\omega$, then it is realizable by \emph{any} Mealy machine and therefore line~\ref{line:real} does not filter any of the candidates for merge.
So, when $\spec$ is universal, Algo~\ref{algo:gen} can be seen as an RPNI variant for learning preMealy machines. 

\begin{figure}[t]
    \centering
    \begin{subfigure}{0.45\linewidth}
    \centering
    \begin{tikzpicture}[->,>=stealth',shorten >=1pt,auto,node distance=1.8cm,thick,inner sep=0pt,minimum size=0pt]
        \tikzstyle{every state}=[fill=gray!30,text=black,inner sep=4pt,minimum size=12pt]

        \node[state,initial] (q0) {$0, 1$};
        \node[state, below left of=q0] (q3) {$3$};
        \node[state, below right of=q0] (q2) {$2$};
        \node[state, below of=q2] (q4) {$4$};
        \node[state, below of=q3] (q5) {$5$};
        \path (q0) edge[loop above, above, align=center] node[xshift=-2mm] {$\neg r_1 \land \neg r_2/$ \\ $\neg g_1 \land \neg g_2$} (q0);
        \path  (q0) edge[right, align=center] node[xshift=2mm, yshift=2mm] {$r_1 \land r_2/$ \\ $g_1 \land \neg g_2$} (q2);
        \path (q0) edge[left, align=center] node[xshift=-2mm, yshift=2mm] {$r_1 \land \neg r_2/$ \\ $g_1 \land \neg g_2$} (q3);
        \path (q2) edge[right, align=center, yshift=2mm] node[xshift=2mm] {$\neg r_1 \land \neg r_2/$ \\ $\neg g_1 \land g_2$} (q4);
        \path (q3) edge[left, align=center] node[xshift=-2mm] {$\neg r_1 \land r_2/$ \\ $\neg g_1 \land g_2$} (q5);
    \end{tikzpicture}
    \caption{We begin by merging states $0$ and $1$ of the preMealy machine ${\sf PTA}$, i.e., we merge classes $[\epsilon]$ and $[\{\neg r_1 \land \neg r_2\}\{\neg g_1 \land \neg g_2\}]$. We then check for ${\sf PTA-realizability}$ which is found to be true. The resulting machine is shown here.}
    \end{subfigure}
    \hfill
    \begin{subfigure}{0.45\linewidth}
    \centering
    \begin{tikzpicture}[->,>=stealth',shorten >=1pt,auto,node distance=1.8cm,thick,inner sep=0pt,minimum size=0pt]
        \tikzstyle{every state}=[fill=gray!30,text=black,inner sep=4pt,minimum size=12pt]

        \node[state,initial] (q0) {$0, 1, 2$};
        \node[state, below left of=q0] (q3) {$3$};
        \node[state, below right of=q0] (q4) {$4$};
        \node[state, below of=q3] (q5) {$5$};
        \path (q0) edge[loop above, above, align=center] node[xshift=-2mm] {$\neg r_1 \land \neg r_2/\neg g_1 \land \neg g_2$ \\ $r_1 \land r_2/ g_1 \land \neg g_2$} (q0);
        \path (q0) edge[left, align=center] node[xshift=-2mm, yshift=2mm] {$r_1 \land \neg r_2/$ \\ $g_1 \land \neg g_2$} (q3) ;
        \path (q0) edge[right, align=center, yshift=2mm] node[xshift=2mm, yshift=2mm] {$\neg r_1 \land \neg r_2/$ \\ $\neg g_1 \land g_2$} (q4);
        \path (q3) edge[left, align=center] node[xshift=-2mm] {$\neg r_1 \land r_2/$ \\ $\neg g_1 \land g_2$} (q5);
    \end{tikzpicture}
    \caption{We then proceed by merging states $\{0, 1\}$ and $2$ of the preMealy machine ${\sf PTA}$, i.e., we merge classes $[\epsilon]$ and $[r_1 \land r_2 / g_1 \land \neg g_2]$. We then check for ${\sf PTA-realizability}$ which is found to be false. We corraborate by observing the trace $(r_1 \land r_2 / g_1 \land \neg g_2)^\omega$ does not satisfy the LTL subformula $G(r_2 \implies F g_2)$. Thus the merge is unsuccesful and is reversed.}
    \end{subfigure}
    \caption{The merging phase of preMealy machine
      $\textsf{PTA}$ of Example~\ref{ex:mutexPTA}.}
    \label{fig:mutexMerge}
\end{figure}

\begin{figure}[t]
    \centering
    \begin{subfigure}{0.32\linewidth}
    \centering
    \resizebox{1\linewidth}{!}{
    \begin{tikzpicture}[->,>=stealth',shorten >=1pt,auto,node distance=1.8cm,thick,inner sep=0pt,minimum size=0pt, scale=0.5]
        \tikzstyle{every state}=[fill=gray!30,text=black,inner sep=4pt,minimum size=12pt]

        \node[state,initial] (q0) {$0, 1, 3$};
        \node[state, below left of=q0] (q5) {$5$};
        \node[state, below right of=q0] (q2) {$2$};
        \node[state, below of=q2] (q4) {$4$};
        \path (q0) edge[loop above, above, align=center] node[xshift=-2mm] {$\neg r_1 \land \neg r_2/$ \\ $\neg g_1 \land \neg g_2$ \\ $r_1 \land \neg r_2/$ \\ $g_1 \land \neg g_2$} (q0);
        \path  (q0) edge[right, align=center] node[xshift=2mm, yshift=2mm] {$r_1 \land r_2/$ \\ $g_1 \land \neg g_2$} (q2);
        \path (q2) edge[right, align=center, yshift=2mm] node[xshift=2mm] {$\neg r_1 \land \neg r_2/$ \\ $\neg g_1 \land g_2$} (q4);
        \path (q0) edge[left, align=center] node[xshift=-2mm, yshift=2mm] {$\neg r_1 \land r_2/$ \\ $\neg g_1 \land g_2$} (q5);
    \end{tikzpicture}}
    \caption{We then merge states $\{0, 1\}$ and $3$ of the preMealy machine ${\sf PTA}$, i.e., we merge classes $[\epsilon]$ and $[\{\neg r_1 \land \neg r_2\}\{\neg g_1 \land \neg g_2\}\#\{r_1 \land \neg r_2\}\{g_1 \land \neg g_2\}]$. We then check for ${\sf PTA-realizability}$ which is found to be true. The resulting machine is shown here.}
    \end{subfigure}
    \hfill
    \begin{subfigure}{0.32\linewidth}
    \centering
    \resizebox{1\linewidth}{!}{
    \begin{tikzpicture}[->,>=stealth',shorten >=1pt,auto,node distance=1.8cm,thick,inner sep=0pt,minimum size=0pt]
        \tikzstyle{every state}=[fill=gray!30,text=black,inner sep=4pt,minimum size=12pt]

        \node[state,initial] (q0) {$0, 1, 3$};
        \node[state, below left of=q0] (q5) {$5$};
        \node[state, below right of=q0] (q2) {$2$};
        \path (q0) edge[loop above, above, align=center] node[xshift=-2mm] {$\neg r_1 \land \neg r_2/$ \\ $\neg g_1 \land \neg g_2$ \\ $r_1 \land \neg r_2/$ \\ $g_1 \land \neg g_2$} (q0);
        \path  (q0) edge[bend right, left, align=center] node {$r_1 \land r_2/$ \\ $g_1 \land \neg g_2$} (q2);
        \path (q2) edge[bend right, right, align=center] node {$\neg r_1 \land \neg r_2/$ \\ $\neg g_1 \land g_2$} (q0);
        \path (q0) edge[left, align=center] node[xshift=-2mm, yshift=2mm] {$\neg r_1 \land r_2/$ \\ $\neg g_1 \land g_2$} (q5);
    \end{tikzpicture}}
    \caption{We then merge states $\{0, 1, 3\}$ and $4$ of the preMealy machine ${\sf PTA}$, i.e., we merge classes $[\epsilon]$ and $[\{r_1 \land r_2\}\{g_1 \land \neg g_2\}\#\{\neg r_1 \land \neg r_2\}\{\neg g_1 \land g_2\}]$. We then check for ${\sf PTA-realizability}$ which is found to be true. The resulting machine is shown here.}
    \end{subfigure}
    \hfill
    \begin{subfigure}{0.32\linewidth}
    \centering
    \resizebox{1\linewidth}{!}{
    \begin{tikzpicture}[->,>=stealth',shorten >=1pt,auto,node distance=1.8cm,thick,inner sep=0pt,minimum size=0pt]
        \tikzstyle{every state}=[fill=gray!30,text=black,inner sep=4pt,minimum size=12pt]

        \node[state,initial, align=center] (q0) {$0, 1$ \\ $3, 4$};
        \node[state, right of=q0] (q2) {$2$};
        \path (q0) edge[loop above, above, align=center] node[xshift=-2mm] {$\neg r_1 \land \neg r_2/\neg g_1 \land \neg g_2$ \\ $r_1 \land \neg r_2/g_1 \land \neg g_2$ \\ $\neg r_1 \land r_2/\neg g_1 \land g_2$} (q0);
        \path  (q0) edge[bend right, below, align=center] node {$r_1 \land r_2/$ \\ $g_1 \land \neg g_2$} (q2);
        \path (q2) edge[bend right, above, align=center] node {$\neg r_1 \land \neg r_2/$ \\ $\neg g_1 \land g_2$} (q0);
    \end{tikzpicture}}
    \caption{We finally merge states $\{0, 1, 3, 4\}$ and $5$ of the preMealy machine ${\sf PTA}$, i.e., we merge classes $[\epsilon]$ and $[\{\neg r_1 \land \neg r_2\}\{\neg g_1 \land \neg g_2\}\#\{r_1 \land \neg r_2\}\{g_1 \land \neg g_2\}]\#\{\neg r_1 \land r_2\}\{\neg g_1 \land g_2\}]$. We then check for ${\sf PTA-realizability}$ which is found to be true. The resulting machine is shown here.}
    \end{subfigure}
    \caption{The merging phase of preMealy machine
      $\textsf{PTA}$ of Example~\ref{ex:mutexPTA} contd.}
    \label{fig:mutexMerge2}
\end{figure}
\paragraph{Example~\ref{ex:mutexPTA} contd: Synthesis from $\varphi^{{\sf ME}}_{{\sf CORE}}$ and examples}
    \tocheck{
    We note that each state $m$ of the ${\sf PTA}$ in Fig.~\ref{fig:mutexPTA} are $\sim$-class $e$, where $e$ is the shortest prefix such that $\Delta(q_{\sf init}, e) = m$.
    We then check if $\varphi^{{\sf ME}}_{{\sf CORE}}$ is ${\sf PTA-realizable}$\footnote{Refer \textit{Checking ${\sf PTA-realizablity}$ of a specification $S$} in Section~\ref{sec:omega-reg}} which we find to be the case.
    We note that each state is labelled in the length-lexicographic order.
    We then begin the process of merging states in the aforementioned order as shown in Fig.~\ref{fig:mutexMerge} and Fig.~\ref{fig:mutexMerge2}.}

\begin{algorithm}[ht]
    \DontPrintSemicolon
    \KwIn{A finite set of examples $E\subseteq (\inputs.\outputs)^*$, a
      specification $\spec\subseteq (\inputs.\outputs)^\omega$ given as a
      deterministic safety automaton, a merging
      strategy $\sigma_G$}
    \KwOut{A preMealy machine $\MM$ s.t. $E\subseteq L(\MM)$ and
      $\spec$ is $\MM$-realizable, if it exists, otherwise UNREAL. }

\textbf{if} $E$ is not consistent or $\spec$ is not
\textsf{PTA}($E$)-realizable\textbf{ then return} UNREAL

    $E \gets \prefs(E)\cap (\inputs\outputs)^*$;\label{line:closure}

    $\sim \gets \{ (e,e)\mid e\in E\}$;\tcp*{$\sim = diag_E$}

    \For{$e\in E$ in length-lexicographic order $\preceq_{ll}$\label{line:loop}}{

      $mergeCand\gets \{ e'\mid \textsf{Mergeable}(\textsf{PTA}(E),\sim,e,e')\wedge e'\prec_{ll}
      e\}$ \label{line:mergeable}

      $mergeCand\gets \{e'\in mergeCand\mid \spec\text{ is } \textsf{MergeStates}(\textsf{PTA}(E),\sim,e,e'){-}realizable\}$\label{line:real}

      \If{$mergeCand\neq\varnothing$}{
      $e'\gets \sigma_G(\MM,e,mergeCand)$\label{line:choice}
      
      $\sim\gets \textsf{MergeClass}(\textsf{PTA}(E),\sim,e,e')$\label{line:merge}
    }
  }

    \textbf{return} $\textsf{PTA}(E)/_{\sim}$

    \caption{GEN($E$,$\spec$,$\sigma_G$)  -- generalization algorithm}
 \label{algo:gen}
\end{algorithm}

\subsection{Phase 2: completion of preMealy machines into Mealy machines}\label{subsec:comp}
As it only constructs the PTA and tries to merge its states, the generalization phase might not return a (complete) Mealy machine.
In other words, the machine it returns might still contain some holes (missing transitions).
The objective of this second phase is to complete those holes into a Mealy machine, while realizing the specification. 
More precisely, when a transition is not defined from some state $m$ and some input $\inp\in\inputs$, the algorithm must select an output symbol $\out\in \outputs$ and a state $m'$ to transition to, which can be either an existing state or a new state to be created (in that case, we write $m' = \textsf{fresh}$ to denote the fact that $m'$ is a fresh state).
In our implementation, if it is possible to reuse a state $m'$ that was created during the generalization phase, it is favoured over other states, in order to exploit the examples.
However, the algorithm for the completion phase we describe now does not depend on any particular strategy to pick states.
Therefore, it is parameterized by a \emph{completion strategy} $\sigma_C$, defined over all triples $(\MM, m, \inp, X)$ where $\MM$ is a preMealy machine with set of states $M$, $(m,\inp)$ is a hole of $\MM$, and $X\subseteq \outputs\times (M\cup \{\textsf{fresh}\})$ is a list of candidate pairs $(\out,m')$.
It returns an element of $X$, i.e., $\sigma_C(\MM,m,\inp,X)\in X$. 

In addition to $\sigma_C$, the completion algorithm takes as
input a preMealy
machine $\MM_0$ and a specification $\spec$, and outputs a
Mealy machine which $\MM_0$-realizes
$\spec$, if it exists. The pseudo-code
is given in Algo~\ref{algo:comp}. Initially, it tests whether $\spec$ is
$\MM_0$-realizable, otherwise it returns UNREAL. Then, it keeps on
completing holes of $\MM_0$. The computation of the list of
output/state candidates is done at the loop of
line~\ref{line:candidates}. Note that the \textbf{for}-loop iterates
over $M\cup\{\textsf{fresh}()\}$, where $\textsf{fresh}()$ is a
procedure that returns a fresh state not in $M$. The algorithm maintains the invariant that
at any iteration of the \textbf{while}-loop, $\spec$ is
$\MM$-realizable, thanks to the test at line~\ref{line:test2}, based on Theorem~\ref{thm:sizeSol}. Therefore, the list of candidates is necessarily
non-empty. Amongst those candidates, a single one is selected and the
transition on $(m,\inp)$ is added to $\MM$ accordingly at line~\ref{line:update}.

\begin{algorithm}[ht]
     \DontPrintSemicolon
    \KwIn{A preMealy machine $\MM_0 = (M, m_{\sf init}, \trans)$, a
      specification $\spec\subseteq (\inputs.\outputs)^*$ given as a
      deterministic safety automaton, a completion
      strategy $\sigma_C$}
    \KwOut{A (complete) Mealy machine $\MM$ such that $\spec$ is $\MM_0$-realizable, otherwise UNREAL. }

    \textbf{if} $\spec$ is not
    $\MM_0$-realizable\label{line:test1}\textbf{ then return } UNREAL

    $\MM\gets \MM_0$

    \While{there exists a hole $(m,\inp)\in M\times
      \inputs$}{
      
      $candidates \gets\varnothing$
      
      \For{$(\out,m')\in \outputs\times
        (M\cup\{\textsf{fresh}()\})$\label{line:candidates}}{

        \tcp*{$\textsf{fresh}()$ denotes a new
          state not in $M$}

        $\MM_{\out,m'} \gets (M\cup\{m'\},m_{\sf init}, \trans\cup\{(m,\inp)\mapsto
        (\out,m')\})$

        \If{$\spec$ is $\MM_{\out,m'}$-realizable\label{line:test2}}{
          $candidates\gets candidates\cup \{ (\out,m')\}$
        }
      }

      $(\out,m')\gets \sigma_C(\MM, m,\inp,candidates)$\label{line:selection}

      $(M,\trans)\gets (M\cup \{m'\},\trans\cup\{(m,\inp)\mapsto (\out,m')\})$\label{line:update}
      
      $\MM\gets (M, m_{\sf init}, \trans)$
      }
      \textbf{return} $\MM$

      \caption{\textsc{Comp}($\MM_0$,$\spec$,$\sigma_C$): preMealy machine completion algorithm}
      \label{algo:comp}
  \end{algorithm}

\subsection{Two-phase synthesis algorithm from specifications and examples}
\label{subsec:overview}
 \begin{algorithm}[h]

     \DontPrintSemicolon
    \KwIn{A specification $\spec\subseteq (\inputs.\outputs)^*$ given as a
      deterministic safety automaton, a finite
      set of examples $E\subseteq (\inputs.\outputs)^*$, a
      generalizing and a completion
      strategies $\sigma_G, \sigma_C$}
    \KwOut{A Mealy machine $\MM$ such that $E\subseteq L(\MM)$ and
      $\MM$ realizes $\spec$ if it exists, otherwise UNREAL. }

    \If{$\textnormal{\text{\textsc{Gen}}}(E,\spec,\sigma_G)\neq$UNREAL}{
    $\MM_0\gets \text{\textsc{Gen}}(E,\spec,\sigma_G)$ \tcp*{Returns a preMealy
      machine generalizing the set of examples according to $\sigma_G$ and such that $\spec$ is $\MM_0$-realizable}
  }
  \Else
  {
    \textbf{return} UNREAL
  }

  \If{$\textnormal{\text{\textsc{Comp}}}(\MM_0,\spec,\sigma_C)\neq $UNREAL}{
    $\MM\gets \text{\textsc{Comp}}(\MM_0,\spec,\sigma_C)$\tcp*{Complete $\MM_0$
      by creating new states or reusing states according to $\sigma_C$
   }

    \textbf{return} $\MM$

  }
  \Else
  {
    \textbf{return} UNREAL
  }
\caption{\textsc{SynthSafe}($E$,$\spec$,$\sigma_G$,$\sigma_C$) -- synthesis algorithm from specification and examples      \label{algo:learning}}
  \end{algorithm}
The two-phase synthesis algorithm
for safety specifications and examples, called \textsc{SynthSafe}$(E, \spec, \sigma_G,\sigma_C)$
works as follows: it takes as input a set of examples $E$, a specification
$\spec$ given as a deterministic safety automaton,  a generalizing and
completion strategies $\sigma_G,\sigma_C$ respectively. It returns a
Mealy machine $\MM$ which realizes $\spec$ and such that $E\subseteq
L(\MM)$ if it exists. In a first
steps, it calls \textsc{Gen}$(E,\spec,\sigma_G)$. If this calls
returns UNREAL, then \textsc{SynthSafe} return UNREAL as
well. Otherwise, the call to \textsc{Gen} returns a preMealy machine
$\MM_0$. In a second step, \textsc{SynthSafe} calls
\textsc{Comp}$(\MM_0,\spec,\sigma_C)$. If this call returns UNREAL, so
does \textsc{SynthSafe}, otherwise \textsc{SynthSafe} returns the
Mealy machine computed by \textsc{Comp}. The pseudo-code of
\textsc{SynthSafe} can be found in Algo.~\ref{algo:learning}.

The completion procedure  may
not terminate for some completion strategies. It is because the
completion strategy could for instance keep on selecting pairs of
the form $(\out,m')$ where $m'$ is a fresh state. However we prove
that it always terminates for \emph{lazy} completion strategies. A completion strategy $\sigma_C$ is said to be \emph{lazy} if it
favours existing states, which formally means that
if $X\setminus (\outputs\times \{\textsf{fresh}\})\neq \varnothing$, then 
$\sigma_C(\MM,m,\inp,X)\not\in \outputs\times
\{\textsf{fresh}\}$. The first theorem establishes  correctness and
termination of the algorithm for lazy completion strategies (we assume
that the functions $\sigma_G$ and $\sigma_C$ are computable in
worst-case exponential time \tocheck{in the size of their inputs}).

\begin{thm}[termination and correctness]\label{thm:terminationcorrectness}
    For all finite sets of examples $E\subseteq (\inputs.\outputs)^*$,
    all specifications $\spec\subseteq
    (\inputs.\outputs)^\omega$ given as a deterministic safety
    automaton $\atm$ with $n$ states, all merging strategies $\sigma_G$ and all completion
    strategies $\sigma_C$, if \textsc{SynthSafe}($E,\spec,\sigma_G,\sigma_C$) terminates
    then, it returns a Mealy machine $\MM$ such that $E\subseteq L(\MM)$ and
    $\MM$ realizes $\spec$, if it exists, otherwise it returns UNREAL.
    Moreover, \textsc{SynthSafe}($E,\spec,\sigma_G,\sigma_C$)
    terminates if $\sigma_C$ is lazy, in worst-case
    exponential time (polynomial in the
    size\footnote{The size of $E$ is the sum of the lengths of the
      examples of $E$.}
    of $E$ and
    exponential in $n$).  
\end{thm}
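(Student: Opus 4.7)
I would first establish the invariants maintained by each phase. For \textsc{Gen}, after every iteration of the outer loop, the relation $\sim$ is a Mealy-congruence on $\textsf{PTA}(E)$ (guaranteed by the $\textsf{Mergeable}$ test at line~\ref{line:mergeable}), so $\textsf{PTA}(E)/_{\sim}$ is a well-defined preMealy machine; moreover $\prefs(E) \subseteq L(\textsf{PTA}(E)/_{\sim})$, since class merging preserves all original transitions, and $\spec$ remains $\textsf{PTA}(E)/_{\sim}$-realizable because line~\ref{line:real} filters out any merge that breaks realizability. For \textsc{Comp}, the invariant is that $\spec$ is $\MM$-realizable after every iteration of the while loop, guaranteed by the test at line~\ref{line:test2}. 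When \textsc{Comp} exits, $\MM$ is a complete Mealy machine, hence realizes $\spec$, and since $\MM_0 \preceq \MM$ we have $E \subseteq L(\MM_0) \subseteq L(\MM)$. It remains to observe that the UNREAL branches are also correct: \textsc{Gen} returns UNREAL only when $E$ is inconsistent (in which case no Mealy machine contains $E$ in its language) or $\spec$ is not $\textsf{PTA}(E)$-realizable (in which case no completion of $\textsf{PTA}(E)$, and a fortiori no Mealy machine whose language contains $E$, can realize $\spec$); and \textsc{Comp} returns UNREAL only when its input is not realizable, which by the \textsc{Gen} invariant never happens here.

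\textbf{Termination and complexity.} \textsc{Gen} terminates because its loop iterates over the finite set $\prefs(E) \cap (\inputs\outputs)^*$, of size $O(|E|)$. Each iteration performs $O(|E|)$ realizability checks, each of which costs $2^{O(n\log n)} \cdot \text{poly}(|E|)$ by Theorem~\ref{thm:sizeSol}; overall, $\text{poly}(|E|) \cdot 2^{O(n\log n)}$. The output $\MM_0$ has at most $|E|$ states and $O(|E|)$ holes. For \textsc{Comp} with a lazy strategy, Theorem~\ref{thm:sizeSol} bounds the size of a minimal realizer $\MM^*$ with $\MM_0 \preceq \MM^*$ by $|\MM_0| + n_{h_0} \cdot 2^{O(n\log n)}$. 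I would argue that the lazy strategy prevents the construction from ever exceeding this bound: creation of a fresh state occurs only when no state already in $\MM$ admits a realizability-preserving transition, which (intuitively) lets one embed the algorithm's run into some fixed $\MM^*$, with fresh states in the algorithm matched to states of $\MM^* \setminus \MM_0$. Since each iteration fills exactly one hole, the number of iterations is bounded by the total number of transitions in the final Mealy machine, i.e., $|\inputs|\cdot(|\MM_0| + n_{h_0}\cdot 2^{O(n\log n)}) = \text{poly}(|E|) \cdot 2^{O(n\log n)}$. Each iteration inspects $O(|\outputs|\cdot|M|)$ candidates, each check costing $2^{O(n\log n)}\cdot \text{poly}(|M|)$, giving an overall running time of $\text{poly}(|E|) \cdot 2^{O(n\log n)}$.

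\textbf{Main obstacle.} The non-trivial part is the termination argument for \textsc{Comp}: realizability preservation alone does not prevent the algorithm from spawning arbitrarily many fresh states, since filling a hole toward an existing state could in principle destroy realizability while an unbounded number of fresh-state continuations remain valid. The crux is to show that laziness is strong enough to bound the state count by that of some minimal realizer. I would approach this by fixing, at the start of \textsc{Comp}, a minimal $\MM_0$-realizer $\MM^*$, and by maintaining throughout the algorithm an injective embedding $\Phi$ from the current $\MM$ into $\MM^*$ (extending the initial identity on $\MM_0$). When lazy picks an existing state, the embedding trivially extends; when no existing state works, I would argue that the corresponding transition in $\MM^*$ must lead to a state outside $\text{range}(\Phi)$, allowing $\Phi$ to be extended by one image. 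This witnesses $|\MM| \leq |\MM^*|$ throughout, giving the desired bound. Subtleties in defining $\Phi$ (in particular, reconciling the algorithm's choice of output $\out$ with that of $\MM^*$) would require choosing $\MM^*$ and the candidates in a coordinated way, and this is where I expect the proof to be most delicate.
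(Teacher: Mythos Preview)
Your correctness argument tracks the paper's: invariants for \textsc{Gen} (Mealy-congruence, $E\subseteq L(\cdot)$, realizability preserved by line~\ref{line:real}) and for \textsc{Comp} (realizability preserved by line~\ref{line:test2}), plus handling of the UNREAL branches. That part is fine.

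The termination argument for \textsc{Comp} has a genuine gap, and it sits exactly where you flagged it. The invariant ``$\MM$ embeds into a fixed minimal realizer $\MM^*$'' cannot be maintained --- and not only in the fresh-state case. Already when the lazy strategy reuses an existing state, it selects some $(\out,m')$ via $\sigma_C$; nothing forces this pair to coincide with $\trans_{\MM^*}(\Phi(m),\inp)$, neither in the output symbol nor in the target state. As soon as either disagrees, the new transition of $\MM$ has no counterpart in $\MM^*$, $\MM$ is no longer a subgraph of $\MM^*$, and the embedding is lost. Since $\sigma_C$ is an \emph{arbitrary} lazy strategy given as input to the theorem, you cannot coordinate its choices with $\MM^*$; and if you re-choose $\MM^*$ after each step, the bound from Theorem~\ref{thm:sizeSol} applies to the current machine (whose state count and hole count may both have grown), so no uniform bound falls out of that argument.

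The paper's proof takes a different route and never references a realizing Mealy machine. It works directly with the deterministic safety automaton $\atm$: to every state $m$ of the current $\MM$ it associates the set $R_m\subseteq Q$ of $\atm$-states reachable on inputs that reach $m$ in $\MM$. The key lemma (Lemma~\ref{lem:keylemmatermination}) shows that if $\transS^*_{\atm}(R_m,\inp\out)\subseteq R_{m''}$ for some output $\out$ and some existing state $m''$, then completing the hole $(m,\inp)$ with $(\out,m'')$ preserves realizability --- this is where the safety assumption on $\atm$ is actually used. Hence a fresh state can be created only when the antichain of $\subseteq$-maximal sets among $\{R_m : m\in M\}$ strictly grows, which can happen at most $2^{|Q|}$ times (Lemma~\ref{lem:anti}); after that, only reuse steps remain and the number of holes strictly decreases. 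This argument is insensitive to the particular choices made by $\sigma_C$, because $R_m$ depends only on which words reach $m$, not on how any fixed realizer continues them.
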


The proof of the latter theorem is a consequence of several results
proved on the generalization and completion phases, and is given in
App.~\ref{sec:proofs}. Intuitively, the complexity is dominated by
the complexity of checking $\PP$-realizability
(Theorem~\ref{thm:sizeSol}) and the termination time of the completion
procedure, which we prove to be worst-case exponential in $n$. The
assumption that the specification is a determinsitic safety automaton
$\atm$ is used
when proving termination of the completion algorithm. Intuitively, to any state $m$
of the so far constructed preMealy machine $\MM$, we associate the subset of
states $Q_m$ of $\atm$ which are reachable in $\atm$ when reading
prefixes that reach $m$ in $\MM$. We prove that when a transition to a
fresh state $m'$ is added to $\MM$ and $Q_{m'}\subseteq Q_m$, then
$m$ could have been reused instead of $m'$
(Lemma~\ref{lem:keylemmatermination} in App.~\ref{app:comple}). This is possible as such
subsets are sufficient to summarize the behaviour of $\atm$ on
infinite suffixes, because it is a safety condition. We also show some
monotonicity property of the subsets $Q_m$ when more transitions are
added to $\MM$, allowing to bound the termination time by the length
of the longest chain of $\subseteq$-antichains of subsets, which is
worst-case exponential in the number of states of $\atm$ (Lemma~\ref{lem:terminationintermediate}
in App.~\ref{app:comple}). 

A Mealy machine $\TT$ is minimal if for all Mealy machine $\MM$ such that $L(\TT) = L(\MM)$, the number of states of $\MM$ is at least that of $\TT$.
The next result, proved in App.~\ref{app:proofmealycompleteness}, states that any minimal Mealy machine realizing a specification $\spec$ can be returned by our synthesis algorithm, providing representative examples.

\begin{thm}[Mealy completeness]\label{thm:mealycompleteness}
    For all specifications $\spec\subseteq
    (\inputs.\outputs)^\omega$ given as a deterministic safety
    automaton, for all minimal Mealy machines $\MM$ realizing $\spec$, there exists a finite set of examples
    $E\subseteq (\inputs.\outputs)^*$, of size polynomial in the size
    of $\MM$, such that for all generalizing strategies $\sigma_G$ and
    completion strategies $\sigma_C$, and all sets of examples $E'$ s.t. $E\subseteq E'\subseteq L(\MM)$, 
    \textsc{SynthSafe}($E',\spec,\sigma_G,\sigma_C) = \MM$.
\end{thm}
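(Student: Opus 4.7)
The plan is to construct a polynomial-size characteristic sample $E$ that, together with the state-merging rule of \textsc{Gen}, forces \textsc{SynthSafe} to recover $\MM$ regardless of the choices made by $\sigma_G$ and $\sigma_C$, and that remains sufficient when further traces from $L(\MM)$ are added. Let $Q$ be the states of $\MM$ and let $\phi : \prefs(L(\MM))\cap (\inputs\outputs)^* \rightarrow Q$ send $u$ to $\transS_\MM^*(m_{\sf init},\inproj(u))$. I construct $E$ in three layers. First, for every $q\in Q$, pick an $\preceq_{ll}$-minimal trace $\pi_q$ with $\phi(\pi_q)=q$ and add it to an \emph{access set} $A$. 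Second, for every $q\in Q$ and $\inp\in\inputs$, add $\pi_q\cdot \inp\cdot \transO_\MM(q,\inp)$ to a \emph{transition cover} $T$. Third, for each pair $q\neq q'$ of states of $\MM$, since $\MM$ is minimal fix a shortest distinguishing input word $d_{q,q'}\in\inputs^*$ (of length at most $|Q|$), and add to a \emph{distinguishing set} $D$ the two traces $\pi_q\cdot \tilde d^q_{q,q'}$ and $\pi_{q'}\cdot \tilde d^{q'}_{q,q'}$, where $\tilde d^r$ denotes the $(\inputs\outputs)^*$-expansion of $d$ by $\MM$ starting from state $r$. Set $E=A\cup T\cup D$; standard length bounds give $|E|=O(|Q|^2\cdot|\inputs|+|Q|^3)$, polynomial in $|\MM|$.

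Fix any $E'$ with $E\subseteq E'\subseteq L(\MM)$ and any strategies $\sigma_G,\sigma_C$. Since $E'\subseteq L(\MM)$, the map $\phi$ is well-defined on the states of $\textsf{PTA}(E')$, and $\spec$ is $\textsf{PTA}(E')$-realizable, as witnessed by the Mealy machine obtained by ``plugging in'' a disjoint copy of $\MM$ at every hole of $\textsf{PTA}(E')$. The central invariant I would establish by induction on the main loop of \textsc{Gen} is that, throughout the run, two states $u,v$ of $\textsf{PTA}(E')$ are $\sim$-equivalent iff $\phi(u)=\phi(v)$. The ``only if'' direction relies on the distinguishing set $D$: any tentative merge of $u,v$ with $\phi(u)\neq\phi(v)$ would, via the Mealy-congruence closure computed by \textsf{Mergeable}, eventually identify two sibling nodes on the branches $\tilde d^{\phi(u)}_{\phi(u),\phi(v)}$ and $\tilde d^{\phi(v)}_{\phi(u),\phi(v)}$ whose outputs differ, and hence be rejected at line~\ref{line:mergeable}. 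The ``if'' direction uses the $\preceq_{ll}$-order: when $u$ is processed, any earlier $v\prec_{ll} u$ with $\phi(v)=\phi(u)$ yields a merge that is Mealy-consistent (all outputs along common input continuations agree because they are dictated by $\MM$) and realizability-preserving (the resulting quotient embeds into $\MM$, which realizes $\spec$), hence lies in $mergeCand$ after line~\ref{line:real}; the invariant then implies that whichever element $\sigma_G$ selects lies in the $\phi(u)$-class, so the merge collapses $u$ into the unique class indexed by $\phi(u)$.

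Combining the two directions, the quotient $\textsf{PTA}(E')/_\sim$ returned by \textsc{Gen} is in bijection with $Q$ via $\phi$, and the transition cover $T$ guarantees that every $\MM$-transition appears in the quotient; so the output of \textsc{Gen} is isomorphic to $\MM$ as a preMealy machine, which is already complete. Consequently, the main loop of Algorithm~\ref{algo:comp} in \textsc{Comp} has no hole to fill and returns $\MM$ after its realizability pretest at line~\ref{line:test1}, which succeeds because $\MM$ itself realizes $\spec$. This yields $\textsc{SynthSafe}(E',\spec,\sigma_G,\sigma_C)=\MM$.

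The main technical obstacle will be the ``only if'' direction of the invariant, i.e., formally justifying that the Mealy-congruence closure always rejects merges of states with differing $\phi$-values. One has to argue that the closure procedure, when seeded with such a pair, effectively traverses the two distinguishing branches of $\textsf{PTA}(E')$ in lockstep along the common input word $d_{\phi(u),\phi(v)}$ and reaches the mismatching output position, and that intermediate identifications with other $\sim$-classes arising from previous loop iterations preserve this obstruction rather than ``short-circuiting'' it. This will require a nested induction on the propagation of merges combined with the invariant itself, and is the delicate part of the argument.
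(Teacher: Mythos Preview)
Your proposal is correct and follows essentially the same route as the paper's proof (Lemma~\ref{lem:learnability}): the same characteristic sample (access words chosen $\preceq_{ll}$-minimal, a transition cover, and distinguishing experiments appended to the access words), the same inductive invariant over the iterations of \textsc{Gen}, and the same conclusion that \textsc{Gen} already outputs the complete machine so \textsc{Comp} is vacuous.

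Two small points where you should tighten the presentation to match what actually holds. First, your invariant as stated (``$u\sim v$ iff $\phi(u)=\phi(v)$'') is too strong: the ``if'' direction is only true on the $\preceq_{ll}$-downward closure of the currently processed example. The paper splits this into two invariants, namely $\sim_e\finer\ker\phi$ globally (your ``only if'') and equality of $\sim_e$ and $\ker\phi$ on $\{e'\mid e'\preceq_{ll} e\}$ (your ``if'' restricted to processed prefixes); your own arguments already respect this distinction, so only the statement needs adjusting. Second, for the ``only if'' direction the missing link you flag is resolved precisely by the restricted ``if'' direction at the previous step: since $\pi_{\phi(u)}\preceq_{ll} u$ and $\pi_{\phi(v)}\preceq_{ll} v\prec_{ll} u$, both access words are already processed, hence already $\sim$-equivalent to $u$ and $v$ respectively once the tentative merge of $u$ with $v$ is performed. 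This puts $\pi_{\phi(u)}$ and $\pi_{\phi(v)}$ in the same class, and then the congruence closure walks the two distinguishing branches (which are in $E$ by construction) in lockstep until the conflicting output is hit. That is exactly the ``nested induction'' you anticipate, and it is how the paper closes the argument.
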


The polynomial upper bound given in the statement of
Theorem~\ref{thm:mealycompleteness} is more precisely the following:
the cardinality of $E$ is $O(m+n^2)$ where $n$ is the number of
states of $\MM$ while $m$ is its number of transitions. Moreover, each
example $e\in E$ has length $O(n^2)$. \tocheck{More details can be found in
Remark~\ref{rem:complexity}.}

\begin{remark}\label{rem:complexity}
    We bound here the size of the characteristic sample $E_\TT$. Let
    $n$ and $m$ be the number of states and transitions of $\TT$
    respectively. Then, for all states $t$, $s_t$ has length at most $n-1$, and so for
    all $p = (t,\inp)$ such that $\trans(p)$ is defined, $e_p = \fIO^\TT(s_t)$ has length at most
    $2n$. Given two different states $t\neq t'$, $d_{t,t'}$ has length
    at most $n^2$. Therefore, $v_{t,t'}$ has length at
    most $2(n+n^2)$. There are at most $m$ words $e_p$ and $n^2$
    words $v_{t,t'}$. So overall, the cardinality of
    $E_\TT$ is bounded by $m+n^2$ and its size is bounded by
    $mn+2(n^3+n^4)$.
\end{remark}
\section{Synthesis from $\omega$-regular specifications and examples}
\label{sec:omega-reg}

We now consider the case where the specification $\spec$ is given as universal coB\"uchi automaton, in Section~\ref{subsec:omega}.
We consider this class of
specifications as it is complete for $\omega$-regular languages and allow for compact symbolic representations.
\tocheck{Further in this section, we consider the case of LTL specifications.}

\paragraph{Specifications given as universal coB\"uchi automata}\label{subsec:omega}
Our solution for $\omega$-regular specifications relies on a reduction
to the safety case treated in Sec.~\ref{sec:learningframework}. It 
relies on previous works that develop so called Safraless algorithms
for $\omega$-regular reactive
synthesis~\cite{DBLP:conf/focs/KupfermanV05,DBLP:conf/atva/ScheweF07a,DBLP:conf/cav/FiliotJR09}. The
main idea is to strengthen the acceptance condition of the automaton
from coB\"uchi to $K$-coB\"uchi, which is a safety acceptance
condition. It is complete for the plain synthesis problem (w/o
examples) if $K$ is large enough (in the worst-case exponential
in the number of states of the automaton, see for
instance~\cite{DBLP:conf/cav/FiliotJR09}). Moreover, it allows for
incremental synthesis algorithms: if the specification defined by the
automaton with a $k$-coB\"uchi acceptance condition is realizable, for
$k\leq K$, so is the specification defined by taking $K$-coB\"uchi
acceptance.
Here, as we also take examples into account, we need to slightly
adapt the results.

\begin{thm}\label{thm:transferpreal}
Given a universal co-B\"uchi automaton $\atm$ with $n$ states defining a specificaton $\spec = L^{\forall}(\atm)$ and a preMealy machine $\PP$ with $m$ states, we have that $\spec$ is $\PP$-realizable if and only if $\spec'=L^{\forall}_K(A)$ is $\PP$-realizable for $K = nm|\inputs|2^{{\bf O}(n\log_2 n)}$.
\end{thm}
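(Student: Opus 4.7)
The plan is to prove the two directions separately. The easy direction is that if $\spec'$ is $\PP$-realizable, then so is $\spec$: since the $K$-coB\"uchi condition strengthens the coB\"uchi condition, $L^\forall_K(\atm) \subseteq L^\forall(\atm) = \spec$, so any Mealy machine $\MM$ that $\PP$-realizes $\spec'$ automatically $\PP$-realizes $\spec$ (the subgraph relation $\PP\preceq \MM$ is unaffected). This holds for every $K\in\mathbb{N}$ and requires no combinatorial argument.

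For the nontrivial direction, I would start from the assumption that $\spec$ is $\PP$-realizable and invoke Theorem~\ref{thm:sizeSol} to obtain an explicit Mealy machine $\MM$ which $\PP$-realizes $\spec$ and whose number of states is bounded by $s = m + n_h\cdot 2^{O(n\log_2 n)}$, where $n_h\leq m|\inputs|$ is the number of holes of $\PP$; hence $s = O(m|\inputs|\,2^{O(n\log_2 n)})$. I would then show that this same $\MM$ already $\PP$-realizes $\spec' = L^\forall_K(\atm)$ for $K = n\cdot s = nm|\inputs|\,2^{O(n\log_2 n)}$, which matches the bound in the statement.

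The combinatorial core is a Safraless-style pumping argument on the product $\MM\otimes \atm$, whose states are pairs $(q_\MM,q_\atm)$ and which has $sn$ states. Every path from an initial state of the product corresponds to a pair $(w,\rho)$ where $w\in L_\omega(\MM)$ and $\rho$ is a run of $\atm$ on $w$. Suppose, for contradiction, that some such path visits strictly more than $K = sn$ positions at which the $\atm$-component of $\rho$ is $1$-coloured. By the pigeonhole principle, two of these positions $i<j$ carry the same product state $(q_\MM,q_\atm)$. Because $\MM$ is deterministic and input-complete, pumping the input-output segment between positions $i$ and $j$ yields an $\omega$-word $w'\in L_\omega(\MM)$, and the analogous pumping of $\rho$ produces a valid run of $\atm$ on $w'$ that visits a $1$-coloured state at least once per iteration (e.g.\ at the image of position $i$), hence infinitely often. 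This contradicts $w'\in L_\omega(\MM)\subseteq \spec = L^\forall(\atm)$. Therefore every path of the product visits at most $K$ such positions, i.e.\ $L_\omega(\MM)\subseteq L^\forall_K(\atm) = \spec'$.

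The main subtlety will be making precise the simultaneous pumping of the Mealy machine and of a universal coB\"uchi run: the determinism and totality of $\MM$ guarantee that the outputs stay consistent along the pumped cycle, while the nondeterministic transitions of $\atm$ selected inside the loop can be replayed at each iteration precisely because the $\atm$-endpoint of the loop coincides with its starting state. Once this is in place, the bound $K = ns$ follows directly from the size of the product, and substituting the size bound from Theorem~\ref{thm:sizeSol} immediately yields the announced $K = nm|\inputs|\,2^{O(n\log_2 n)}$.
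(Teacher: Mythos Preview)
Your proposal is correct and follows essentially the same approach as the paper: both directions are handled identically, invoking Theorem~\ref{thm:sizeSol} for the size bound on a $\PP$-realizing Mealy machine and then using the product $\MM\otimes\atm$ to transfer realizability to the $K$-coB\"uchi strengthening. The only difference is that the paper quotes the product/pumping step as a black-box lemma from~\cite{DBLP:conf/cav/FiliotJR09} (namely $L_\omega(\MM)\subseteq L^\forall(\atm)$ iff $L_\omega(\MM)\subseteq L^\forall_{\alpha\beta}(\atm)$ for $\alpha=|\atm|$, $\beta=|\MM|$), whereas you spell out its standard pigeonhole proof inline.
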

\begin{proof}
According to Theorem~\ref{thm:sizeSol}, given a universal co-B\"uchi
automaton $\atm$ with $n$ states defining a specification $\spec$, 
and a preMealy machine
$\PP$ with $m$ states and $n_h$ holes, $\spec$ is $\PP$-realizable iff it is
$\PP$-realizable by a Mealy machine with $m+n_h2^{O(nlog_2 n)}$
states.  Let $\MM$ be such a Mealy machine. The rest of the proof
relies on the following lemma:

\begin{lemma}[\cite{DBLP:conf/cav/FiliotJR09}]
Let $\atm$ be a universal coB\"uchi automaton with $\alpha$ states and
$\MM$ a Mealy machine with $\beta$ states, we have that $L_{\omega}(\MM) \subseteq L^{\forall}(\atm)$ iff  $L_{\omega}(\MM) \subseteq L_k^{\forall}(\atm)$ for $k= \alpha \times \beta$.
\end{lemma}

Therefore, we get that $\MM$ realizes $L_K^{\forall}(\atm)$ for $K =
n\times (m+n_h2^{O(nlog_2 n)}) \leq nm|\inputs|2^{O(nlog_2 n)}$. Conversely, any machine realizing
$L_k^{\forall}(\atm)$, for any $k$, also realizes $L^{\forall}(\atm)$.
\end{proof}

\tocheck{The below lemma follows immediately:}
\begin{lemma}
\label{lem:mono}
For all co-B\"uchi automata $\atm$, for all preMealy machines $\PP$, for all $k_1 \leq k_2$, we have that $L^\forall_{k_1}(\atm) \subseteq L^\forall_{k_2}(\atm)$ and so if $L^\forall_{k_1}(\atm)$ is $\PP$-realizable then $L^\forall_{k_2}(\atm)$ is $\PP$-realizable.
Furthermore for all $k \geq 0$, if $\spec'=L^{\forall}_k(A)$ is $\PP$-realizable then $\spec = L^{\forall}(\atm)$ is $\PP$-realizable.
\end{lemma}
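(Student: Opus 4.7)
The plan is to obtain both parts of the lemma as direct consequences of the definitions of the $k$-coB\"uchi acceptance condition together with a monotonicity principle for $\PP$-realizability with respect to language inclusion.

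First, I would record the following general monotonicity principle: if $\spec_1 \subseteq \spec_2$ and $\spec_1$ is $\PP$-realizable via some Mealy machine $\MM$ (i.e.\ $\PP \preceq \MM$ and $L_\omega(\MM) \subseteq \spec_1$), then $L_\omega(\MM) \subseteq \spec_2$, so the same $\MM$ witnesses $\PP$-realizability of $\spec_2$. This observation will be applied twice and reduces the lemma to establishing two language inclusions.

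For the first statement, I would show $L^\forall_{k_1}(\atm) \subseteq L^\forall_{k_2}(\atm)$ by unfolding the $k$-coB\"uchi condition. A run $r = q_0 q_1 \dots$ is $k$-coB\"uchi accepting iff $|\{ i \geq 0 \mid d(q_i) = 1 \}| \leq k$; since $k_1 \leq k_2$, any run accepting under the $k_1$-bound is accepting under the $k_2$-bound. Because both languages use the universal interpretation, a word $w \in L^\forall_{k_1}(\atm)$ has all its runs accepting under the $k_1$-bound, hence all accepting under the $k_2$-bound, giving $w \in L^\forall_{k_2}(\atm)$. The realizability transfer then follows from the monotonicity principle.

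For the second statement, I would prove $L^\forall_k(\atm) \subseteq L^\forall(\atm)$ by noting that any run accepted by the $k$-coB\"uchi condition visits color-$1$ states at most $k$ times, hence only finitely often, so it stays eventually in color-$2$ states; consequently $\liminf_{i} d(q_i) = 2$, which is even, so the run is accepting under the standard coB\"uchi condition. Applying the universal interpretation, every word in $L^\forall_k(\atm)$ lies in $L^\forall(\atm)$, and the realizability transfer again follows from the monotonicity principle. No step presents a substantial obstacle; the lemma is essentially a bookkeeping consequence of the definitions recalled in the preliminaries.
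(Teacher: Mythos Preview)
Your proposal is correct and matches the paper's intent: the paper gives no explicit proof and simply states that the lemma ``follows immediately,'' and your expansion via the language inclusions $L^\forall_{k_1}(\atm)\subseteq L^\forall_{k_2}(\atm)$ and $L^\forall_{k}(\atm)\subseteq L^\forall(\atm)$ together with the obvious monotonicity of $\PP$-realizability under specification inclusion is exactly the natural unfolding of that remark.
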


Thanks to the latter two results applied to $\PP = \textsf{PTA}(E)$
for a set $E$ of examples of size $m$, we can design an algorithm for synthesising
Mealy machines from a specification defined by a universal coB\"uchi
automaton $\atm$ with $n$ states and $E$: it calls $\textsc{SynthSafe}$ on the safety
specification $L^\forall_k(\atm)$ and $E$ for increasing values of
$k$, until it concludes positively, or reach the bound $K = 2^{{\bf
    O}(m n\log_2 mn)}+1$. In the latter case, it returns
\textsf{UNREAL}. However, to apply \textsc{SynthSafe} properly,
$L^\forall_k(\atm)$ must be represented by a deterministic safety
automaton. This is possible as $k$-coB\"uchi automata are
determinizable~\cite{DBLP:conf/cav/FiliotJR09}. 

\paragraph{Determinization} The determinization of $k$-co-B\"uchi automata $\atm$ relies on a simple
generalization of the subset construction: in addition to remembering
the set of states that can be reached by a prefix of a run while
reading an infinite word, the construction counts the maximal number
of times a run prefix that reaches a given state $q$ has visited
states labelled with color $1$ (remember that a run can visit at most
$k$ such states to be accepting). The states of the deterministic automaton are so-called
{\em counting functions}, formally defined for a co-B\"uchi automaton
$\atm=(Q,q_{{\sf init}},\Sigma,\delta,d)$ and $k \in \mathbb{N}$, as
the set noted $CF(\atm,k)$ of functions $f : Q \rightarrow
\{-1,0,1,\dots,k,k+1\}$. If $f(q)=-1$ for some state $q$, it means
that $q$ is inactive (no run of $\atm$ reach $q$ on the current
prefix). The initial counting function $f_{{\sf init}}$ maps all
$1$-colored initial states to $1$, all $0$-colored initial states to
$0$ and all other states to $-1$. We denote by
$\mathcal{D}(\atm,k)=(Q^\mathcal{D} = CF(\atm,k),q^\mathcal{D}_{{\sf
    init}}=f_{{\sf init}},\Sigma,\delta^\mathcal{D},Q^{\mathcal{D}}_{\sf usf})$ the
deterministic automaton obtained by this determinization procedure. 
\tocheck{We now provide a formal description below:}

\begin{definition}[Determinization with $CF(\atm,k)$]
\label{def:determinization}
Let $\atm=(Q,q_{{\sf init}},\Sigma,\delta,d)$ be a co-B\"uchi automaton and $k \in \mathbb{N}$. 
We associate to the pair $(\atm,k)$, the deterministic safety automaton $\mathcal{D}(\atm,k)=(Q^\mathcal{D},q^\mathcal{D}_{{\sf init}},\Sigma,\delta^\mathcal{D},Q^{\mathcal{D}}_{\sf usf})$ where:
  \begin{enumerate}
      \item $Q^\mathcal{D}=CF(\atm,k)$ is the set of $k$-counting functions for $\atm$.
      
      \item $q^\mathcal{D}_{{\sf init}}=f_0$ where $f_0(q)=-1$ for all $q \not= q_{\sf init}$, and $f_0(q)=0$ for $q=q_{\sf init}$ and $d(q_{\sf init})=2$, and $f_0(q)=1$ for $q=q_{\sf init}$ and $d(q_{\sf init})=1$. Informally, the states that have been assigned the value $-1$ are inactive. Initially, only $q=q_{\sf init}$ is active. If it is labelled with color $1$, its counter equals $1$, otherwise it is equal to $0$.
      
      \item For all $f \in CF(\atm,k)$, and $\sigma \in \Sigma$, the
        transition function $\delta^\mathcal{D}$ is defined as
        follows: $\delta^\mathcal{D}(f_1,\sigma)=f_2$ where for all $q
        \in Q$, $f_2(q)=$ $$\min \left ( \left ( \max_{ q' \in Q: f_1(q')
                  \geq 0\land q \in \delta(q',\sigma)} f_1(q') \right
            )+x, k{+}1 \right ),\text{ with $x=1$ if $d(q)=1$, and $x=0$ if $d(q)=2$.}$$
      
      \item The set of unsafe counting functions is defined\footnote{It is easy to check that $Q^{\mathcal{D}}_{\sf usf}$ is a trap as required.} as 
        $Q^{\mathcal{D}}_{\sf usf}=\{ f \mid \exists q \in Q \cdot f(q)=k+1 \}$. 
      
  \end{enumerate}
The language defined by $\mathcal{D}(\atm,k)$ is the set of infinite words $w \in \Sigma^{\omega}$ such that the unique run of $\mathcal{D}(\atm,k)$ on $w$ never visits a state (counting function) $f$ 
such that $d(f)=1$. This (safety) language of infinite words is denoted by $L(\mathcal{D}(\atm,k))$. The size of $\mathcal{D}(\atm,k)$ is bounded by $k^{{\bf O}(|\atm|)}$.
\end{definition}

\begin{lemma}[$\mathcal{D}(\atm,k)$ correctness, \cite{DBLP:conf/cav/FiliotJR09}]
\label{lem:correctdeterm}
For all universal co-B\"uchi automaton $\atm$, for all $k \in \mathbb{N}$, $L^{\forall}_k(\atm)=L(\mathcal{D}(\atm,k))$.
\end{lemma}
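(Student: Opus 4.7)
The plan is to establish, by induction on the length of a finite prefix $u\in\Sigma^*$, a tight invariant relating the counting function $f_u = \delta^{\mathcal{D},*}(f_0,u)$ reached by $\mathcal{D}(\atm,k)$ after reading $u$ to the runs of $\atm$ on $u$, and then derive both inclusions from this invariant.

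\medskip

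\noindent\textbf{Key invariant.} For every $u\in\Sigma^*$ and every $q\in Q$, let $R_u(q)$ denote the set of run prefixes of $\atm$ on $u$ ending in $q$, and for $\pi\in R_u(q)$ let $c_1(\pi)$ denote the number of positions of $\pi$ whose state is labelled with colour $1$ (counting the initial state when applicable). The invariant is
\[
  f_u(q) \;=\;
  \begin{cases}
    -1 & \text{if } R_u(q) = \varnothing,\\
    \min\bigl(k{+}1,\ \max_{\pi\in R_u(q)} c_1(\pi)\bigr) & \text{otherwise.}
  \end{cases}
\]
The base case $u=\epsilon$ holds by the definition of $f_0$: only $q_{\sf init}$ is active, with value $0$ or $1$ depending on $d(q_{\sf init})$. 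For the inductive step, suppose the invariant holds for $u$ and let $\sigma\in\Sigma$. Any $\pi'\in R_{u\sigma}(q)$ factors uniquely as $\pi\cdot q$ for some $q'$ with $q\in\delta(q',\sigma)$ and $\pi\in R_u(q')$, and $c_1(\pi') = c_1(\pi) + [d(q)=1]$. Taking the maximum over all predecessors $q'$ with $R_u(q')\neq\varnothing$, and then capping at $k{+}1$, exactly matches the formula defining $\delta^{\mathcal{D}}(f_u,\sigma)(q)$. The case $R_{u\sigma}(q)=\varnothing$ corresponds to no such predecessor existing, i.e.\ the max is over the empty set, which is handled by the $f_u(q')\geq 0$ guard in the definition of $\delta^{\mathcal{D}}$; we adopt the convention that this yields $-1$.

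\medskip

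\noindent\textbf{From invariant to language equality.} By definition of $L(\mathcal{D}(\atm,k))$, a word $w\in\Sigma^\omega$ lies in $L(\mathcal{D}(\atm,k))$ iff no finite prefix $u\preceq w$ produces an unsafe counting function, i.e.\ iff for all $u\preceq w$ and all $q\in Q$, $f_u(q)\leq k$. Using the invariant, this is equivalent to saying that for every $u\preceq w$ and every run prefix $\pi$ of $\atm$ on $u$, $c_1(\pi)\leq k$. Since the number of colour-$1$ visits along an infinite run is the supremum of the colour-$1$ visits along its finite prefixes, this condition is in turn equivalent to: every infinite run of $\atm$ on $w$ visits colour-$1$ states at most $k$ times, i.e.\ $w\in L^{\forall}_k(\atm)$. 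The only mild subtlety is the $\Leftarrow$ direction of this last equivalence: if some finite prefix $\pi$ of a run visits colour $1$ more than $k$ times, any extension of $\pi$ into an infinite run of $\atm$ (which exists because $\delta$ is totally defined and $\delta(q,\sigma)\neq\varnothing$) witnesses $w\notin L^{\forall}_k(\atm)$.

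\medskip

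\noindent\textbf{Main obstacle.} The delicate part is handling the saturation at $k{+}1$ correctly in the inductive step: once $f_u(q')=k{+}1$, the value must propagate as $k{+}1$ into all successors without losing the semantic meaning ``some run has already visited colour $1$ at least $k{+}1$ times''. This is guaranteed by the fact that the truncated max $\min(k{+}1,\cdot)$ commutes, in the relevant sense, with the increment $[d(q)=1]$: if $\max_{q'} c_1(\pi_{q'}) \geq k{+}1$ then so does $\max_{q'} c_1(\pi_{q'}) + [d(q)=1]$. Once this is observed, the inductive step goes through cleanly, and the two inclusions follow immediately.
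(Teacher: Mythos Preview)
The paper does not provide its own proof of this lemma; it is stated with a citation to~\cite{DBLP:conf/cav/FiliotJR09} and no argument is given in the present paper. Your proof sketch is correct and follows the standard route: the inductive invariant you state is exactly the intended semantics of counting functions, and your handling of the saturation at $k{+}1$ (the one place where the truncated max must be shown to still track the predicate ``some run has exceeded $k$ colour-$1$ visits'') is the right point to isolate. The only cosmetic remark is that you should make explicit the convention that an empty max in the definition of $\delta^{\mathcal{D}}$ yields $-1$ (you do mention this, but it is worth stating as part of the invariant rather than as a side remark), and that your use of a single initial state $q_{\sf init}$ matches the instantiation used in Definition~\ref{def:determinization} even though the general automaton model in Section~\ref{sec:prelims} allows a set $Q_{\sf init}$.
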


We can now give algorithm \textsc{SynthLearn}, in pseudo-code, as
Algo~\ref{algo:synt}.

\setlength{\textfloatsep}{0.1cm}
\begin{algorithm}[h]

     \DontPrintSemicolon
    \KwIn{A universal co-B\"uchi automaton $\atm$ with $n$ states, a finite
      set of examples $E\subseteq (\inputs.\outputs)^*$, a
      generalizing strategy $\sigma_G$ and a completion
      strategy $\sigma_C$.}
    \KwOut{A Mealy machine $\MM$ realizing $L^\forall(\atm)$ and such that $E\subseteq L(\MM)$ if it exists, otherwise UNREAL. }

    $K \gets nm|\inputs|2^{{\bf O}( n\log_2 n )}$;  $k \gets 0$;\tcp*{$m$ is the
      size of $E$}
    
    \While{$k \leq K$}
    {
    \If
    {${\text{\textsc{SynthSafe}}}(E,\mathcal{D}(\atm,k),\sigma_C,\sigma_G)\neq{\it UNREAL}$}
    {$\textbf{return~} \textsc{SynthSafe}(E,\mathcal{D}(\atm,k),\sigma_C,\sigma_G)$ 
      }
      
    $k \gets k+1;$
    
}
$\textbf{return~} \textit{UNREAL}$
   
\caption{{\sc SynthLearn}($E$,$\atm$,$\sigma_G$,$\sigma_C$) --
  synthesis algorithm from $\omega$-regular specification and examples
  by a reduction to safety      \label{algo:synt}}
\end{algorithm}

\paragraph{Complexity considerations and improving the upper-bound}
As the automaton $\mathcal{D}(\atm,k)$ is in the worst-case
exponential in the size of the automaton $\atm$, a direct application
of Theorem~\ref{thm:terminationcorrectness} yields a doubly exponential time procedure. 
This complexity is a consequence of the fact that the  $\PP$-realizability
problem is Exptime in the size of the deterministic automaton as shown
in Theorem~\ref{thm:sizeSol}, and that the termination of the
completion procedure is also worst-case exponential in the size of the
deterministic automaton. 

We show that we can improve the complexity of each call to
\textsc{SynthSafe} and obtain an optimal worst-case (single) exponential
complexity. We provide an
algorithm to check $\PP$-realizability of a specification
$\spec=L^{\forall}_k(\atm)$ that runs in time singly exponential in the size of
$\atm$ and polynomial in $k$ and the size of $\PP$. 
Second, we provide a finer complexity analysis for the termination of
the completion algorithm, which exhibits a worst case exponential time
in $|\atm|$. Those two improvements lead to an overall complexity of
\textsc{SynthLearn} which is exponential in the size of the
specification $\atm$ and polynomial in the set of examples $|E|$. This
is provably worst-case optimal because for $E=\emptyset$ the problem
is already {\sc ExpTime-Complete}. 

We explain next the first improvement, the
upper-bound for termination.

\label{app:termination-kco}

We establish an upper-bound
on the number of iterations needed to complete the preMealy machine
output by the procedure \textsc{Gen} at the end of the first phase of our
synthesis algorithm (in the case of a specification given by a
$k$-coB\"uchi automaton $\atm$ is realizable). To obtain the required
exponential bound, we rely on the maximal length of chains of
antichains of counting functions partially ordered as follows: let $A
\in \anti_{\preceq}(CF(\atm,k))$ and $B \in
\anti_{\preceq}(CF(\atm,k))$, then $A \trianglelefteq_{{\sf CF}} B$ if
and only if $\forall f \in A \cdot \exists g \in B \cdot f \preceq g$.
The length of those chains is bounded by $k^{{\bf O}(n)}$:

\begin{lemma}\label{lem:anticf}
Any $\vartriangleleft_{{\sf CF}}$-chain in $(\anti_{\preceq}(CF(\atm,k)),\trianglelefteq_{{\sf CF}})$
     has length at most $k^{{\bf O}(n)}$ where $n$ is the number of states in $\atm$. 
\end{lemma}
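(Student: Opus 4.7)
The plan is to apply the standard bijection between antichains of a finite poset and its downward-closed subsets, and then bound the number of downward-closed subsets of $CF(\atm,k)$ by its cardinality.

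\textbf{Step 1 (antichains vs.\ down-closures).} For any antichain $A\in \anti_\preceq(CF(\atm,k))$, let $\downarrow\! A = \{f\in CF(\atm,k)\mid \exists g\in A,\ f\preceq g\}$. The map $A\mapsto \downarrow\! A$ is a bijection between $\anti_\preceq(CF(\atm,k))$ and the family of downward-closed subsets of $(CF(\atm,k),\preceq)$: injectivity follows because an antichain coincides with the set of $\preceq$-maximal elements of its down-closure, and surjectivity is standard.

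\textbf{Step 2 (order correspondence).} I would show that $A\trianglelefteq_{\sf CF} B$ iff $\downarrow\! A \subseteq \downarrow\! B$. The forward direction uses transitivity of $\preceq$: any $f\preceq a\in A$ satisfies $f\preceq a\preceq b$ for some $b\in B$, so $f\in \downarrow\!B$. The converse direction is immediate by picking $f = a\in A\subseteq \downarrow\!A$. Combined with Step~1, this yields that $A\vartriangleleft_{\sf CF} B$ iff $\downarrow\! A \subsetneq \downarrow\! B$. Hence any $\vartriangleleft_{\sf CF}$-chain of antichains has the same length as a strictly increasing $\subseteq$-chain of subsets of $CF(\atm,k)$.

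\textbf{Step 3 (counting).} Any strictly increasing chain of subsets of a finite set $S$ has length at most $|S|$. So it remains to bound $|CF(\atm,k)|$. A counting function is an arbitrary map from $Q$ (of size $n$) to the $(k{+}3)$-element set $\{-1,0,1,\dots,k,k{+}1\}$, giving $|CF(\atm,k)| = (k+3)^n = k^{O(n)}$, as claimed.

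The argument is essentially bookkeeping once Step~1 is in hand; the only point that requires mild care is that $\preceq$ on $CF(\atm,k)$ is a genuine partial order (so that the antichain/down-closed bijection applies), which follows from its pointwise definition on the totally ordered codomain $\{-1,0,\dots,k{+}1\}$. No finer structural property of $\atm$ is needed beyond the cardinality count of Step~3.
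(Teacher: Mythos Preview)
Your proposal is correct and follows essentially the same approach as the paper: both arguments pass from antichains to their downward closures, observe that $A\vartriangleleft_{\sf CF} B$ iff ${\downarrow}A\subsetneq{\downarrow}B$, and then bound the length of any strict $\subseteq$-chain by the cardinality $|CF(\atm,k)| = (k+3)^n = k^{O(n)}$. The paper's proof is slightly terser (it does not spell out the bijection of your Step~1, which is indeed not needed for the chain-length bound), but the substance is identical.
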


\begin{proof}
    Just as in the proof of Lemma~\ref{lem:anti}, for an antichain $X
    = \{f_1,\dots,f_n\}$ of counting functions, we define
    ${\downarrow} X$ its downward closure with respect to
    $\preceq$. Then, given another antichain $Y$, we get that
    $X\vartriangleleft_{{\sf CF}} Y$ iff ${\downarrow} X\subsetneq
    {\downarrow} Y$. Therefore the maximal length of a 
    $\vartriangleleft_{{\sf CF}}$-chain is bounded by the number of
    counting functions, which is $k^{{\bf O}(n)}$. \qed
\end{proof}

\paragraph{Checking $\PP$-realizability of a specification $\spec=L^{\forall}_k(\atm)$}

To obtain a better complexity, we exploit some structure that exists
in the deterministic automaton $\mathcal{D}(\atm,k)$. First, the set
of counting functions $CF(\atm,k)$ forms a complete lattice for the
partial order $\preceq$ defined by $f_1\preceq f_2$ if $f_1(q)\leq
f_2(q)$ for all states $q$. We denote
by $f_1\bigsqcup f_2$ the least upper-bound of $f_1,f_2$, and by $W_k^\atm$ the set of counting functions $f$ such that the
specification $L(\mathcal{D}(\atm,k)[f])$ is realizable (i.e. the
specification defined by $\mathcal{D}(\atm,k)$ with initial state
$f$). It is known that $W_k^\atm$ is downward-closed for
$\preceq$~\cite{DBLP:conf/cav/FiliotJR09}, because for all $f_1\preceq f_2$, any machine realizing 
$L(\mathcal{D}(\atm,k)[f_2])$ also realizes
$L(\mathcal{D}(\atm,k)[f_1])$. Therefore, $W_k^\atm$ can be
represented compactly by the antichain $\lceil W^{\atm}_k \rceil$ of its $\preceq$-maximal
elements. Now, the first improvement is obtained thanks to the
following result:

\begin{lemma}\label{lem:efficientpprealizability}
Given a preMealy $\PP=(M,m_0,\Delta)$, a co-B\"uchi automata $\atm$,
and $k \in \mathbb{N}$. For all states $m\in M$, we let
$F^*(m)=\bigsqcup \{ f \mid \exists u \in (\inputs \outputs)^* \cdot
\transS_{\PP}^*(m_0,u) = m \land \transS_{\mathcal{D}}(f_0,u)=f
\}$. Then, $L(\mathcal{D}(\atm,k))$ is $\PP$-realizable iff there does not exist $m \in M$ such that $F^*(m) \not \in W^{\atm}_k$.
\end{lemma}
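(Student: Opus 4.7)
The plan is to reduce the claim to the characterization of $\PP$-realizability established in the proof of Theorem~\ref{thm:sizeSol} and to exploit the lattice structure on counting functions. Two algebraic facts are central. First, $\transS_\mathcal{D}$ commutes with the componentwise join $\bigsqcup$: this is a routine calculation from the formula defining $\delta^\mathcal{D}$, with some care for the boundary cases of empty max (inactive states) and the saturation bound $k{+}1$. Second, $W^\atm_k$ is $\preceq$-downward closed, which follows from the antitonicity $L(\mathcal{D}[g]) \subseteq L(\mathcal{D}[f])$ whenever $f \preceq g$ (more counter pressure only rules words out). I expect the verification of commutation in these corner cases to be the only technical nuisance.

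From these two facts I would derive the key identity
\[
\bigcap_{i \in I} L(\mathcal{D}[f_i]) \;=\; L\!\left(\mathcal{D}\!\left[\,\textstyle\bigsqcup_{i \in I} f_i\right]\right)
\]
for any family $(f_i)_{i\in I}$ of counting functions. The inclusion $\supseteq$ is immediate from the antitonicity noted above. For $\subseteq$, suppose that a run from $\bigsqcup_i f_i$ on an $\omega$-word $w$ reaches an unsafe counting function after some prefix $u$; using commutation with join inductively, that function equals $\bigsqcup_i \transS_\mathcal{D}^*(f_i, u)$, so some coordinate is already $k{+}1$ in one of the $\transS_\mathcal{D}^*(f_i, u)$, contradicting $w \in L(\mathcal{D}[f_i])$.

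With this identity in hand, I would invoke the characterization from the proof of Theorem~\ref{thm:sizeSol}: $\spec$ is $\PP$-realizable iff $L_\omega(\PP) \subseteq \spec$ and for every hole $(p,\inp)$ there exist $\out_h \in \outputs$ and a Mealy machine realizing $\bigcap_{u \in \textsf{Left}_p}(u\inp\out_h)^{-1}\spec$. Since $u^{-1}L(\mathcal{D}) = L(\mathcal{D}[\transS_\mathcal{D}^*(f_0,u)])$, combining the identity with the commutation of $\transS_\mathcal{D}$ and $\bigsqcup$ rewrites this intersection as $L(\mathcal{D}[\transS_\mathcal{D}^*(F^*(p), \inp\out_h)])$. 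The hole condition thus becomes: for every hole $(p,\inp)$ there is $\out_h$ with $\transS_\mathcal{D}^*(F^*(p), \inp\out_h) \in W^\atm_k$, which is precisely equivalent to $F^*(p) \in W^\atm_k$ by the standard one-step safety-game characterization of $W^\atm_k$ (a safe counting function lies in $W^\atm_k$ iff every input admits some realizable output).

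It remains to cover the non-hole states. Forward direction: given a Mealy extension $\MM$ of $\PP$ realizing $\spec$, the submachine $\MM_m$ rooted at any state $m$ realizes $\bigcap_{u \in \textsf{Left}_m} u^{-1}\spec = L(\mathcal{D}[F^*(m)])$ by the identity, so $F^*(m) \in W^\atm_k$ for every $m \in M$. Converse: downward-closure of $W^\atm_k$ automatically propagates the condition along existing transitions, since $\trans_\PP(p,\inp) = (\out,q)$ implies $\transS_\mathcal{D}^*(F^*(p), \inp\out) \preceq F^*(q)$ by definition of $F^*$; moreover the safety requirement $L_\omega(\PP) \subseteq \spec$ comes for free, because every $F^*(m)$ being in $W^\atm_k$ is safe and hence so are all the individual $f_u$ in its defining join. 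Thus the pointwise test on $F^*$ exactly captures $\PP$-realizability.
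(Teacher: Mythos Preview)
Your proposal is correct and follows essentially the same route as the paper: both reduce to the hole-based characterization from the proof of Theorem~\ref{thm:sizeSol}, and both rely on the identity $\bigcap_i L(\mathcal{D}[f_i]) = L(\mathcal{D}[\bigsqcup_i f_i])$ (stated in the paper as Lemma~\ref{lem:interup}) together with the $\preceq$-downward closure of $W^\atm_k$. Your write-up is in fact more explicit than the paper's on two points: you spell out the one-step safety-game characterization that absorbs the existential over $\out_h$, and you give a direct argument (via the submachine $\MM_m$ rooted at $m$) for why the forward direction yields $F^*(m)\in W^\atm_k$ at \emph{every} state and not just at holes, whereas the paper leaves this implicit.
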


It is easily shown that the operator $F^*$ can be computed in ptime.
Thus, the latter lemma implies that there is a polynomial time algorithm in $|\PP|$, $|\atm|$, $k
\in \mathbb{N}$, and the size of $\lceil W^{\atm}_k \rceil$ to check
the $\PP$-realizability of $L^{\forall}(\atm)$. Formal details can be
found in App.~\ref{app:improve-ppcheck}.

We end this subsection by summarizing the behavior of our synthesis algorithm for $\omega$-regular specifications defined as  universal co-B\"uchi automata.

\begin{thm}
\label{thm:onlypoly}
Given a universal coB\"uchi automaton $\atm$ and a set of examples
$E$, the synthesis algorithm \textsc{SynthLearn} returns,  if it
exists, a Mealy machine $\MM$ such that $E \subseteq L(\MM)$ and
$L_{\omega}(\MM) \subseteq L^{\forall}(\atm)$, in worst-case
exponential time in the size of $\atm$ and polynomial in the size of
$E$. Otherwise, it returns {\it UNREAL}.
\end{thm}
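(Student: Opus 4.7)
My plan is to split the argument into two halves: correctness of the algorithm's output, and a refined complexity analysis that avoids the naive doubly-exponential bound one would get by applying Theorem~\ref{thm:terminationcorrectness} directly to $\mathcal{D}(\atm,k)$.

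For correctness, I would proceed in both directions. If \textsc{SynthLearn} returns a Mealy machine $\MM$, then $\MM$ was produced by a call to \textsc{SynthSafe}$(E,\mathcal{D}(\atm,k),\sigma_G,\sigma_C)$ for some $k\leq K$. Since $\mathcal{D}(\atm,k)$ is a deterministic safety automaton recognising $L^\forall_k(\atm)$ by Lemma~\ref{lem:correctdeterm}, Theorem~\ref{thm:terminationcorrectness} ensures $E\subseteq L(\MM)$ and $L_\omega(\MM)\subseteq L^\forall_k(\atm)\subseteq L^\forall(\atm)$, the last inclusion being Lemma~\ref{lem:mono}. Conversely, if some Mealy machine realising $L^\forall(\atm)$ with $E$ as prefixes exists, then $L^\forall(\atm)$ is $\textsf{PTA}(E)$-realizable, and Theorem~\ref{thm:transferpreal} yields that $L^\forall_K(\atm)$ is $\textsf{PTA}(E)$-realizable for the chosen bound $K = nm|\inputs|2^{O(n\log n)}$. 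By Theorem~\ref{thm:terminationcorrectness} applied with a lazy $\sigma_C$, the iteration at $k=K$ then returns a Mealy machine, so the loop exits positively for some $k\leq K$ rather than reaching the UNREAL branch.

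For the complexity, I would bound the outer loop by $K+1$ iterations, which is exponential in $n=|\atm|$ and polynomial in $m=|E|$. At each iteration $k$, I would first precompute the symbolic antichain $\lceil W^\atm_k\rceil$ of the winning set via a standard fixed-point on counting functions, in time exponential in $n$ and polynomial in $k$. Thanks to Lemma~\ref{lem:efficientpprealizability}, once this antichain is available, every $\PP$-realizability call made inside \textsc{SynthSafe} reduces to computing the joins $F^*(m)$ and checking membership in the downward closure of $\lceil W^\atm_k\rceil$, which is polynomial in $|\PP|$, $|\atm|$, $k$ and $|\lceil W^\atm_k\rceil|$. The generalisation phase performs $O(|E|^2)$ merge attempts and each triggers such a check, while the completion phase terminates within $k^{O(n)}$ steps by Lemma~\ref{lem:anticf}, each step being polynomial. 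Summing over all $k\leq K$ yields total time exponential in $|\atm|$ and polynomial in $|E|$, as claimed.

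The main obstacle I foresee is transporting the chain-length bound of Lemma~\ref{lem:anticf} into the termination argument of \textsc{Comp}. The generic termination proof of Theorem~\ref{thm:terminationcorrectness} tracks subsets of reachable states of the underlying safety automaton; here, I must re-cast the progress measure on antichains of counting functions and verify that each new transition added by the lazy completion strategy strictly moves the associated antichain up $\trianglelefteq_{\sf CF}$. Once that monotonicity is established, the $k^{O(n)}$ bound on chain length applies uniformly across all $k\leq K$, and combining it with the polynomial cost of each iteration closes the complexity claim.
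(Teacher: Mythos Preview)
Your proposal is correct and follows essentially the same route as the paper: correctness via the safety reduction (Theorem~\ref{thm:transferpreal}, Lemma~\ref{lem:correctdeterm}, Lemma~\ref{lem:mono}, Theorem~\ref{thm:terminationcorrectness}), and complexity via precomputing $\lceil W^\atm_k\rceil$, replacing every $\PP$-realizability test by the polynomial fixpoint check of Lemma~\ref{lem:efficientpprealizability}, and bounding the completion phase by the chain length of Lemma~\ref{lem:anticf}. The obstacle you flag---recasting the progress measure of Lemma~\ref{lem:terminationintermediate} on antichains of counting functions rather than on antichains of subsets---is exactly the adaptation the paper carries out, so your plan lines up with the paper's argument.
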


\tocheck{Notice that Alg.~\ref{algo:synt} calls Alg.~\ref{algo:gen} which itself calls the procedure that checks $\PP$-realizability and checking $\PP$-realizability is in polynomial time as we compute the fixpoint and check if it is safe.}

\paragraph{Specifications given as an LTL formula}\label{subsec:ltl}
We are now in position to apply Alg.~\ref{algo:synt} to a
specification given as LTL formula $\varphi$. Indeed, thanks to the
results of the subsection above, to provide an algorithm for LTL
specifications, we only need to translate $\varphi$ into a universal
co-B\"uchi automaton. This can be done according to the next lemma. It
is well-known (see~\cite{DBLP:conf/focs/KupfermanV05}), that given an LTL formula $\varphi$ over two sets of atomic propositions $P_{\inputs}$ and $P_{\outputs}$, we can construct in exponential time a universal co-B\"uchi automaton $\atm_{\varphi}$ such that $L^{\forall}(\atm_{\varphi})=\sem{\varphi}$, i.e. $\atm$ recognizes exactly the set of words $w \in (2^{P_\inputs} 2^{P_\outputs})^{\omega}$ that satisfy $\varphi$.
We then get the following theorem that gives the complexity of our
synthesis algorithm for a set of examples $E$ and an LTL formula
$\varphi$, complexity which is provably worst-case optimal as deciding if $\sem{\varphi}$ is realizable with $E=\emptyset$, i.e. the plain LTL realizability problem, is already {\sc
  2ExpTime-Complete}~\cite{DBLP:conf/icalp/PnueliR89}.

\begin{thm}
\label{thm:complexityLTL}
Given an LTL formula $\varphi$ and a set of examples $E$, the
synthesis algorithm \textsc{SynthLearn} returns a Mealy machine $\MM$
such that $E \subseteq L(\MM)$ and $L_{\omega}(\MM) \subseteq
\sem{\varphi}$ if it exists, in worst-case  doubly exponential time in
the size of $\varphi$ and polynomial in the size of $E$. Otherwise  it
returns {\it UNREAL}.
\end{thm}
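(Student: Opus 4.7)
The plan is to derive Theorem~\ref{thm:complexityLTL} as a direct corollary of Theorem~\ref{thm:onlypoly} via the standard translation from LTL to universal co-B\"uchi automata. First, I would invoke the classical construction of~\cite{DBLP:conf/focs/KupfermanV05}: given an LTL formula $\varphi$ over input atomic propositions $P_{\inputs}$ and output atomic propositions $P_{\outputs}$, one constructs in exponential time a universal co-B\"uchi automaton $\atm_\varphi$ with $2^{{\bf O}(|\varphi|)}$ states such that $L^{\forall}(\atm_\varphi) = \sem{\varphi}$ as subsets of $(2^{P_{\inputs}} 2^{P_{\outputs}})^{\omega}$. This is the only LTL-specific ingredient; all remaining work is automata-theoretic and was handled in Theorem~\ref{thm:onlypoly}.

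Next, I would run \textsc{SynthLearn}$(E, \atm_\varphi, \sigma_G, \sigma_C)$ (Algo~\ref{algo:synt}) on the resulting universal co-B\"uchi automaton together with $E$. Correctness is immediate: by Theorem~\ref{thm:onlypoly}, the algorithm returns a Mealy machine $\MM$ with $E \subseteq L(\MM)$ and $L_\omega(\MM) \subseteq L^{\forall}(\atm_\varphi) = \sem{\varphi}$ whenever such a machine exists, and returns \textit{UNREAL} otherwise. Thus the input-output specification of Theorem~\ref{thm:complexityLTL} is satisfied.

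For the complexity bound, I would compose the two estimates. Let $n = |\atm_\varphi| = 2^{{\bf O}(|\varphi|)}$ and let $m = |E|$. By Theorem~\ref{thm:onlypoly}, \textsc{SynthLearn} runs in time exponential in $n$ and polynomial in $m$. Substituting $n = 2^{{\bf O}(|\varphi|)}$ gives an overall running time of $2^{2^{{\bf O}(|\varphi|)}} \cdot \mathrm{poly}(m)$, i.e.\ doubly exponential in $|\varphi|$ and polynomial in $|E|$, as claimed. The translation from $\varphi$ to $\atm_\varphi$ only adds a single-exponential preprocessing cost, which is dominated by the subsequent call.

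The main step requiring care is ensuring that the parameters in the bound of Theorem~\ref{thm:onlypoly} depend polynomially on $|E|$ even though $n$ is already exponential in $|\varphi|$; this is exactly what Theorem~\ref{thm:onlypoly} guarantees, so no further argument is needed. Worst-case optimality follows from the classical {\sc 2ExpTime}-completeness of plain LTL realizability~\cite{DBLP:conf/icalp/PnueliR89}, which is the special case $E = \emptyset$ of our problem; hence the doubly exponential upper bound cannot be improved in general. I anticipate no real obstacle beyond carefully threading these known bounds together.
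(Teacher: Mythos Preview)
Your proposal is correct and follows essentially the same approach as the paper: translate $\varphi$ into a universal co-B\"uchi automaton $\atm_\varphi$ of exponential size via~\cite{DBLP:conf/focs/KupfermanV05}, then invoke Theorem~\ref{thm:onlypoly} on $(\atm_\varphi,E)$ to obtain correctness and the doubly exponential bound in $|\varphi|$, polynomial in $|E|$. The paper presents this even more tersely, simply stating the translation and the theorem without spelling out the complexity composition you detail.
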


\section{Implementation and Case study}

We have implemented the algorithm \textsc{SynthLearn} of the previous section in a
prototype tool, in Python, using the tool
\textsc{Acacia-Bonzai}~\cite{DBLP:journals/corr/abs-2204-06079} to
manipulate antichains of counting functions. We first explain the heuristics we have used to define
state-merging and completion strategies, and then demonstrate how our
implementation behaves on a case study whose goal is to synthesize the
controller for an elevator. The interested reader can find in
App.~\ref{app:ebike} other case studies, including a controller for
an e-bike and two variations on mutual exclusion.
\subsection{Merging and completion strategies}
To implement the algorithms of previous sections, we need to fix strategies to choose among candidates for possible merges during the generalization phase and possible choices of outputs during the completion phase. The strategies that we have implemented are as follows.

First, we consider a \emph{merging} strategy $\sigma_G$ which is defined over $4$-tuples $(\MM,m,E,X)$ where $\MM$ is a preMealy machine, $m$ is a state of $\MM$, $E$ is a set of examples and $X$ is subset of states of $\MM$ for which a merge is possible, and returns a state of $X$ with the following properties.
Given an example $e$ that leads in the current preMealy machine to a state $m$ and a set of candidates $\{m_1,m_2, \dots, m_k\}$ for merging as computed in line~7 of Algorithm~\ref{algo:gen}, we associate to each state $m_i$ the counting functions computed by the fixed point $F^*$ on the current preMealy machine. Our merging strategy then choose one state $m_i$ labelled with a $\preceq$-minimal elements in this set. Intuitively, favouring minimal counting functions preserves as much as possible the set of behaviors that are possible after the example $e$. Indeed, by Lemma~\ref{lem:CFmono}, we know that if $f_1 \preceq f_2$ then $L(\mathcal{D}(\atm,k)[f_2]) \subseteq L(\mathcal{D}(\atm,k)[f_1])$.

Second, we consider a \emph{completion strategy} $\sigma_C$ which is a function defined over all triples $(\MM, m, \inp, X)$ where $\MM$ is the current preMealy machine with set of states $M$, $(m,\inp)$ is a hole of $\MM$, and $X\subseteq \outputs\times (M\cup \{\textsf{fresh}\})$ is a list of candidate pairs $(\out,m')$. It returns an element of $X$, i.e., $\sigma_C(\MM,m,\inp,X)\in X$ and it has the following properties. Remember that, for ensuring termination, the completion strategy $\sigma_C$ must be \emph{lazy}, i.e. if $X\setminus(\outputs\times \{\textsf{fresh}\})\neq \varnothing$, then $\sigma_C(\MM,m,\inp,X)\not\in \outputs\times \{\textsf{fresh}\}$.
Then among the set of possible candidates $\{ (\out_1,m_1),
(\out_2,m_2), \dots, (\out_k,m_k) \}$, we again favour states
associated with $\preceq$-minimal counting functions computed by $F^*$
on the current preMealy machine.
\paragraph{Merging and completion strategies implemented in our
  prototype} Our tool implements a \emph{merging} strategy $\sigma_G$
where, given an example $e$ that leads in the current preMealy machine
to a state $m$ and a set $\{m_1,m_2, \dots, m_k\}$ of candidates for
merging, as computed in line~7 of Algorithm~\ref{algo:gen}, we choose
state $m_i$ with a $\preceq$-minimal counting function $F^*(m_i)$, as
defined in Lemma~\ref{lem:efficientpprealizability}. Intuitively,
favouring minimal counting functions preserves as much as possible the
set of behaviors that are possible after the example $e$.

Our tool also implements a \emph{completion strategy} $\sigma_C$, where for every hole $(m,\inp)$ of the preMealy machine $\MM$ and out of the list of candidate pairs, selects an element which again favour states associated with $\preceq$-minimal counting functions.

\subsection{Case Studies}

\paragraph{Lift Controller Example} We illustrate how to use our tool to construct a suitable controller for a two-floor
elevator system.


Considering two floors is sufficient enough to
illustrate most of the main difficulties of a more general elevator. Inputs of the controller are given by two  atomic propositions $\texttt{b0}$ and $\texttt{b1}$, which are true whenever the button at
floor 0 (resp. floor 1) is pressed by a user. Outputs are given by the
atomic propositions $\texttt{f0}$ and $\texttt{f1}$, true whenever the
elevator is at floor 0 (resp. floor 1); and 
$\texttt{ser}$, true whenever the elevator is \textit{serving} the
current floor (i.e. doors are opened). This controller should ensure
the following core properties:

\begin{enumerate}
    \item \textbf{Functional Guarantee:} whenever a button of floor 0
      (resp. floor 1) is pressed, the elevator must eventually
      \textit{serve} floor 0 (resp. floor 1): \begin{verbatim}
        G(b0 -> F (f0 & ser)) & G(b1 -> F (f1 & ser)) \end{verbatim}
    \item \textbf{Safety Guarantee:} The elevator is always at one
      floor exactly: \texttt{G(f0<->!f1)}
    \item \textbf{Safety Guarantee:} The elevator cannot transition between
      two floors when doors are opened: \texttt{G((f0 \& ser) -> X(!f1)) \& G((f1 \& ser) -> X(!f0))}
    \item \textbf{Initial State:} The elevator should be in floor 0 initially: \texttt{f0}
\end{enumerate}

Additionally, we make the following \textbf{assumption}: whenever a button of floor 0 (or floor 1) is pressed, it must remain pressed until the floor has been served, i.e., \texttt{G(b0 -> (b0 W (f0 \& ser))) \& G(b1 -> (b1 W (f1 \& ser))).}

Before going into the details of this example, let us explain the methodology that we apply to use our tool on this example. We start by providing only the high level specification $\varphi_{{\sf CORE}}$ for the elevator given above.
We obtain a first Mealy machine from the tool.
We then observe the machine to identify prefix of behaviours that we are unhappy with, and for which we can provide better alternative decisions.
Then we run the tool on $\varphi_{{\sf CORE}}$ and the examples that we have identified, and we get a new machine, and we proceed like that up to a point where we are satisfied with the synthesized Mealy machine.

\begin{figure}[t]
    \resizebox{\textwidth}{!}{
    \begin{tikzpicture}[->,>=stealth',shorten >=1pt,auto,node distance=3.8cm,
                    thick,inner sep=0pt,minimum size=0pt]
  \tikzstyle{every state}=[fill=gray!30,text=black,inner
  sep=2pt,minimum size=12pt]

        \node[state, initial] (q0) {$q_0$};
        \node[state, right of=q0] (q1) {$q_1$};
        \node[state, right of=q1, xshift=-4mm] (q2) {$q_2$};
        \node[state, right of=q2, xshift=-4mm] (q3) {$q_3$};
        \draw
            (q0) edge[loop above, align=center] node[] {$!\mathsf{b0}\;\&\;!\mathsf{b1}/\mathsf{f0}\;\&\;!\mathsf{f1}\;\&\;!\mathsf{ser}$\\$\mathsf{b0}\;\&\;!\mathsf{b1}/\mathsf{f0}\;\&\;!\mathsf{f1}\;\&\;\mathsf{ser}$} (q0)
            (q0) edge[above] node[xshift=1mm, yshift=-0.5mm]{$\mathsf{b1}/\mathsf{f0}\;\&\;!\mathsf{f1}\;\&\;!\mathsf{ser}$} (q1)
            (q1) edge[above] node[yshift=-0.5mm, xshift=-2mm]{$\mathsf{b1}/!\mathsf{f0}\;\&\;\mathsf{f1}\;\&\;\mathsf{ser}$} (q2)
            (q2) edge[loop above] node[]{$!\mathsf{b0}\;\&\;\mathsf{b1}/!\mathsf{f0}\;\&\;\mathsf{f1}\;\&\;\mathsf{ser}$} (q2)
            (q2) edge[bend right=20, above] node[yshift=0.3mm]{$!\mathsf{b0}\;\&\;!\mathsf{b1}/!\mathsf{f0}\;\&\;\mathsf{f1}\;\&\;!\mathsf{ser}$} (q0)
            (q2) edge[above]
            node[yshift=-0.5mm]{$\mathsf{b0}/!\mathsf{f0}\;\&\;\mathsf{f1}\;\&\;!\mathsf{ser}$} (q3)
            (q3) edge[bend left=12, above] node[]{$\mathsf{b0}/\mathsf{f0}\;\&\;!\mathsf{f1}\;\&\;\mathsf{ser}$} (q0);
    \end{tikzpicture}}
    \caption{Machine returned by our tool on the elevator
      specification w/o examples. Here, $q0$ represents the state where \texttt{f0} is served when required, $q1$ represents the state where \texttt{b1} is pending, $q2$ represents state where \texttt{f1} is served, $q3$ represents the state where \texttt{b0} is pending.\label{fig:elevatorcontrollernoexamples}}
\end{figure}
\begin{figure}[t]
\resizebox{\textwidth}{!}{
    \begin{tikzpicture}[->,>=stealth',shorten >=1pt,auto,node distance=3.8cm,
                    thick,inner sep=0pt,minimum size=0pt]
  \tikzstyle{every state}=[fill=gray!30,text=black,inner
  sep=2pt,minimum size=12pt]

        \node[state, initial,fill=red!30] (q0) {$q_0$};
        \node[state, right of=q0, xshift=2cm,fill=red!30] (q1) {$q_1$};
        \node[state, right of=q1, xshift=1cm] (q2) {$q_2$};
        \node[state, below right of=q0, yshift=1.5cm] (q3) {$q_3$};
        \draw
            (q0) edge[loop above, align=center, color=red!70] node[yshift=1mm, color=red!70] {$!\mathsf{b0}\;\&\;!\mathsf{b1}/\mathsf{f0}\;\&\;!\mathsf{f1}\;\&\;!\mathsf{ser}$\\$\mathsf{b0}\;\&\;!\mathsf{b1}/\mathsf{f0}\;\&\;!\mathsf{f1}\;\&\;\mathsf{ser}$} (q0)
            (q0) edge[bend right=20, above, color=red!70] node[yshift=1.5mm, color=red!70]{$!\mathsf{b0}\;\&\;\mathsf{b1}/\mathsf{f0}\;\&\;!\mathsf{f1}\;\&\;!\mathsf{ser}$} (q1)
            (q0) edge[left] node[yshift=-2mm, xshift=1mm]{$\mathsf{b0}\;\&\;\mathsf{b1}/\mathsf{f0}\;\&\;!\mathsf{f1}\;\&\;\mathsf{ser}$} (q3)
            (q1) edge[loop above, align=center, color=red!70] node[color=red!70]{$!\mathsf{b0}\;\&\;!\mathsf{b1}/!\mathsf{f0}\;\&\;\mathsf{f1}\;\&\;!\mathsf{ser}$\\!$\mathsf{b0}\;\&\;\mathsf{b1}/!\mathsf{f0}\;\&\;\mathsf{f1}\;\&\;\mathsf{ser}$} (q1)
            (q1) edge[bend right=20, below, color=red!70] node[yshift=-1.7mm, color=red!70]{$\mathsf{b0}\;\&\;!\mathsf{b1}/!\mathsf{f0}\;\&\;\mathsf{f1}\;\&\;!\mathsf{ser}$} (q0)
            (q1) edge[above]
            node[yshift=1mm]{$\mathsf{b0}\;\&\;\mathsf{b1}/!\mathsf{f0}\;\&\;\mathsf{f1}\;\&\;\mathsf{ser}$} (q2)
            (q2) edge[bend right=30, above] node[yshift=2mm]{$\mathsf{b0}/!\mathsf{f0}\;\&\;\mathsf{f1}\;\&\;!\mathsf{ser}$} (q0)
            (q3) edge[right] node[yshift=-2mm, xshift=-1mm]{$\mathsf{b1}/!\mathsf{f0}\;\&\;\mathsf{f1}\;\&\;!\mathsf{ser}$} (q1);
    \end{tikzpicture}}
    \caption{Mealy machine returned by our tool on the elevator specification with additional examples. The preMealy machine obtained after generalizing the examples and before completion is highlighted in red. This took 3.10s to be generated. \label{fig:elevatorcontroller}}
\end{figure}


Let us now give details. When our tool is provided with this specification without any examples, we get the machine depicted in \cref{fig:elevatorcontrollernoexamples}.
This solution makes the controller switch between floor 0 and floor 1, sometimes unnecessarily.
For instance, consider the trace \texttt{s \# \{!b0 \& !b1\}\{!f0 \& f1 \& !ser\} \# \{!b0 \& !b1\}\{f0 \& !f1 \& !ser\}}, where we let \texttt{s = \{!b0 \& b1\}\{f0 \& !f1 \& !ser\} \# \{!b0 \& b1\}\{!f0 \& f1 \& ser\}}.
Here, we note that the transition goes back to state $q_0$, where the elevator is at floor 0, when the elevator could have remained at floor 1 after serving floor 1.
The methodology described above allows us to identify the following three examples:
\begin{enumerate}
\item The 1st trace states that after serving floor 1, the elevator must remain at floor 1 as \texttt{b0} is false: \texttt{s \# \{!b0 \& !b1\}\{!f0 \& f1 \& !ser\} \# \{!b0 \& !b1\}\{!f0 \& f1 \& !ser\}}
\item The 2nd trace states that the elevator must remain at floor 0, as \texttt{b1} is false: \texttt{\{!b0 \& !b1\}\{f0 \& !f1 \& !ser\} \# \{!b0 \& !b1\}\{f0 \& !f1 \& !ser\}}
\item The 3rd trace ensures that after \texttt{s}, there is no unnecessary delay in serving floor 0 after floor 1 is served in \texttt{s}: \texttt{s \# \{b0 \& !b1\}\{!f0 \& f1 \& !ser\} \# \{b0 \& !b1\}\{f0 \& !f1 \& ser\}}
\end{enumerate}

\noindent With those additional examples, our tool outputs the machine of \cref{fig:elevatorcontroller}, which generalizes them and now ensures that moves of the elevator occur only when required. For example, the end of the first trace has been generalized into a loop on state $q_1$ ensuring that the elevator does not go to floor $0$ from floor $1$ unless \texttt{b0} is pressed.
We note that the number of examples provided here is much smaller than the theoretical (polynomial) upper bound proved in Theorem~\ref{thm:mealycompleteness}.

\section{Conclusion} In this paper, we have introduced the problem of {\em synthesis with a
  few hints}. This variant of the synthesis problem allows the user to
guide synthesis using examples of expected executions of
high quality solutions. Existing synthesis tools may not provide
natural solutions when fed with high-level specifications only, and as
providing complete specification goes against the very goal of
synthesis, we believe that our algorithm has a greater potential
in practice.

On the theoretical side, we have studied in details the
computational complexity of problems that need to be solved during our
new synthesis procedure. We have proved that our algorithm is
{\em complete} in the sense that any Mealy machine $\MM$ that realizes
a specification $\varphi$ can be obtained by our algorithm from
$\varphi$ and a sufficiently rich example set $E$, whose size is bounded polynomially in the size of $\MM$. On the practical side, we have implemented our
algorithm in a prototype tool that extends
Acacia-Bonzai~\cite{DBLP:journals/corr/abs-2204-06079} with tailored
state-merging learning algorithms. We have shown that only a small number
of examples are necessary to obtain high quality machines from high-level LTL specifications only. The tool is not fully optimized yet. While this is sufficient to demonstrate the relevance of our approach, we will work
on efficiency aspects of the implementation. 

As future works, we will consider extensions of the user interface to interactively and concisely specify sets of (counter-)examples to
solutions output by the tool. In the same line, an interesting future direction is to handle
parametric examples (e.g. elevator with the number of floors given as parameter). This would require to provide a concise syntax to define parametric
examples and to design efficient synthesis algorithm in this setting.  We will also consider the possibility
to formulate negative examples, as our theoretical results readily
extend to this case and their integration in the implementation should
be easy.

\bibliographystyle{splncs04}
\bibliography{main}

\newpage
\appendix

\section{Additonal examples}\label{app:ebike}

In this appendix, we provided the interested reader with three additional examples.

\subsection{Electric Bike Example}

Here, we aim to synthesize a Mealy machine for the controller of an electric
bike, in charge of regulating the braking system as well as the
e-assistance. Its inputs are the following atomic propositions:
\textsf{brake}, which is true whenever the cyclist activates the
handbrake; \textsf{full}, true when the battery sensor indicates the battery
is fully charged; and \textsf{speedy}, true whenever the bike speed is
above 25 km/h. Its outputs are the following atomic propositions:
\textsf{rim}, which is set to true whenever the rim brake is
activated, \textsf{recharge}, true whenever the motor brake is
activated and recharging the battery, and \textsf{assist}, whenever
the motor is assisting the cyclist. 
This controller should ensure the properties:
\begin{enumerate}
  \item whenever the battery is full, it cannot be recharged:

    \begin{verbatim}
G(full -> !recharge)
\end{verbatim}

  \item when the cyclist does not brake, none of the braking system is
    activated:

\begin{verbatim}
G(!brake -> (!rim & !recharge))
\end{verbatim}

    \item whenever the speed of the bike is above 25 km/h, the
      assistance is inactive:
\begin{verbatim}
G(speedy -> !assist)
\end{verbatim}

    \item if the cyclist brakes for at least three cycles, then one of
      the two braking systems should be active until the cyclist does
      not brake anymore:

                  \begin{verbatim}
G((brake & X brake & XX brake) -> XX((recharge | rim) W !brake)) 
\end{verbatim}

    \item  if the cyclist does not brake for at least three cycles,
      then the assistance should be active unless the speed is above
      the limit, and until she brakes again

\begin{verbatim}
G((!brake & X(!brake) & XX(!brake)) -> ((XX((!speedy -> assist) 
                                        W brake)))) 
\end{verbatim}

      \item whenever the motor status changes, it should be idle for
        at least one cycle:
            \begin{verbatim}
G(recharge -> X(!assist)) & G(assist -> X(!recharge))
\end{verbatim}
        \item assistance and brakes are mutually exclusive:
\begin{verbatim}
G((recharge | rim) -> !assist)
\end{verbatim}

\end{enumerate}


\begin{figure}[h]
    \centering
    \begin{tikzpicture}
        \node[state, initial] (q0) {$q_0$};
        \draw
            (q0) edge[loop above, align=center] node{$\mathsf{brk}/!\mathsf{as}\;\&\;!\mathsf{re}\;\&\;\mathsf{ri}$ \\ $!\mathsf{brk}\;\&\;!\mathsf{spd}/\mathsf{as}\;\&\;!\mathsf{re}\;\&\;!\mathsf{ri}$ \\
            $!\mathsf{brk}\;\&\;\mathsf{spd}/\mathsf{idle}$} (q0);
    \end{tikzpicture}
    \caption{Preliminary machine obtained by our tool (without additional examples) and Strix, on the e-bike specification.}
    \label{fig:notraces}
\end{figure}

When provided with this specification without any example to our tool,
or to Strix, we get a solution which never recharges the battery. It
has a single state on which it loops with the labels\footnote{For the sake of readability, we have replaced: 
\begin{itemize}
    \item the output edge label $!\mathsf{as}\;\&\;!\mathsf{re}\;\&\;!\mathsf{ri}$ with the term $\mathsf{idle}$
    \item the labels $\mathsf{assist}$ with $\mathsf{as}$, $\mathsf{recharge}$ with $\mathsf{re}$, $\mathsf{rim}$ with $\mathsf{ri}$, $\mathsf{brake}$ with $\mathsf{br}$ and $\mathsf{full}$ with $\mathsf{ful}$.
\end{itemize}}
$\mathsf{brk}/!\mathsf{as}\;\&\;!\mathsf{re}\;\&\;\mathsf{ri}$,
$!\mathsf{brk}\;\&\;!\mathsf{spd}/\mathsf{as}\;\&\;!\mathsf{re}\;\&\;!\mathsf{ri}$,
and $!\mathsf{brk}\;\&\;\mathsf{spd}/\mathsf{idle}$. It is depicted in 
Fig.~\ref{fig:notraces}. To obtain a better machine, without specifying formally when exactly the battery should or should not be recharged, we 
provide the next two simple scenarios to our tool (that are counter-examples to the first machine):
\begin{enumerate}
  \item The first scenario describes a trace where, when the cyclist activates the handbrake (\texttt{brk} becomes true), and the battery is not fully charged (\texttt{ful} is false), then the motor brake is activated to recharge the battery while braking (\texttt{re} is set to true): \begin{verbatim}  {!brk,spd}{!as,!re,!ri} # {brk,!ful}{!as,re,!ri}
  # {brk,!ful}{!as,re,!ri} \end{verbatim}
  \item The second scenario describes a trace where, when the cyclist activates the handbrake (\texttt{brk} becomes true), and the battery is fully charged (\texttt{ful} is true), then the rim brake is activated (\texttt{ri} is set to true):\begin{verbatim}  {brk,ful}{!as,!re,ri} \end{verbatim}
\end{enumerate}
Note that, unlike outputs, inputs are not complete in those two examples: the first example does not specify whether at the first step, \texttt{ful} is true or not.
Similarly, the notation \texttt{\{brk,!ful\}} does not specify whether \texttt{spd} is true or not.
The set notation here is a syntax provided by the tool allowing to specify non-maximal sets.
So, the first trace actually corresponds to $8$ examples, because there are 3 possible Boolean symbols (\texttt{brk}, \texttt{spd} and \texttt{ful}), and the second trace to $2$ examples.
This offers to the user a way to compactly represent several examples at once, and to focus on relevant Boolean signals she wants to provide as input to the tool.

With that additional information, our tool outputs the machine of Fig.~\ref{fig:machinebike} which recharges the battery when braking and whenever it is possible. Here, we note that the first trace has been generalized to the following: whenever the cyclist activates the handbrake and the batteries are not fully charged, the controller uses the motor brake and recharges the batteries. Likewise, the second trace has been generalized to: whenever the cyclist activates the handbrake and the batteries are fully charged, then the controller uses the rim brake.


\begin{figure}[t]
\resizebox{\textwidth}{!}{
    \begin{tikzpicture}[->,>=stealth',shorten >=1pt,auto,node distance=3.8cm,
                    thick,inner sep=0pt,minimum size=0pt]
  \tikzstyle{every state}=[fill=gray!30,text=black,inner
  sep=2pt,minimum size=12pt]

        \node[state, initial,fill=red!30] (q0) {$q_0$};
        \node[state, right of=q0, xshift=1.5cm] (q1) {$q_1$};
        \node[state, left of=q0, xshift=-1.5cm,fill=red!30] (q2) {$q_2$};
        \draw
            (q0) edge[loop above, align=center] node[color=red!70]{$!\mathsf{brk}\;\&\;\mathsf{spd}/\mathsf{idle}$\\$\mathsf{brk}\;\&\;\mathsf{ful}/!\mathsf{as}\;\&\;!\mathsf{re}\;\&\;\mathsf{ri}$} (q0)
            (q0) edge[bend right=20, above] node[yshift=2mm]{$!\mathsf{brk}\;\&\;!\mathsf{spd}/\mathsf{as}\;\&\;!\mathsf{re}\;\&\;!\mathsf{ri}$} (q1)
            (q0) edge[bend right=20, below,color=red!70] node[yshift=-2mm,color=red!70]{$\mathsf{brk}\;\&\;!\mathsf{ful}/!\mathsf{as}\;\&\;\mathsf{re}\;\&\;!\mathsf{ri}$} (q2)
            (q1) edge[loop above] node{$!\mathsf{brk}\;\&\;!\mathsf{spd}/\mathsf{as}\;\&\;!\mathsf{re}\;\&\;!\mathsf{ri}$} (q1)
            (q1) edge[bend right=20, below] node[yshift=-1mm]{$!\mathsf{brk}\;\&\;!\mathsf{spd}/\mathsf{idle}$} (q0)
            (q1) edge[bend left=20, above]
            node[yshift=1mm]{$\mathsf{brk}/\mathsf{idle}$} node[yshift=-3pt]{$\prec$} (q2)
            (q2) edge[loop above, align=center] node{\textcolor{red!70}{$\mathsf{brk}\;\&\;!\mathsf{ful}/!\mathsf{as}\;\&\;\mathsf{re}\;\&\;!\mathsf{ri}$}\\$\mathsf{brk}\;\&\;\mathsf{ful}/!\mathsf{as}\;\&\;!\mathsf{re}\;\&\;\mathsf{ri}$} (q2)
            (q2) edge[bend right=20, above] node[yshift=1mm]{$!\mathsf{brk}/\mathsf{idle}$} (q0);
    \end{tikzpicture}}
    \caption{Mealy machine returned by our tool on the e-bike
      specification with examples provided by the user as explained. The preMealy machine
      obtained after generalizing the examples and before completion
      is highlighted in red. Remember that our algorithm tries to
      reuse as much as possible states that were created during the generalizing phase, and as a consequence on this example the completion
      phase creates one additional state only.\label{fig:machinebike}}
\end{figure}

\subsection{Mututal Exclusion with a prioritized process (Prioritized Arbiter)}\label{app:mutexprio}

Here, we aim to synthesize a Mealy machine for mutual exclusion with three processes: \texttt{0}, \texttt{1} and \texttt{m} and process \texttt{m} is prioritized.
This could be useful in situations where we would like to prioritize one process over the others, i.e., have a \textit{master} process.
Here is the high level specification for this system:

\begin{enumerate}
    \item \texttt{G(request\_0 -> F grant\_0)} 
    \item \texttt{G(request\_1 -> F grant\_1)}
    \item \texttt{G(request\_m ->  grant\_m)}
    \item \texttt{G(($!$grant\_0 $\land$ $!$grant\_1)$|$($!$grant\_0 $\land$ $!$grant\_m)$|$($!$grant\_m $\land$ $!$grant\_1))}
\end{enumerate}

The above formulas correspond to the typical mutual exclusion specification with the additional constraint that the \textit{master} request must be granted immediately.

Providing these specifications to Strix results in the machine depicted in \cref{fig:StrixPrioritized}(\texttt{request} and \texttt{grant} have been abbreviated into \texttt{r} and \texttt{g}).

We note here that the requests of the non-prioritized processes are not taken into account, thereby making this an unviable solution.
Our tool provides a machine depicted in \cref{fig:PrioritizedNoExamples} which is slightly better, but still has traces which provide unsolicited grants.
One such example of a trace would be \texttt{\{rm \& !r0 \& r1\}\{!g0 \& !g1 \& gm\} \# \{!rm \& !r0 \& !r1\}\{g0 \& !g1 \& gm\}}, where process 1 granted access but was never requested.

To obtain a satisfactory solution, we we additionally provide the following scenarios of executions.
We start with traces which follow the pattern of two requests in the first step and no requests in the second step. We resolve this trace by granting one process in the first step and the second process in the second step.:
\begin{enumerate}
    \item \begin{verbatim}{!rm & r0 & r1}{g0 & !g1 & !gm} # {!rm & !r0 & !r1}
                                  {!g0 & g1 & !gm} \end{verbatim}
    \item \begin{verbatim}{rm & !r0 & r1}{!g0 & !g1 & gm} # {!rm & !r0 & !r1}
                                  {!g0 & g1 & !gm}\end{verbatim}
\end{enumerate}
Now, we handle one example of a trace where all three process are requested at once: 
\begin{enumerate}
    \item \begin{verbatim}{rm & r0 & r1}{!g0 & !g1 & gm} # {!rm & !r0 & !r1}
                                 {g0 & !g1 & !gm} \end{verbatim}
\end{enumerate}
Finally, we examine traces where both \texttt{rm} and \texttt{r0} are requested at once:
\begin{enumerate}
    \item \begin{verbatim}{rm & r0 & !r1}{!g0 & !g1 & gm} # {!rm & r0 & r1}{g0 & !g1 & !gm}
 # {!rm & !r0 & !r1}{!g0 & g1 & !gm} \end{verbatim}
    \item \begin{verbatim}{rm & r0 & !r1}{!g0 & !g1 & gm} # {!rm & !r0 & !r1}{g0 & !g1 & !gm} 
 # {!rm & r0 & !r1}{g0 & !g1 & !gm} \end{verbatim}
    \item \begin{verbatim}{rm & r0 & !r1}{!g0 & !g1 & gm} # {!rm & !r0 & r1}{g0 & !g1 & !gm} 
 # {!rm & !r0 & !r1}{!g0 & g1 & !gm} \end{verbatim}
\end{enumerate}

\begin{figure}[t]
    \centering
    \begin{tikzpicture}
        \node[state, initial] (q0) {$q_0$};
        \node[state, right of=q0, xshift=3cm] (q1) {$q_1$};
        \draw
            (q0) edge[loop above] node{$\mathsf{r_m}/!\mathsf{g_0}\;\&\;!\mathsf{g_1}\;\&\;\mathsf{g_m}$} (q0)
            (q0) edge[bend right, below] node{$!\mathsf{r_m}/\mathsf{g_0}\;\&\;!\mathsf{g_1}\;\&\;!\mathsf{g_m}$} (q1)
            (q1) edge[loop above] node{$\mathsf{r_m}/!\mathsf{g_0}\;\&\;!\mathsf{g_1}\;\&\;\mathsf{g_m}$} (q1)
            (q1) edge[bend right, above] node{$!\mathsf{r_m}/\mathsf{g_0}\;\&\;!\mathsf{g_1}\;\&\;!\mathsf{g_m}$} (q0);
    \end{tikzpicture}
    \caption{Mealy Machine for Prioritized Arbiter returned by Strix}
    \label{fig:StrixPrioritized}
\end{figure}



\begin{figure}[t]
    \centering
    \begin{tikzpicture}
        \node[state, initial] (q0) {$q_0$};
        \node[state, below of=q0] (q1) {$q_1$};
        \node[state, below right of=q1] (q2) {$q_2$};
        \node[state, below left of=q1] (q3) {$q_3$};
        \draw
            (q0) edge[loop above, align=center] node{$!\mathsf{r_0}\;\&\;!\mathsf{r_1}\;\&\;!\mathsf{r_m}/!\mathsf{g_0}\;\&\;!\mathsf{g_1}\;\&\;!\mathsf{g_m}$\\$!\mathsf{r_0}\;\&\;!\mathsf{r_1}\;\&\;\mathsf{r_m}/!\mathsf{g_0}\;\&\;!\mathsf{g_1}\;\&\;\mathsf{g_m}$\\$\mathsf{r_0}\;\&\;!\mathsf{r_1}\;\&\;!\mathsf{r_m}/\mathsf{g_0}\;\&\;!\mathsf{g_1}\;\&\;!\mathsf{g_m}$\\$!\mathsf{r_0}\;\&\;\mathsf{r_1}\;\&\;!\mathsf{r_m}/!\mathsf{g_0}\;\&\;\mathsf{g_1}\;\&\;!\mathsf{g_m}$} (q0)
            (q0) edge[left] node{$(\mathsf{r_0}\;\&\;\mathsf{r_1})|(\mathsf{r_0}\;\&\;\mathsf{r_m})|(\mathsf{r_m}\;\&\;\mathsf{r_1})/!\mathsf{g_0}\;\&\;!\mathsf{g_1}\;\&\;\mathsf{g_m}$} (q1)
            (q1) edge[loop left, left] node{$\mathsf{r_m}/!\mathsf{g_0}\;\&\;!\mathsf{g_1}\;\&\;\mathsf{g_m}$} (q1)
            (q1) edge[bend right, left] node{$!\mathsf{r_m}/!\mathsf{g_0}\;\&\;\mathsf{g_1}\;\&\;!\mathsf{g_m}$} (q2)
            (q2) edge[bend right, right] node{$!\mathsf{r_m}\;\&\;\mathsf{r_1}/!\mathsf{g_0}\;\&\;\mathsf{g_1}\;\&\;!\mathsf{g_m}$} (q1)
            (q2) edge[bend right, right] node{$!\mathsf{r_m}\;\&\;!\mathsf{r_1}/\mathsf{g_0}\;\&\;!\mathsf{g_1}\;\&\;!\mathsf{g_m}$} (q0)
            (q2) edge[below] node{$\mathsf{r_m}\;\&\;\mathsf{r_1}/!\mathsf{g_0}\;\&\;!\mathsf{g_1}\;\&\;\mathsf{g_m}$} (q3)
            (q3) edge[loop below] node{$\mathsf{r_m}/!\mathsf{g_0}\;\&\;!\mathsf{g_1}\;\&\;\mathsf{g_m}$} (q3)
            (q3) edge[left] node{$!\mathsf{r_m}/\mathsf{g_0}\;\&\;!\mathsf{g_1}\;\&\;!\mathsf{g_m}$} (q1);
    \end{tikzpicture}
    \caption{Preliminary machine obtained by our tool (without additional examples) on the Prioritized Arbiter specification}
    \label{fig:PrioritizedNoExamples}
\end{figure}

We then obtain the machine in \cref{fig:PrioritizedTarget}.
We note that there are no spurious grants and the order of requests of process \texttt{0} and \texttt{1} are noted and respected. 
This ensures fairness amongst the non-prioritized processes.

\begin{figure}[!p]
\hspace{-2.8cm}
    \begin{tikzpicture}[->,>=stealth',shorten >=1pt,auto,node distance=5.3cm,
                    thick,inner sep=0pt,minimum size=0pt]
  \tikzstyle{every state}=[fill=gray!30,text=black,inner
  sep=2pt,minimum size=12pt]
        \node[state, initial, fill=red!30] (q0) {$q_0$};
        \node[state, below left of=q0, fill=red!30] (q1) {$q_1$};
        \node[state, below right of=q0, fill=red!30] (q2) {$q_2$};
        \node[state, below of=q1] (q3) {$q_3$};
        \node[state, below of=q2] (q4) {$q_4$};
        \draw
            (q0) edge[loop above, align=center, color=red!70] node{$!\mathsf{r_0}\;\&\;!\mathsf{r_1}\;\&\;\mathsf{r_m}/!\mathsf{g_0}\;\&\;!\mathsf{g_1}\;\&\;\mathsf{g_m}$\\$\mathsf{r_0}\;\&\;!\mathsf{r_1}\;\&\;!\mathsf{r_m}/\mathsf{g_0}\;\&\;!\mathsf{g_1}\;\&\;!\mathsf{g_m}$\\$!\mathsf{r_0}\;\&\;\mathsf{r_1}\;\&\;!\mathsf{r_m}/!\mathsf{g_0}\;\&\;\mathsf{g_1}\;\&\;!\mathsf{g_m}$} (q0)
            (q0) edge[bend right, left, align=center, color=red!70] node[above, sloped]{$(\mathsf{r_0}\;\&\;!\mathsf{r_1}\;\&\;\mathsf{r_m})/!\mathsf{g_0}\;\&\;!\mathsf{g_1}\;\&\;\mathsf{g_m}$\\$(\mathsf{r_0}\;\&\;\mathsf{r_1}\;\&\;!\mathsf{r_m})/!\mathsf{g_0}\;\&\;\mathsf{g_1}\;\&\;!\mathsf{g_m}$} (q1)
            (q0) edge[bend right=100, looseness=1.5, left, color=red!70] node[pos=0.4, rotate=30,sloped, xshift=-1cm, yshift=1.2cm]{$\mathsf{r_0}\;\&\;\mathsf{r_1}\;\&\;\mathsf{r_m}/!\mathsf{g_0}\;\&\;!\mathsf{g_1}\;\&\;\mathsf{g_m}$} (q3)
            (q0) edge[bend right, above, color=red!70] node[above, sloped, yshift=2mm]{$(\mathsf{r_1}\;\&\;\mathsf{r_m})/!\mathsf{g_0}\;\&\;!\mathsf{g_1}\;\&\;\mathsf{g_m}$} (q2)
            (q1) edge[loop left, left] node[above, sloped]{$\mathsf{r_m}\;\&\;!\mathsf{r_1}/!\mathsf{g_0}\;\&\;!\mathsf{g_1}\;\&\;\mathsf{g_m}$} (q1)
            (q2) edge[loop right, right] node[above, sloped]{$\mathsf{r_m}\;\&\;!\mathsf{r_0}/!\mathsf{g_0}\;\&\;!\mathsf{g_1}\;\&\;\mathsf{g_m}$} (q2)
            (q1) edge[bend right, above] node[yshift=2mm]{$!\mathsf{r_m}\;\&\;\mathsf{r_1}/\mathsf{g_0}\;\&\;!\mathsf{g_1}\;\&\;!\mathsf{g_m}$} (q2)
            (q1) edge[bend right, left, color=red!70] node[above, sloped, yshift=3mm]{$!\mathsf{r_m}\;\&\;!\mathsf{r_1}/!\mathsf{g_0}\;\&\;\mathsf{g_1}\;\&\;!\mathsf{g_m}$} (q0)
            (q2) edge[bend right, below, color=red!70] node[yshift=-2mm]{$!\mathsf{r_m}\;\&\;\mathsf{r_0}/!\mathsf{g_0}\;\&\;\mathsf{g_1}\;\&\;!\mathsf{g_m}$} (q1)
            (q2) edge[bend right, right, color=red!70] node[above, sloped]{$!\mathsf{r_m}\;\&\;!\mathsf{r_0}/\mathsf{g_0}\;\&\;!\mathsf{g_1}\;\&\;!\mathsf{g_m}$} (q0)
            (q3) edge[left, color=red!70] node[above, sloped, xshift=-1.5cm]{$!\mathsf{r_m}/\mathsf{g_0}\;\&\;!\mathsf{g_1}\;\&\;!\mathsf{g_m}$} (q2)
            (q4) edge[right] node[above, sloped, xshift=1.5cm]{$!\mathsf{r_m}/!\mathsf{g_0}\;\&\;\mathsf{g_1}\;\&\;!\mathsf{g_m}$} (q1)
            (q3) edge[loop below] node{$\mathsf{r_m}/!\mathsf{g_0}\;\&\;!\mathsf{g_1}\;\&\;\mathsf{g_m}$} (q3)
            (q4) edge[loop below] node{$\mathsf{r_m}/!\mathsf{g_0}\;\&\;!\mathsf{g_1}\;\&\;\mathsf{g_m}$} (q4)
            (q1) edge[left] node[above, sloped]{$\mathsf{r_m}\;\&\;\mathsf{r_1}/!\mathsf{g_0}\;\&\;!\mathsf{g_1}\;\&\;\mathsf{g_m}$} (q3)
            (q2) edge[right] node[above, sloped]{$\mathsf{r_m}\;\&\;\mathsf{r_0}/\mathsf{g_0}\;\&\;!\mathsf{g_1}\;\&\;!\mathsf{g_m}$} (q4);
    \end{tikzpicture}
    \caption{Mealy machine returned by our tool on the prioritized arbiter specification with additional examples. The preMealy machine obtained after generalizing the examples and before completion is highlighted in red. In state $q0$, there are no pending requests.
    In state $q1$, $r_0$ is pending and correspondingly in state $q2$, $r_1$ is pending.
    In states $q3$ and $q4$, both requests $r_0$ and $r_1$ are pending, but however, the order in which the requests are granted matter.
    In $q3$, we grant $r_0$ and then $r_1$ and in $q4$, we grant $r_1$ and then $r_0$.}
    \label{fig:PrioritizedTarget}
\end{figure}


\newpage

\subsection{Mutual Exclusion with no subsequent grants}\label{app:mutextex}
Here, we aim to synthesize a Mealy machine for mutual exclusion with additional property that the implementation should never grant twice in a row. This could be useful in situations where we would like to give the granting process a bit of a \textit{break} so as to execute other instructions. Here is the high-level specification for this system: 
\begin{enumerate}
    \item \texttt{G(request\_0 -> F grant\_0)} 
    \item \texttt{G(request\_1 -> F grant\_1)}
    \item \texttt{G($!$grant\_0 $|$ $!$grant\_1)}
    \item \texttt{G(grant\_0 -> X ($!$grant\_1\;\&\;$!$grant\_0))}
    \item \texttt{G(grant\_1 -> X ($!$grant\_1\;\&\;$!$grant\_0))}
\end{enumerate}
The first three formulas correspond to a typical mutual exclusion problem and the last two specify that no two subsequent grants can take place.

Providing these high-level specification to Strix returns the machine in \cref{fig:StrixAlternatingGrants} (\texttt{request} and \texttt{grant} have been abbreviated into \texttt{r} and \texttt{g}). 

\begin{figure}[h]
    \centering
    \begin{tikzpicture}
        \node[state, initial] (q0) {$q_0$};
        \node[state, left of=q0, xshift=-3cm] (q1) {$q_1$};
        \node[state, below of=q1] (q2) {$q_2$};
        \node[state, below of=q0] (q3) {$q_3$};
        \draw
            (q0) edge[above] node{$\mathsf{true}/\mathsf{g_0}\;\&\;!\mathsf{g_1}$} (q1)
            (q1) edge[left] node{$\mathsf{true}/!\mathsf{g_0}\;\&\;!\mathsf{g_1}$} (q2)
            (q2) edge[below] node{$\mathsf{true}/!\mathsf{g_0}\;\&\;\mathsf{g_1}$} (q3)
            (q3) edge[right] node{$\mathsf{true}/!\mathsf{g_0}\;\&\;!\mathsf{g_1}$} (q0);
    \end{tikzpicture}
    \caption{Mealy Machine for No Subsequent Grants returned by Strix}
    \label{fig:StrixAlternatingGrants}
\end{figure}
As for the example in Introduction, we can see that the solution proposed by Strix does not take into account the requests, so it cannot be considered as \tocheck{an efficient solution} to our problem. Our tool provides a similar machine when no examples are given.

To obtain a satisfactory solution, we additionally provide the following scenarios of executions. First, let
\begin{verbatim}
 s = {!r_0 & !r_1}{!g_0 & !g_1} # {r_0 & !r_1}{g_0 & !g_1}
\end{verbatim}
Then, we complete this simple scenario by the following five different continuations that exhibit relevant reactions of \tocheck{efficient} solutions to the problem:

\begin{itemize}
    \item this scenario asks to favour $r\_1$ when two requests are made after $s$ (in which process $0$ has been granted):
        \begin{verbatim}s # {!r0}{!g_0 & !g_1} # {r_0 & r_1}{!g_0 & g_1} \end{verbatim}


\item even if $r\_0$ has been granted in $s$, if the first process after $s$ making a request is process $0$, then it should have priority over the other:
\begin{verbatim}s # {r_0 & !r_1}{!g_0 & !g_1} # {true}{g_0 & !g_1} \end{verbatim}


\item however, if both processes make a request simultaneously after $s$, then process $1$ gets the priority:
\begin{verbatim}s # {r_0 & r_1}{!g_0 & !g_1} # {!r_0 & !r_1}{!g_0 & g_1} \end{verbatim}


\end{itemize}


\begin{figure}[!p]
    \centering
    \begin{tikzpicture}[->,>=stealth',shorten >=1pt,auto,node distance=2cm,
                    thick,inner sep=0pt,minimum size=0pt]
  \tikzstyle{every state}=[fill=gray!30,text=black,inner
  sep=2pt,minimum size=12pt]
        \node[state, initial] (q0) {$q_0$};
        \node[state, below of=q0] (q1) {$q_1$};
        \node[state, below right of=q1, xshift=3cm] (q2) {$q_2$};
        \node[state, below left of=q1, xshift=-3cm] (q3) {$q_3$};
        \node[state, below of=q1, yshift=-1cm] (q4) {$q_4$};
        \node[state, below of=q4, yshift=-1cm] (q7) {$q_7$};
        \node[state, below of=q7, yshift=-1cm] (q5) {$q_5$};
        \node[state, below of=q5, yshift=-1cm] (q6) {$q_6$};
        \draw
            (q0) edge[loop above] node{$!\mathsf{r\_0}\;\&\;!\mathsf{r\_1}/!\mathsf{g\_0}\;\&\;!\mathsf{g\_1}$} (q0)
            (q0) edge[bend left=20, right, align=center] node{$\mathsf{r\_0}\;\&\;!\mathsf{r\_1}/\mathsf{g\_0}\;\&\;!\mathsf{g\_1}$ \\ $!\mathsf{r\_0}\;\&\;\mathsf{r\_1}/!\mathsf{g\_0}\;\&\;\mathsf{g\_1}$} (q1)
            (q1) edge[bend left=20, left, ] node{$!\mathsf{r\_0}\;\&\;!\mathsf{r\_1}/!\mathsf{g\_0}\;\&\;!\mathsf{g\_1}$} (q0)
            (q0) edge[bend left=80, right] node{$\mathsf{r\_0}\;\&\;\mathsf{r\_1}/\mathsf{g\_0}\;\&\;!\mathsf{g\_1}$} (q6)
            (q1) edge[bend left=20, right] node{$!\mathsf{r\_0}\;\&\;\mathsf{r\_1}/!\mathsf{g\_0}\;\&\;!\mathsf{g\_1}$} (q2)
            (q2) edge[bend left=20, right] node{$!\mathsf{r\_0}/!\mathsf{g\_0}\;\&\;\mathsf{g\_1}$} (q1)
            (q1) edge[bend left=20, left] node{$\mathsf{r\_0}\;\&\;!\mathsf{r\_1}/!\mathsf{g\_0}\;\&\;!\mathsf{g\_1}$} (q3)
            (q3) edge[bend left=20, left] node{$!\mathsf{r\_1}/\mathsf{g\_0}\;\&\;!\mathsf{g\_1}$} (q1)
            (q1) edge[right] node{$\mathsf{r\_0}\;\&\;\mathsf{r\_1}/!\mathsf{g\_0}\;\&\;!\mathsf{g\_1}$} (q4)
            (q4) edge[right] node{$\mathsf{true}/!\mathsf{g\_0}\;\&\;\mathsf{g\_1}$} (q7)
            (q6) edge[bend right=20, right] node{$!\mathsf{r_0}/!\mathsf{g\_0}\;\&\;!\mathsf{g\_1}$} (q2)
            (q3) edge[bend right=20, left] node{$\mathsf{r_1}/\mathsf{g\_0}\;\&\;!\mathsf{g\_1}$} (q6)
            (q7) edge[right] node{$\mathsf{r_1}/!\mathsf{g\_0}\;\&\;!\mathsf{g\_1}$} (q5)
            (q5) edge[right] node{$\mathsf{true}/\mathsf{g\_0}\;\&\;!\mathsf{g\_1}$} (q6)
            (q6) edge[bend left=20, left] node{$\mathsf{r_0}/!\mathsf{g\_0}\;\&\;!\mathsf{g\_1}$} (q4)
            (q2) edge[bend left=20, right] node{$\mathsf{r_0}/!\mathsf{g\_0}\;\&\;\mathsf{g\_1}$} (q7)
            (q7) edge[bend left=20, right] node{!$\mathsf{r_1}/!\mathsf{g\_0}\;\&\;!\mathsf{g\_1}$} (q3);
    \end{tikzpicture}
    \caption{Mealy Machine for the specification of mutual exclusion without consecutive grants produced by our tool.}
    \label{fig:AlternatingGrantsTarget}
\end{figure}

We finally obtain the machine in \cref{fig:AlternatingGrantsTarget} which is a natural solution. Let us describe the states of this machine. State $q_0$ means that there is no pending request and a grant can be executed at the next step. In state $q_1$, there is no pending request but a grant can be executed at the next step. In state $q_2$, $r_1$ is pending and a grant can be done at the next step. Symmetrically, in state $q_3$, $r_0$ is pending and a grant can be executed at the next step. In state $q_4$, both processes are pending and a grant can be done at the next step, and so on.





\section{LTL syntax and semantics}

To be self-contained, we define here the syntax and semantics of the
linear temporal logic (LTL).

 Given a set of atomatic propositions $P$, the formulas of LTL are built according to the following syntax:
$$\varphi:={\sf true} ~|~p~  |~ \neg \varphi ~| ~\varphi_1 \lor \varphi_2 ~|~ {\bf X} \varphi ~|~ \varphi_1 {\sf U} \varphi_2 $$
\noindent
where $p \in P$ is an atomic proposition, $\varphi$, $\varphi_1$ and $\varphi_2$ are LTL formulas, "${\bf X}$" is the {\em next} operator and "${\sf U}$" is the until operator.

The truth value of a LTL formula along an infinite word $w \in (2^P)^{\omega}$ is defined inductively as follows:
  \begin{itemize}
      \item $w \models {\sf true}$
      \item $w \models p$ iff $p \in w[0]$
      \item $w \models \neg \varphi$ iff $w \not\models \varphi$
      \item $w \models \varphi_1 \lor \varphi_2$ iff $w \models \varphi_1$ or $w \models \varphi_2$
      \item $w \models {\sf X} \varphi$ iff $w[1 \dots] \models
        \varphi$ ($w[1 \dots]$ denotes the suffix of $w$ that exclude
        the first letter $w[0]$) 
      \item $w \models \varphi_1 {\sf U} \varphi_2$ iff there exists $i \geq 0$, such that $w[i \dots] \models \varphi_2$, and for all $j$, $0 \leq j < i$, $w [j \dots] \models \varphi_1$ 
  \end{itemize}
\noindent 
We also consider the following abbrevations:
  \begin{itemize}
      \item "Eventually": ${\sf F} \varphi \equiv {\sf True} {\sf U} \varphi$
      \item "Always": ${\sf G} \varphi \equiv \neg {\sf F} \neg \varphi$
      \item "Weak until": $\varphi_1 {\sf W} \varphi_2 \equiv \varphi_1 {\sf U} \varphi_2 \lor {\sf G} \varphi_1$
      \item "Release": $\varphi_1 {\sf R} \varphi_2 \equiv \neg(\neg \varphi_1 {\sf U} \neg \varphi_2)$
  \end{itemize}

\section{Output of Strix on full arbiter example $n=2$}\label{app:nounsoliticed}

The output of Strix on the complete mutual exclusion specification of
the introduction is given on Fig.~\ref{fig:nounsolicited}. 

\begin{figure}[t]
    \resizebox{\textwidth}{!}{
    \includegraphics[]{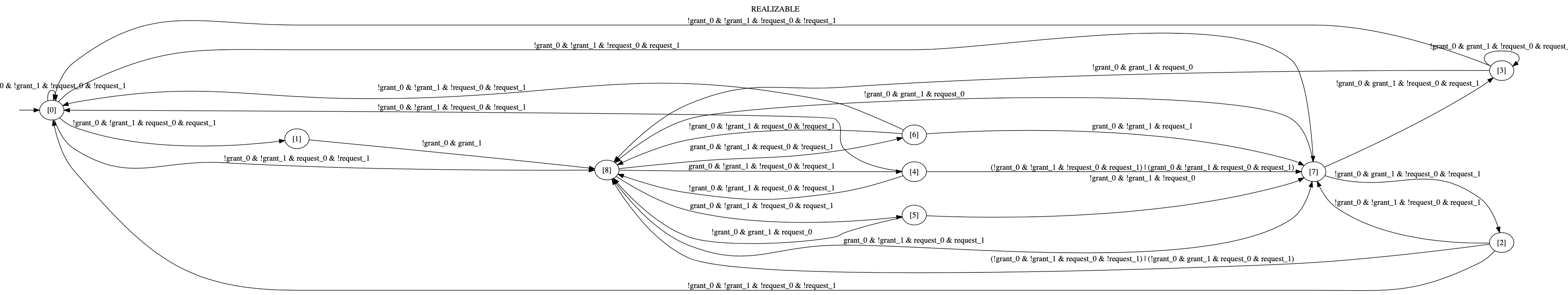}}
    \caption{Output of Strix on the full arbiter specification $n=2$.}
    \label{fig:nounsolicited}
\end{figure}

\section{Details and Proofs for Section~\ref{sec:learningframework}}\label{app:learningframework}

\subsection{Additional notations for (pre)-Mealy machines}\label{app:preMealynotations}

\begin{definition}[Notation $\fIO$]
    Given a preMealy machine $\MM = (M,m_{{\sf init}},\trans)$, we define  the
    (possibly partial)
    function $\fIO : M\times\inputs^*\rightarrow
    (\inputs\outputs)^*$ by $\fIO(m,\inp_1\dots \inp_k) =
    \inp_1\out_1\dots \inp_k\out_k$ such that for $1\leq j\leq k$,
    $\transS_\MM(m,\inp_1\dots\inp_j)$ is defined and $\out_j = \transO_\MM(m,\inp_1\dots\inp_j)$.
    For all $u\in\inputs^*$, we write $\fIO(u)$ instead of
   $\fIO(m_{\sf init},u)$. Note that the
    language accepted by $\MM$ satisfies $L(\MM) = \{
    \fIO(u)\in(\inputs\outputs)^* \mid \fIO(u)\text{
      is defined}\}$.
\end{definition}

\begin{remark}
A Mealy machine is a preMealy machine without holes. If $\MM$ is
Mealy machine then the domain of $\fIO^\MM$ is $M\times
\inputs^*$.
\end{remark}

\subsection{Generalization phase: prefix-tree acceptor and state-merging}\label{app:gen-app}

\begin{example}\label{ex:congruence}
            Consider the preMealy machine $\textsf{PTA}(E_0)$ where $E_0$ has been defined
            in Example~\ref{ex:PTA}. The equivalence relation $\sim_0$
            defined
            by the state partition $\{\{0,2,3\},\{1,4\}\}$ is a
            congruence for $\textsf{PTA}(E_0)$, however it is not a
            Mealy-congruence because we have $0\sim_0 3$ but for input
            $\inp'$, $\transO_{\textsf{PTA}(E_0)}(0,\inp') = \out\neq
            \out' = \transO_{\textsf{PTA}(E_0)}(3,\inp')$.

            However, the equivalence relation $\sim_0'$ defined by the state
            partition $\{\{0,1\},$ $\{2,3,4\}\}$ is a Mealy-congruence for
            $\textsf{PTA}(E_0)$. The quotient
            $\textsf{PTA}(E_0)/_{\sim'_0}$ is depicted on
            the right of Figure~\ref{fig:PTA} and it turns out to be a
            (complete) Mealy machine. Its
            language $L$ is denoted by the regexp
            $(\inp'+\inp)\out(\inp\out+\inp'\out')^*$. Note that
            $E_0\subset L$.
        \end{example}

        The following lemma states that the quotient of a preMealy
        machine $\MM$ by a Mealy-congruence for $\MM$ is a preMealy
        machine which generalizes $\MM$.
        
        \begin{lemma}\label{lem:gen}
            If $\sim$ is a Mealy-congruence for $\MM$, then $\MM/_{\sim}$ is a preMealy machine
            s.t. $L(\MM)\subseteq L(\MM/_{\sim})$.
        \end{lemma}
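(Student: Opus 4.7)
The plan is to prove two things in sequence: first, that $\MM/_{\sim}$ is a well-defined preMealy machine (i.e., its transition function does not depend on the choice of representative), and second, the language inclusion by induction on the length of input sequences.

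\textbf{Step 1: Well-definedness.} I would argue that the transition function of $\MM/_{\sim}$ is well-defined. Fix a class $[s]$ and an input $\inp\in\inputs$. Suppose $s,s'\in [s]$ are two representatives both satisfying that $\trans_\MM(s,\inp)$ and $\trans_\MM(s',\inp)$ are defined. Since $s\sim s'$ and $\sim$ is a Mealy-congruence, we have $\transO_\MM(s,\inp) = \transO_\MM(s',\inp)$ and $\transS_\MM(s,\inp) \sim \transS_\MM(s',\inp)$, so $[\transS_\MM(s,\inp)]= [\transS_\MM(s',\inp)]$. Hence the pair $(\transO_\MM(s,\inp), [\transS_\MM(s,\inp)])$ is independent of the chosen representative $s$. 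If no representative $s\in [s]$ is such that $\trans_\MM(s,\inp)$ is defined, the transition in $\MM/_{\sim}$ is left undefined. This gives a well-defined partial transition function, so $\MM/_{\sim}$ is a preMealy machine.

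\textbf{Step 2: Language inclusion.} I would show by induction on $k\geq 0$ the following claim: for all $u = \inp_1\dots\inp_k\in\inputs^*$, if $\transS_\MM^*(m_{\sf init},u)$ is defined and equals some state $m$, then $\transS_{\MM/_\sim}^*([m_{\sf init}], u)$ is defined and equals $[m]$, and moreover $\transO_\MM^*(m_{\sf init}, u[1{:}j]) = \transO_{\MM/_\sim}^*([m_{\sf init}], u[1{:}j])$ for all $1\leq j\leq k$. The base case $k=0$ is immediate. For the induction step, suppose the claim holds for $\inp_1\dots \inp_{k}$ reaching $m_k$ in $\MM$, and suppose $\trans_\MM(m_k,\inp_{k+1}) = (\out_{k+1}, m_{k+1})$ is defined. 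By induction hypothesis, $\MM/_\sim$ reaches $[m_k]$ on $\inp_1\dots\inp_k$. Since $m_k$ is a representative of $[m_k]$ for which $\trans_\MM$ is defined on $\inp_{k+1}$, the quotient transition function gives $\trans_{\MM/_\sim}([m_k],\inp_{k+1}) = (\out_{k+1}, [m_{k+1}])$, which preserves both output and state (as an equivalence class) as required.

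The inclusion $L(\MM)\subseteq L(\MM/_\sim)$ then follows: any accepted word $\inp_1\out_1\dots\inp_n\out_n\in L(\MM)$ witnesses that $\fIO^\MM(\inp_1\dots\inp_n)$ is defined and equals this word; by the claim applied to all prefixes, the same word equals $\fIO^{\MM/_\sim}(\inp_1\dots\inp_n)$ and is therefore in $L(\MM/_\sim)$.

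\textbf{Expected difficulty.} The only subtle point is in Step~1, handling the fact that different representatives in the same class may have transitions defined on different inputs. The Mealy-congruence definition only constrains representatives for which both transitions are defined, so the quotient transition is defined on $([s],\inp)$ as soon as \emph{some} representative $s$ has $\trans_\MM(s,\inp)$ defined, and the Mealy-congruence guarantees coherence across all such representatives. Once well-definedness is settled, Step~2 is a routine induction.
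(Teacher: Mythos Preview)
Your proof is correct and follows the same approach as the paper's, which gives only a one-sentence sketch (an execution of $\MM$ on a sequence of inputs corresponds to a unique execution of $\MM/_{\sim}$ on the equivalence classes, producing the same outputs). You have simply unpacked this sketch into an explicit well-definedness check and a formal induction on input length, which is exactly what the paper's sentence is abbreviating.
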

        \begin{proof}
            It is because any execution of $\MM$ over a
            sequence of inputs $v$ corresponds to a unique execution of $\MM/_{\sim}$
            (whose sequence of states is the sequence of equivalence classes
            of states of $\MM$ over $v$), and 
            which produces the same outputs.\qed
        \end{proof}

A \emph{non-congruent
  point} for an equivalence relation $\sim$ over the states of a
preMealy machine $\MM$ is a triple $(x,x',\inp)\in M\times M\times \inputs$ such that
$x\sim x'$, $\transS(x,\inp)$ and $\transS(x',\inp)$ are both defined
but $\transS(x,\inp)\not\sim \transS(x',\inp)$. Given a
non-congruent point $p = (x,x',\inp)$, the equivalence relation $\sim$ can
be updated to remove $p$, to a coarser equivalence
relation $U(\sim,p)$ defined for all $y,y'\in M$ as $y U(\sim,p) y'$
if $y\sim y'$, or $y\sim \transS(x,\inp)$ and $y'\sim \transS(x',\inp)$, or $y'\sim
\transS(x,\inp)$ and $y\sim \transS(x',\inp)$.

\begin{example}\label{ex:updateequiv}
As an example, consider
the preMealy machine $\textsf{PTA}(E_0)$ defined in
Example~\ref{ex:PTA}, and the equivalence relation $\sim_1$ induced by
the state-partition $\{\{0,2\},\{1\},\{3\},\{4\}\}$. Then,
$p_1 = (0,2,\inp)$ is a non-congruent point, because
$\transS_{\textsf{PTA}(E_0)}(0,\inp)=2\not\sim_13=\transS_{\textsf{PTA}(E_0)}(2,\inp)$. Then, 
$\sim_2 = U(\sim_1,p_1)$ is induced by the partition
$\{\{0,2,3\},\{1\},\{4\}\}$. Then, $p_2 = (0,3,\inp')$ is a
non-congruent point for $\sim_2$, and $U(\sim_2,p_2)$ is exactly
$\sim_0$ as defined in Example~\ref{ex:congruence}, which is a
congruence, so, does not contain any non-congruent point.
\end{example}

We denote that an
equivalence relation $\sim$ is finer than some equivalence relation
$\sim'$ by $\sim\finer \sim'$. The following
proposition is (easily) proved in App.~\ref{app:propequiv}:

\begin{proposition}\label{prop:equiv}
    $U(\sim,p)$ is an equivalence relation such that $\sim\finer U(\sim,p)$.
\end{proposition}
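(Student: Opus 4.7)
The plan is to verify the two conjuncts directly from the definition, by reformulating $U(\sim,p)$ in a more structural way. Let $a = \transS(x,\inp)$ and $b = \transS(x',\inp)$. Since $p$ is a non-congruent point, $a\not\sim b$, so $[a]_\sim$ and $[b]_\sim$ are two distinct $\sim$-classes. Unpacking the definition, one sees that $y\, U(\sim,p)\, y'$ holds iff either $y\sim y'$, or $\{[y]_\sim,[y']_\sim\} = \{[a]_\sim,[b]_\sim\}$. In other words, $U(\sim,p)$ is precisely the equivalence relation obtained from $\sim$ by merging the two classes $[a]_\sim$ and $[b]_\sim$ into a single class and leaving the others untouched. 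Once this reformulation is established, the proposition follows.

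First I would prove the inclusion $\sim\finer U(\sim,p)$: this is immediate, since whenever $y\sim y'$ the first disjunct of the definition of $U(\sim,p)$ is satisfied. Reflexivity of $U(\sim,p)$ follows from reflexivity of $\sim$ through the same disjunct, and symmetry is built in by construction, as the second and third disjuncts of the definition swap under the exchange of $y$ and $y'$.

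The only nontrivial step is transitivity. Assume $y\, U(\sim,p)\, y'$ and $y'\, U(\sim,p)\, y''$. I would perform a case analysis on which disjunct of the definition is used to witness each of these two relations. If both witnesses are of the first kind, transitivity of $\sim$ yields $y\sim y''$, hence $y\, U(\sim,p)\, y''$. If one is of the first kind and the other of the second/third kind, one propagates the $\sim$-equivalence (using symmetry of $\sim$ and the fact that $\sim$-classes are $\sim$-saturated) to conclude that the pair $(y,y'')$ still has its classes equal to $\{[a]_\sim,[b]_\sim\}$. Finally, if both witnesses come from the second or third disjunct, then $y'$ must lie in $[a]_\sim\cup[b]_\sim$, and tracking whether $y$ and $y''$ fall in the same class or in different ones gives either $y\sim y''$ or $\{[y]_\sim,[y']_\sim\} = \{[a]_\sim,[b]_\sim\}$, using crucially that $a\not\sim b$ so that the two classes are disjoint and the bookkeeping is unambiguous.

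The main obstacle is purely notational: ensuring that the case analysis for transitivity covers all subcases cleanly. There is no deep content beyond exploiting $a\not\sim b$ to keep the two merged classes distinguishable; everything else is a direct rewrite of the definition in terms of $\sim$-classes.
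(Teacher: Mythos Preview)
Your proposal is correct and follows essentially the same approach as the paper: both verify reflexivity and symmetry directly from the definition and establish transitivity by a case analysis on which disjunct witnesses each of the two hypotheses. Your reformulation of $U(\sim,p)$ as ``merge the $\sim$-classes of $a=\transS(x,\inp)$ and $b=\transS(x',\inp)$'' is a clean organizing device that the paper does not spell out; it lets you collapse the paper's six enumerated subcases into the three structural buckets you describe, and your remark that $a\not\sim b$ (guaranteed because $p$ is non-congruent) keeps the class bookkeeping unambiguous, whereas the paper simply pushes through each subcase symbolically without invoking that fact.
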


Given an equivalence relation $\sim$ over $M$, we now want to define a
procedure which keeps on removing non-congruent points, i.e., keeps on
applying the function $U$ iteratively until there is no non-congruent
points anymore. Therefore the resulting equivalence relation is a
congruence for $\MM$. We prove in
Lemma~\ref{lem:confluence} that the order in which those points are
removed does not matter. We formalize this via the notion of 
\emph{choice function}, which is a function $ch$ which, given any equivalence relation on the states
of a preMealy machine, outputs a non-congruent point (if it exists),
otherwise it is undefined. Given an equivalence relation $\sim$, a choice function $ch$ and two states
$m,m'$, we denote by $\sim_{ch}^{m,m'}$ the fixpoint of the sequence
$(\sim^{m,m',n})_{n\geq 0}$ where for all $x,x'\in M$, $x\sim^{m,m',0} x'$
if $x\sim x'$, or $x\sim m$ and $x'\sim m'$, or $x\sim m'$ and $x'\sim
m$ (in terms of equivalence classes, $\sim^{m,m',0}$ merges $[m]_\sim$
and $[m']_\sim$). For all $n>0$, $\sim^{m,m',n} =
U(\sim^{m,m',n-1}, ch(\MM,\sim^{m,m',n-1}))$ if
$ch(\MM,\sim^{m,m',n-1})$ is defined, otherwise $\sim^{m,m',n}
=\sim^{m,m',n-1}$.

\begin{example}
This converging sequence is already illustrated in
Example~\ref{ex:updateequiv} for some particular choice function which
first picks $p_1 = (0,2,\inp)$ and then $p_2 = (0,3,\inp')$, starting from the equivalence relation
$\sim$ induced by the partition $\{\{s\}\mid s=0,\dots,4\}$. Then,
$\sim^{0,2,0} = \sim_1$, $\sim^{0,2,1} = U(\sim_1,p_1) = \sim_2$ and
$\sim^{0,2,3} = U(\sim_2,p_2) = \sim_0$ which is the fixpoint of the
sequence. Note that taking a different order, first $p_2$ and then
$p_1$, we would get the same fixed point: $U(\sim_1,p_2)$ is induced by the partition
$\{\{0,2\},\{3\},\{1,4\}\}$ and $U(U(\sim_1,p_1))$ is induced by the
partition $\{\{0,2,3\},\{1,4\}\}$, so, $U(U(\sim_1,p_1),p_2) =
U(U(\sim_1,p_2),p_1)$. This observation can be generalized, as shown by the
following lemma (proved in App.~\ref{app:confluence}). 
\end{example}

\begin{lemma}\label{lem:confluence}
    For any equivalence relation $\sim$ and choice functions $ch_1,ch_2$, we have that $\sim^{m,m'}_{ch_1} = \sim^{m,m'}_{ch_2}$.
\end{lemma}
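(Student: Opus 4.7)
The plan is to characterize the limit of the iteration intrinsically, independently of the choice function, and then show that every such sequence converges to it. Define
\[
\sim^{\;*} \;=\; \bigcap \{\, \cong \,:\, \cong \text{ is a congruence for } \MM \text{ and } \sim^{m,m',0}\,\finer\, \cong\,\}.
\]
Since an arbitrary intersection of congruences for $\MM$ is again a congruence (a routine check on the definition of congruence), and the universal relation $M\times M$ is always a congruence coarser than $\sim^{m,m',0}$, the relation $\sim^{\;*}$ is well-defined, it is a congruence, and by construction it is the finest congruence for $\MM$ that is coarser than $\sim^{m,m',0}$.

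Given this characterization, I would prove, for any choice function $ch$, that the sequence $(\sim^{m,m',n})_{n\geq 0}$ stabilizes at $\sim^{\;*}$. The argument has three parts. First, termination: each application of the operator $U$ on a non-congruent point strictly coarsens the relation (it merges two previously distinct classes, by Proposition~\ref{prop:equiv} together with the definition of a non-congruent point), so on the finite set $M$ the sequence must stabilize in finitely many steps at some relation $\sim^{m,m',\infty}$ that has no non-congruent point, hence is a congruence for $\MM$. Second, soundness: by induction on $n$, $\sim^{m,m',n}\,\finer\,\sim^{\;*}$. The base case $n=0$ holds by definition of $\sim^{\;*}$. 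For the inductive step, let $p=(x,x',\inp) = ch(\MM,\sim^{m,m',n})$; since $x\sim^{m,m',n} x'$ and $\sim^{m,m',n}\,\finer\,\sim^{\;*}$, we get $x\sim^{\;*} x'$, and because $\transS(x,\inp)$ and $\transS(x',\inp)$ are both defined (by the very definition of a non-congruent point) and $\sim^{\;*}$ is a congruence for $\MM$, we obtain $\transS(x,\inp)\sim^{\;*}\transS(x',\inp)$. Unfolding the definition of $U$, this is exactly what is needed to conclude $U(\sim^{m,m',n},p)\,\finer\,\sim^{\;*}$.

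Third, completeness: the fixed point $\sim^{m,m',\infty}$ is a congruence for $\MM$ coarser than $\sim^{m,m',0}$, so by minimality of $\sim^{\;*}$ we have $\sim^{\;*}\,\finer\,\sim^{m,m',\infty}$. Combined with the soundness invariant (which applies at the fixed point), this gives $\sim^{m,m',\infty}\;=\;\sim^{\;*}$. Since the characterization $\sim^{\;*}$ does not depend on $ch$, we conclude $\sim^{m,m'}_{ch_1}\,=\,\sim^{\;*}\,=\,\sim^{m,m'}_{ch_2}$.

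The main delicate point will be the inductive step of the soundness invariant: one must carefully verify that $U$ only merges classes of pairs $(\transS(x,\inp),\transS(x',\inp))$ in which both transitions are defined, which is exactly the guard built into the definition of a non-congruent point, so that the congruence property of $\sim^{\;*}$ can be invoked. Everything else (well-definedness of $\sim^{\;*}$, strict monotonicity of the sequence) is routine on finite sets of equivalence relations ordered by $\finer$.
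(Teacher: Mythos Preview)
Your proof is correct and takes a genuinely different route from the paper. The paper proves the result by establishing \emph{local confluence} of the rewrite relation $\sim\rightarrow U(\sim,p)$ and then invoking Newman's lemma; the bulk of their argument is a lengthy case analysis (nine subcases) showing that for any two non-congruent points $p,p'$ of $\sim$, the relations $U(U(\sim,p),p')$ and $U(U(\sim,p'),p)$ coincide. Your argument instead characterizes the common limit \emph{intrinsically} as the finest congruence for $\MM$ coarser than $\sim^{m,m',0}$, and then shows by a short induction that every $U$-step stays below this congruence (using only that $\sim^{*}$ is a congruence and that non-congruent points have both successor transitions defined). This avoids Newman's lemma and the case analysis entirely, and as a bonus gives an explicit description of $\sim^{m,m'}$ that the confluence proof does not directly provide. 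The price is that your approach relies on the lattice structure of equivalence relations (closure of congruences under intersection), whereas the paper's rewriting argument is in principle more local; but here that structure is cheap and your proof is both shorter and more transparent.
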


By the previous lemma, given an equivalence relation $\sim$ over the
states of a preMealy-machine $\MM$, we can define $\sim^{m,m'}$ as
$\sim^{m,m'}_{ch}$ for any choice function $ch$. Note that 
$\sim^{m,m'}$ is a congruence for $\MM$ because it is does not contain
non-congruent points anymore, but it is not necessarily a
Mealy-congruence for $\MM$. We say that two
states $m,m'$ of $\MM$ are \emph{$\sim$-mergeable} if $\sim^{m,m'}$ is
a Mealy-congruence for $\MM$. We say that $m$ and $m'$ are \emph{mergeable} if
they are $diag_\MM$-mergeable, where $diag_M$ is the finest
equivalence relation over $M$, i.e. $diag_M = \{ (m,m)\mid m\in
M\}$.

\begin{example}
As an example, state $0$ and $2$ are not mergeable in
Example~\ref{ex:PTA}, because $diag_{E_0}^{0,2} = \sim_0$ is not a
Mealy-congruence, as illustrated in
Example~\ref{ex:updateequiv}. However, $2$ and $3$ are mergeable, $3$
and $4$, and $1$ and $4$. 
\end{example}

We denote by
$\textsf{Mergeable}(\MM,\sim,m,m')$
(resp. $\textsf{Mergeable}(\MM,m,m')$) the predicate
which holds true whenever $m$ and $m'$ are $\sim$-mergeable
(resp. mergeable). When $m$ and $m'$ are $\sim$-mergeable, we let
$\textsf{MergeClass}(\MM,\sim,m,m') = \sim^{m,m'}$ and
$\textsf{MergeStates}(\MM,\sim,m,m') = \MM/_{\sim^{m,m'}}$.

\begin{lemma}\label{lem:soundnessmerge}
    For all preMealy machine $\MM$, all equivalence relation $\sim$
    over the states of $\MM$, all $\sim$-mergeable states $m,m'$, we
    have $\sim\finer \textsf{MergeClass}(\MM,\sim,m,m')$ and therefore
    $L(\MM)\subseteq L(\textsf{MergeStates}(\MM,\sim,m,m'))$. Moreover, if $m\not\sim
    m'$, then $\sim\finerstrict\textsf{MergeClass}(\MM,\sim,m,m')$ and
    therefore $\textsf{MergeStates}(\MM,\sim,m,m')$ has strictly less states than
    $\MM/_\sim$.
\end{lemma}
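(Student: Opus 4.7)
The plan is to chain together the results already established earlier in the section. I would structure the proof in four short steps, matching the four claims of the statement.

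First, I would establish $\sim\finer \sim^{m,m'}$ by induction on the iterative construction of $\sim^{m,m'}$. Recall that $\sim^{m,m'} = \sim^{m,m'}_{ch}$ for any choice function $ch$, where $\sim^{m,m'}$ is the fixpoint of the sequence $(\sim^{m,m',n})_{n\geq 0}$. For the base case, $\sim^{m,m',0}$ is obtained from $\sim$ by taking the union of the classes $[m]_\sim$ and $[m']_\sim$, so by unfolding the definition one immediately gets $x\sim x'\Rightarrow x\sim^{m,m',0} x'$, i.e. $\sim\finer \sim^{m,m',0}$. For the inductive step, $\sim^{m,m',n+1}$ is either equal to $\sim^{m,m',n}$ or obtained as $U(\sim^{m,m',n}, p)$ for some non-congruent point $p$, and in the latter case Proposition~\ref{prop:equiv} yields $\sim^{m,m',n}\finer \sim^{m,m',n+1}$. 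Transitivity of $\finer$ then gives $\sim\finer \sim^{m,m'}$.

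Second, the inclusion $L(\MM)\subseteq L(\textsf{MergeStates}(\MM,\sim,m,m'))$ is a direct consequence of Lemma~\ref{lem:gen}. Indeed, by definition of $\sim$-mergeability, $\sim^{m,m'}$ is a Mealy-congruence for $\MM$, so Lemma~\ref{lem:gen} applies to $\MM$ and $\sim^{m,m'}$, yielding $L(\MM)\subseteq L(\MM/_{\sim^{m,m'}}) = L(\textsf{MergeStates}(\MM,\sim,m,m'))$.

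Third, to get strictness when $m\not\sim m'$, observe that in that case $[m]_\sim$ and $[m']_\sim$ are two distinct classes of $\sim$, and the pair $(m,m')$ satisfies $m\sim^{m,m',0} m'$ by definition of $\sim^{m,m',0}$, while $m\not\sim m'$. Hence $\sim\finerstrict \sim^{m,m',0}$, and combining with the chain $\sim^{m,m',0}\finer \sim^{m,m'}$ established in the first step we obtain $\sim\finerstrict \sim^{m,m'}$. Finally, a strictly coarser equivalence relation on $M$ has strictly fewer equivalence classes, and since the states of a quotient preMealy machine are by definition the equivalence classes, $\textsf{MergeStates}(\MM,\sim,m,m') = \MM/_{\sim^{m,m'}}$ has strictly fewer states than $\MM/_{\sim}$.

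The proof is essentially a bookkeeping exercise, so I do not anticipate any significant obstacle. The only point that deserves care is the base case of the first step: one must verify that $\sim^{m,m',0}$ as defined (pairs $x\sim x'$, or $x\sim m$ and $x'\sim m'$, or symmetric) is indeed a well-defined equivalence relation that contains $\sim$, but this is immediate from its explicit description as merging two $\sim$-classes.
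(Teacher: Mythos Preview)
Your proof is correct and follows the same approach as the paper. The paper's own proof is a single sentence (``Immediate by Lemma~\ref{lem:gen} and the definition of $\sim$-mergeable states''), and your four steps simply unpack that sentence carefully, with the added explicit appeal to Proposition~\ref{prop:equiv} for the chain $\sim\finer\sim^{m,m',0}\finer\cdots\finer\sim^{m,m'}$, which the paper leaves implicit.
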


\begin{proof}
    Immediate by Lemma~\ref{lem:gen} and the definition of $\sim$-mergeable states.\qed
\end{proof}

\subsection{Termination and correctness of Algorithm \textsc{GEN}(Algo.~\ref{algo:gen})}\label{app:terminationgen} We first prove that algorithm \textsc{Gen} indeed generalizes the
examples while preserving realizability of the specification. 

\begin{lemma}\label{lem:charac}
    For all merging strategy $\sigma_G$,  all finite set of
    examples $E$ and all specification $\spec$ given as a deterministic
    safety automaton $\atm$, if \textsc{Gen}$(E,\spec,\sigma_G)\neq UNREAL$, then 
    $\textsc{Gen}(E,\spec,\sigma_G)$ is a preMealy machine $\MM$ such that
    $\spec$ is $\MM$-realizable and $E\subseteq L(\MM)$. If
    $\textsc{Gen}(E,\spec,\sigma_G)= UNREAL$, then there is no such
    preMealy machine. Moreover, $\textsc{Gen}(E,\spec,\sigma_G)$
    terminates in time polynomial in the size\footnote{The size of $E$
      is defined as the cardinality of $\prefs(E)$} of $E$ and
    exponential in $n$ the number of states of $\atm$.
\end{lemma}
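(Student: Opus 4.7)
The plan is to handle the three claims of the lemma separately: semantic correctness of the output when it is not \textsf{UNREAL}, semantic correctness of the \textsf{UNREAL} verdict, and the complexity analysis.

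First I would establish a loop invariant for the \textbf{for}-loop at line~\ref{line:loop}: after processing any prefix $e\in E$, the current equivalence relation $\sim$ is a Mealy-congruence for $\textsf{PTA}(E)$, $E\subseteq L(\textsf{PTA}(E)/_\sim)$, and $\spec$ is $\textsf{PTA}(E)/_\sim$-realizable. The base case follows because the initial relation $diag_E$ is trivially a Mealy-congruence, the quotient is $\textsf{PTA}(E)$ itself, $E\subseteq \prefs(E)\cap (\inputs\outputs)^* = L(\textsf{PTA}(E))$ by construction, and $\textsf{PTA}(E)$-realizability of $\spec$ is checked upfront. The inductive step is immediate: the coarsening at line~\ref{line:merge} is performed only when $e'$ belongs to $mergeCand$ after the filter of line~\ref{line:real}, which enforces via Theorem~\ref{thm:sizeSol} that the new preMealy machine still has $\spec$ as a realizable specification; Lemma~\ref{lem:soundnessmerge} then gives that the new relation remains a Mealy-congruence, and that the accepted language only grows, hence $E$ remains included. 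When no merge candidate survives, $\sim$ is unchanged and the invariant is preserved trivially. At the end, the returned preMealy machine $\textsf{PTA}(E)/_\sim$ therefore satisfies the two required conditions.

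For the \textsf{UNREAL} case, I would show that if the initial test fails, then no solution exists. If $E$ is not consistent, it is impossible to embed $E$ in any deterministic machine (two different outputs would be required for the same input prefix). If $E$ is consistent but $\spec$ is not $\textsf{PTA}(E)$-realizable, I would argue that any preMealy machine $\MM$ with $E\subseteq L(\MM)$ satisfies $\textsf{PTA}(E)\preceq \MM$: the map sending each prefix $u\in \prefs(E)\cap (\inputs\outputs)^*$ to the state $\transS_\MM^*(m_{\sf init},\inproj(u))$ is well defined (by consistency and $E\subseteq L(\MM)$), injective up to quotienting, and preserves initial state and transitions by construction of $\textsf{PTA}(E)$. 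Thus any Mealy machine completing $\MM$ into a realizer of $\spec$ also contains $\textsf{PTA}(E)$ as a subgraph, contradicting non-$\textsf{PTA}(E)$-realizability. The algorithm only returns \textsf{UNREAL} at this point, so the implication is complete.

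For complexity, I would bound the number of outer iterations by $|\prefs(E)|$, which is polynomial in the size of $E$, and the number of inner candidates by the same quantity. The cost of one iteration is dominated by the tests \textsf{Mergeable} and $\textsf{MergeStates}$-realizability. Computing $\textsf{MergeClass}$ via the confluent closure of Lemma~\ref{lem:confluence} is polynomial in the size of the PTA, and checking $\PP$-realizability for a safety automaton $\atm$ with $n$ states is in \textsc{ExpTime}, more precisely polynomial in the preMealy size and exponential in $n$, by Theorem~\ref{thm:sizeSol}. Multiplying these polynomial-in-$|E|$ and exponential-in-$n$ factors, and adding the assumed worst-case exponential cost of evaluating $\sigma_G$ on inputs of polynomial size, gives the announced overall bound. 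The main obstacle, though a minor one, is carefully verifying that the monotone growth of $\sim$ (Lemma~\ref{lem:soundnessmerge}) combined with the loss-of-realizability guarantee ensures the loop invariant is preserved across iterations where the merge is rejected; once this is spelled out, the rest of the proof is straightforward bookkeeping.
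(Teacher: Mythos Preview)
Your argument for the non-\textsf{UNREAL} case and the complexity bound is essentially the same loop-invariant argument the paper uses, relying on Lemma~\ref{lem:soundnessmerge} and Theorem~\ref{thm:sizeSol}; this part is fine.

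However, your \textsf{UNREAL} case has a genuine gap. You claim that any preMealy machine $\MM$ with $E\subseteq L(\MM)$ satisfies $\textsf{PTA}(E)\preceq \MM$, via the map $u\mapsto \transS_\MM^*(m_{\sf init},\inproj(u))$. But the relation $\preceq$ requires an \emph{injective} state mapping, and your map is generally not injective: distinct prefixes in $\textsf{PTA}(E)$ may reach the same state of $\MM$. For instance, if $E=\{\inp\out,\inp\out\inp\out\}$ and $\MM$ is a single state with a self-loop $\inp/\out$, then $E\subseteq L(\MM)$ but $\textsf{PTA}(E)$ has three states and cannot embed into a one-state machine. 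Saying the map is ``injective up to quotienting'' does not rescue the argument: at best it shows that some \emph{quotient} of $\textsf{PTA}(E)$ embeds in $\MM$, which is not what $\textsf{PTA}(E)$-realizability asks.

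The paper fixes this by a different route: assuming a Mealy machine $\PP$ realizes $\spec$ with $E\subseteq L(\PP)$, it builds the \emph{synchronized product} $\PP'$ of $\PP$ with $\textsf{PTA}(E)$, whose states are pairs $(p,e)$ (or just $p$ once the PTA component is exhausted). This product still realizes $\spec$ because its output behaviour is that of $\PP$, and now $\textsf{PTA}(E)$ genuinely embeds injectively via $e\mapsto (\transS_\PP^*(p_0,e),e)$, giving the required contradiction with non-$\textsf{PTA}(E)$-realizability. You need either this product construction or an equivalent unfolding argument to close the gap.
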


\begin{proof}
    Suppose that $\textsc{Gen}(E,\spec,\sigma_G)\neq UNREAL$. Then $E$ is
    necessarily consistent and $\spec$ is $\textsf{PTA}(E)$-realizable. 
    By Lemma~\ref{lem:soundnessmerge}, the loop at line~\ref{line:loop} computes coarser and coarser
    Mealy-congruences for $\textsf{PTA}(E)$. Therefore, if we denote by
    $\sim_E$ the relation computed by the algorithm after exiting the
    loop, we have $diag_E\finer \sim_E$ and hence
    $L(\textsf{PTA}(E)/_{diag_E}) = L(\textsf{PTA}(E)) = E\subseteq
    L(\textsf{PTA}(E)/_{\sim_E})$. Moreover, line~\ref{line:real} also
    ensures that $\spec$ is $\textsf{PTA}(E)/_{\sim}${-}realizable for all
    equivalence relation $\sim$ computed during iterations of the loop, and in particular for
    $\sim_E$.

    Now, suppose that $\textsc{Gen}(E,\spec,\sigma_G)= UNREAL$, then
    either $E$ is not consistent, in which case it is clear that no
    preMealy machine $\MM$ satisfies $E\subseteq L(\MM)$, or $\spec$ is
    not $\textsf{PTA}(E)$-realizable. For the second case, assume that
    there is a preMealy machine $\MM$ such that $E\subseteq L(\MM)$
    and $\spec$ is $\MM$-realizable by some machine $\PP$, and let us derive a contradiction. We show
    that $\spec$ is $\textsf{PTA}(E)$-realizable by some Mealy machine
    $\PP'$ obtained by taking the
    synchronized product of $\PP$ and $\textsf{PTA}(E)$: a state of
    $\PP'$ is either a state $p$ of $\PP$ or a pair $(p,e)$ where $e$ a state of $\textsf{PTA}(E)$. The initial state of the product is $(p_0,\epsilon)$ where
    $p_0$ is the initial state of $\PP$. From a state $(p,e)$ and an
    input $\inp\in\inputs$, if $\transO_\PP(p,\inp) =
    \transO_{\textsf{PTA}(E)}(e,\inp) = \out$ for some
    $\out\in\outputs$, then the product transitions to
    $(\out, (\transS_\PP(p,\inp),\transS_{\textsf{PTA}(E)}))$,
    otherwise, it transitions to $\trans_\PP(p,\inp)$. This is correct
    since $L(\textsf{PTA}(E)) = \prefs(E)\subseteq L(\PP)$. Moreover
    by definition of the product, $\textsf{PTA}(E)$ is a subgraph of
    $\PP'$ (up to state renaming).

    For the complexity, there are $|\prefs(E)|$ visits to the
    loop. Line~\ref{line:mergeable} takes polynomial time (the
    computation of $\sim^{m,m'}$ for any two states $m,m'$ of a preMealy
    machine is in ptime). According to~\ref{thm:sizeSol}, each
    realizability test at Line~\ref{line:real} takes time
    polynomial in the number of states of the preMealy machine and exponential in
    $n$. Here, the number of states of the preMealy machine is smaller
    than the size of $E$. Overall, this gives the claimed complexity.\qed
\end{proof}

\subsection{Proof of Proposition~\ref{prop:equiv}}\label{app:propequiv}

\begin{proof}
    For all $y\in M$, $y\sim y$ hence $y\ U(\sim,p)\ y$. Symmetry is
    also immediate by definition. Let us prove transitivity. If $y\
    U(\sim,p)\ y'$ and $y'\ U(\sim,p)\ y''$, then there are several
    cases:
    \begin{enumerate}
        \item $y\sim y'$ and $y'\sim y''$: hence $y\sim y''$ and so
          $y\ U(\sim,p)\ y''$.
        \item $y\sim y'$ and $y'\sim \transS(x,\inp)$ and $y''\sim
          \transS(x',\inp)$: hence $y\sim \transS(x,\inp)$ and $y''\sim
          \transS(y,\inp)$, so $y\ U(\sim,p)\ y''$.
        \item $y\sim y'$ and $y''\sim \transS(x,\inp)$ and $y'\sim
          \transS(x',\inp)$: hence $y''\sim \transS(x,\inp)$ and $y\sim
          \transS(x',\inp)$, so $y\ U(\sim,p)\ y''$.

        \item $y\sim \transS(x,\inp)$ and $y'\sim\transS(x',\inp)$ and
          $y'\sim y''$: therefore $y''\sim\transS(x',\inp)$ and so $y\ U(\sim,p)\ y''$.

        \item $y\sim \transS(x,\inp)$ and $y'\sim\transS(x',\inp)$ and $y'\sim \transS(x,\inp)$ and $y''\sim
          \transS(x',\inp)$: so, $y\sim y'$ and $y'\sim y''$, which
          implies $y\sim y''$ and so $y\ U(\sim,p)\ y''$.
        \item $y\sim \transS(x,\inp)$ and $y'\sim\transS(x',\inp)$ and $y''\sim \transS(x,\inp)$ and $y'\sim
          \transS(x',\inp)$: we directly get $y\sim y''$ and so $y\ U(\sim,p)\ y''$.

          The remaining cases are symmetrical to the cases already
          proved by substituting $y$ by $y''$ and $y''$ by $y$. 
      \end{enumerate}

\qed

\end{proof}

\subsection{Proof of Lemma~\ref{lem:confluence}}\label{app:confluence}

\begin{proof}

    We define a relation $\rightarrow$ between equivalence relations as
    follows: $\sim\rightarrow \sim'$ if $\sim' = U(\sim, p)$ for some
    some non-congruent point $p$ (if it exists). We prove that
    $\rightarrow$ is locally confluent, which by Newman's lemma
    implies that $\rightarrow$ is globally confluent. In other words,
    the reflexive and transitive closure $\rightarrow^*$ of
    $\rightarrow$ satisfies that whenever $\sim\rightarrow^* \sim_1$
    and $\sim\rightarrow^* \sim_2$, then there exists $\sim'$ such
    that $\sim_1\rightarrow^*\sim'$ and $\sim_2\rightarrow^*
    \sim'$. The proof of local confluence is not difficult but
    technical, as many cases have to be considered. The main idea is
    to show that if $p$ and $p'$ are two different non-congruent points
    of $\sim$, then $p'$ is a non-congruent point of $U(\sim,p)$ and
    $p'$ is a non-congruent point of $U(\sim,p')$, and then we show that
    $U(U(\sim,p),p') = U(U(\sim,p'),p)$.

    Formally, let $p = (x,y,\inp),p=(x',y',\inp')$ be two different non-congruent points for $\sim$. Let
    $\sim_p = U(\sim,p)$ and $\sim_{p'} = U(\sim,p')$. So,
    $\sim\rightarrow \sim_p$ and $\sim_\rightarrow \sim_{p'}$. Assume
    that $\sim_p\neq \sim_{p'}$. We prove that $p'$ is a non-congruent
    point of $\sim_p$ (and symmetrically $p$ is a non-congruent point
    of $\sim_{p'}$). Suppose that $p'$ is not a non-congruent point of
    $\sim_p$. Then, $\transS_\MM(x',\inp)\sim_p
    \transS_\MM(y',\inp)$. We also know that
    $\transS_\MM(x',\inp)\not\sim \transS_\MM(y',\inp)$ because $p'$
    is a non-congruent point of $\sim$. By definition of $\sim_p$, it
    implies that the following symmetrical two cases can happen:
    \begin{enumerate}
        \item $\transS_\MM(x',\inp')\sim \transS_\MM(x,\inp)$ and
          $\transS_\MM(y',\inp')\sim \transS_\MM(y,\inp)$, or
        \item $\transS_\MM(x',\inp')\sim \transS_\MM(y,\inp)$ and
          $\transS_\MM(y',\inp')\sim \transS_\MM(x,\inp)$.
    \end{enumerate}
    We show that both cases imply that $\sim_p = \sim_{p'}$ which is a
    contradiction. Consider the first case. Then,

    $u\sim_p v$ iff

    $u\sim v$, or $u\sim \transS_\MM(x,\inp)$ and
    $v\sim \transS_\MM(y,\inp)$, or $v\sim \transS_\MM(x,\inp)$ and
    $u\sim \transS_\MM(y,\inp)$ iff

    $u\sim v$, or $u\sim \transS_\MM(x',\inp')$ and $v\sim
    \transS_\MM(y',\inp')$, or $v\sim \transS_\MM(x',\inp')$ and
    $u\sim \transS_\MM(y',\inp')$ iff

    $u\sim_{p'} v$.

    The second case is symmetrical. We have just shown that $p'$ is a
    non-congruent point of $\sim_{p}$ and by symmetry, $p$ is a
    non-congruent point of $\sim_{p'}$. We finally prove that
    $U(\sim_p,p') = U(\sim_{p'},p)$, concluding that $\rightarrow$ is
    locally confluent. We show that $U(\sim_p,p')$ is finer than
    $U(\sim_{p'},p)$, the other direction being completly
    symmetrical. Suppose that $u\ U(\sim_p,p')\ v$ and let us show
    that $u\ U(\sim_{p'},p)\ v$. Therefore, we have one the following
    cases:
    \begin{enumerate}
      \item $u\sim_p v$,
        \item $u\sim_p \transS_\MM(x',\inp')$ and
          $v\sim_p\transS_\MM(y',\inp')$,
          \item $u\sim_p \transS_\MM(y',\inp')$ and
    $v\sim_p\transS_\MM(x',\inp')$.
    \end{enumerate}
    Let us consider the first two cases (the third being symmetrical
    to the second):
    \begin{enumerate}
        \item $u\sim_p v$ implies that one of the following three
          cases holds:
          \begin{enumerate}
              \item $u\sim v$: then $u\sim_{p'} v$, and $u
                U(\sim_{p'},p) v$.
              \item $u\sim \transS_\MM(x,\inp)$ and
                $v\sim\transS_\MM(y,\inp)$: then
                $u\sim_{p'}\transS_\MM(x,\inp)$ and
                $v\sim_{p'}\transS_\MM(y,\inp)$, and so $u\
                U(\sim_{p'},p)\ v$.
              \item $u\sim \transS_\MM(y,\inp)$ and
                $v\sim\transS_\MM(x,\inp)$: this case is symmetric to
                the former. 
            \end{enumerate}
          \item $u\sim_p \transS_\MM(x',\inp')$ and
          $v\sim_p\transS_\MM(y',\inp')$ implies that one of the following
          cases hold:
          \begin{enumerate}
            \item $u\sim \transS_\MM(x',\inp')$ and $v\sim
              \transS_\MM(y',\inp')$: so, $u\sim_{p'}  v$ and hence
              $u\ U(\sim_{p'},p)\ v$. 
            \item $u\sim \transS_\MM(x',\inp')$ and $v\sim
              \transS_\MM(x,\inp)$ and $\transS_\MM(y',\inp')\sim
              \transS_\MM(y,\inp)$: hence $v\sim_{p'}
              \transS_\MM(x,i)$ and  $\transS_\MM(y',\inp')\sim_{p'}
              \transS_\MM(y,\inp)$. By definition of $\sim_{p'}$, we
              also have
              $\transS_\MM(y',\inp')\sim_{p'}\transS_\MM(x',\inp')$. Since
              $u\sim \transS_\MM(x',\inp')$, we have $u\sim_{p'}
              \transS_\MM(x',\inp')$ and from the latter statement, we
              get $u\sim_{p'} \transS_\MM(y',\inp')$ and hence $u
              \sim_{p'} \transS_\MM(y,\inp)$. All this imply that
              $u\ U(\sim_{p'},p)\ v$. 
            \item $u\sim \transS_\MM(x',\inp')$ and $v\sim
              \transS_\MM(y,\inp)$ and $\transS_\MM(y',\inp')\sim
              \transS_\MM(x,\inp)$: this case is symmetrical to the
              latter by substituting $y$ by $x$ and $x$ by $y$.

            \item $u\sim \transS_\MM(x,\inp)$ and $\transS_\MM(x',\inp')\sim
              \transS_\MM(y,\inp)$ and $v\sim \transS_\MM(y',\inp')$:
              by definition of $\sim_{p'}$, we have
              $\transS_\MM(x',\inp')\sim_{p'} \transS_\MM(y',\inp')$
              from which we get $v\sim \transS_\MM(y,\inp)$ and
              therefore $v\sim_{p'} \transS_\MM(y,\inp)$. From $u\sim
              \transS_\MM(x,\inp)$ we get $u\sim_{p'}
              \transS_\MM(x,\inp)$. Therefore, $u\ U(\sim_{p'},p)\ v$.
              
            \item $u\sim \transS_\MM(x,\inp)$ and $\transS_\MM(x',\inp')\sim
              \transS_\MM(y,\inp)$ and $v\sim \transS_\MM(x,\inp)$ and $\transS_\MM(y',\inp')\sim
              \transS_\MM(y,\inp)$: we immediately get $u\sim v$, so
              $u\sim_{p'} v$ and hence $u\ U(\sim_{p'},p)\ v$.

            \item $u\sim \transS_\MM(x,\inp)$ and $\transS_\MM(x',\inp')\sim
              \transS_\MM(y,\inp)$ and $v\sim \transS_\MM(y,\inp)$ and $\transS_\MM(y',\inp')\sim
              \transS_\MM(x,\inp)$: from $u\sim \transS_\MM(x,\inp)$
              we get $u\sim_{p'} \transS_\MM(x,\inp)$ and from $v\sim
              \transS_\MM(y,\inp)$ we get $v\sim_{p'}
              \transS_\MM(y,\inp)$, hence $u\ U(\sim_{p'},p)\ v$.

            \item $u\sim \transS_\MM(y,\inp)$ and $\transS_\MM(x',\inp')\sim
              \transS_\MM(x,\inp)$ and $v\sim \transS_\MM(y',\inp')$:
              symmetric of case $(d)$ by swapping $x$ and $y$.
              
            \item $u\sim \transS_\MM(y,\inp)$ and $\transS_\MM(x',\inp')\sim
              \transS_\MM(x,\inp)$ and $v\sim \transS_\MM(x,\inp)$ and $\transS_\MM(y',\inp')\sim
              \transS_\MM(y,\inp)$:          symmetric of case $(e)$ by swapping $x$ and $y$.
            \item $u\sim \transS_\MM(y,\inp)$ and $\transS_\MM(x',\inp')\sim
              \transS_\MM(x,\inp)$ and $v\sim \transS_\MM(y,\inp)$ and $\transS_\MM(y',\inp')\sim
              \transS_\MM(x,\inp)$:          symmetric of case $(f)$ by swapping $x$ and $y$.

          \end{enumerate}
          \end{enumerate}
          \qed
\end{proof}

\subsection{Details and results on the completion phase}\label{app:comple}

Our goal in this section is to prove correctness of
\textsc{Comp}($\MM_0$, $\spec$, $\sigma_C$)  and
termination. The completion procedure  may
not terminate for some completion strategies. It is because the
completion strategy could for instance keep on selecting a pairs of
the form $(\out,m')$ where $m'$ is a fresh state. However we prove
that it always terminates for \emph{lazy} completion strategies, as
defined in Section~\ref{subsec:overview}. Recall that lazy strategies always favour
existing states. We first start by proving correctness.

\begin{lemma}\label{lem:correctnesscompletion}
If the algorithm $\textsc{Comp}(\MM_0,\spec,\sigma_C)$ terminates and returns a Mealy
machine $\MM$, then $\spec$ is $\MM_0$-realized by $\MM$., i.e., $\spec$ is
realizable by $\MM$ and $\MM_0$ is a subgraph of $\MM$ ($\MM_0\preceq
\MM$). 
\end{lemma}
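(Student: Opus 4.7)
\textbf{Plan for the proof of Lemma~\ref{lem:correctnesscompletion}.} The proof will proceed by establishing a loop invariant for the \textbf{while}-loop of Algorithm~\ref{algo:comp} and then reading off the conclusion at termination. Concretely, I will prove by induction on the number of iterations that, at the beginning of each iteration, (i) $\MM_0\preceq \MM$, and (ii) $\spec$ is $\MM$-realizable.

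\emph{Base case.} Before entering the loop, $\MM=\MM_0$, so (i) is trivial. For (ii), the test at line~\ref{line:test1} guarantees that we only reach the loop when $\spec$ is $\MM_0$-realizable; otherwise the algorithm has already returned UNREAL.

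\emph{Inductive step.} Assume (i) and (ii) hold at the start of an iteration, and let $(m,\inp)$ be the selected hole. The first task is to argue that the set $candidates$ computed by the \textbf{for}-loop is non-empty, so that the call to $\sigma_C$ at line~\ref{line:selection} is well-defined. Since $\spec$ is $\MM$-realizable, there exists a Mealy machine $\MM'$ with $\MM\preceq \MM'$ and $L_\omega(\MM')\subseteq \spec$. As $\MM'$ is complete, $\Delta_{\MM'}$ is defined on $(m,\inp)$; write $\Delta_{\MM'}(m,\inp)=(\out,m'')$ where $m''$ is either the image in $\MM'$ of some state $m'$ already in $\MM$, or a state not present in $\MM$ (in which case we set $m'=\textsf{fresh}()$). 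By construction, $\MM_{\out,m'}\preceq \MM'$ (up to the injective mapping witnessing $\MM\preceq \MM'$ extended by $m'\mapsto m''$), hence $L_\omega(\MM_{\out,m'})\subseteq L_\omega(\MM')\subseteq \spec$, so $\spec$ is $\MM_{\out,m'}$-realizable and $(\out,m')\in candidates$. The strategy $\sigma_C$ then returns some $(\out,m')\in candidates$; by the very definition of $candidates$ (line~\ref{line:test2}), $\spec$ is $\MM_{\out,m'}$-realizable, which is precisely the updated $\MM$ computed at line~\ref{line:update}. Thus (ii) is preserved. For (i), the update only adds a new transition (and possibly a new state) to $\MM$, so the injective mapping witnessing $\MM_0\preceq \MM$ is preserved, and (i) still holds.

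\emph{Termination.} Suppose the algorithm terminates and returns $\MM$. Then the \textbf{while}-loop exited because $\MM$ has no hole, i.e., $\trans_\MM$ is total; thus $\MM$ is a (complete) Mealy machine. By the invariant, $\MM_0\preceq \MM$ and $\spec$ is $\MM$-realizable. For a complete Mealy machine, $\PP$-realizability coincides with $\PP$ realizing $\spec$ (as remarked right after Theorem~\ref{thm:sizeSol}), so $\MM\models \spec$. Combined with $\MM_0\preceq \MM$, we conclude that $\spec$ is $\MM_0$-realized by $\MM$, as required.

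The only delicate step is justifying that $candidates\neq\varnothing$ at every iteration; this requires invoking the witnessing Mealy machine $\MM'$ and transferring the output choice $\out$ from $\MM'$ to $\MM_{\out,m'}$, carefully handling the case where the successor state of $(m,\inp)$ in $\MM'$ is not present in $\MM$ (captured by the \textsf{fresh} option in the \textbf{for}-loop at line~\ref{line:candidates}). Everything else is a straightforward bookkeeping of the subgraph relation under addition of transitions.
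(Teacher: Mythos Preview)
Your proof is correct and follows essentially the same approach as the paper: establish the loop invariant that $\MM_0\preceq\MM$ and $\spec$ is $\MM$-realizable, argue that $candidates$ is non-empty at each iteration so the selection is well-defined, and read off the conclusion at termination. Your treatment is in fact more explicit than the paper's in justifying $candidates\neq\varnothing$, by exhibiting a concrete pair $(\out,m')$ extracted from the witnessing machine $\MM'$ and checking that the extended injection still witnesses $\MM_{\out,m'}\preceq\MM'$; the paper simply asserts this step. One minor inaccuracy: the remark that $\PP$-realizability coincides with realization when $\PP$ is complete appears in the paragraph \emph{before} Theorem~\ref{thm:sizeSol}, not after it.
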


\begin{proof}
    Let $\MM_i$ be the machine
    computed after the $i$th iteration of the \textbf{while}-loop.
    As explained before, the tests at lines~\ref{line:test1}
    and~\ref{line:test2} ensures the invariant that each $\MM_i$ $\MM_0$-realizes $\spec$. It is trivial for the
    test at line~\ref{line:test1}. Since iteration $i+1$ of the
    algorithm completes the machine $\MM_i$ into a machine
    $\MM_{i+1}$, we get $\MM_0\preceq \MM_1\preceq \MM_2\dots$.
    Therefore, if $\spec$ is $\MM_i$-realizable, it is also
    $\MM_0$-realizable by $\MM_i$, so the test at
    line~\ref{line:test1} guarantees that all the machines $\MM_i$
    $\MM_0$-realize $\spec$. Moreover, the list of
    candidates at line~\ref{line:selection} is guaranteed to be
    non-empty. Indeed, by the invariant, $\MM_i$ can always be
    completed into a Mealy machine realizing $\spec$. So the selection
    at line~\ref{line:selection} is well-defined. Hence, if the
    algorithm returns a machine, this machine is necessarily a Mealy
    machine, because it has no holes, and moreover it $\MM_0$-realizes $\spec$.\qed
\end{proof}

In the sequel, our goal is to prove termination for lazy
strategies. The following technical lemma is a key lemma towards showing
termination. It gives a sufficient condition for which a state of a
preMealy machine can be reused to complete a hole. We need some
notation. Given a preMealy machine $\MM = (M,m_0,\trans)$ and 
 a deterministic parity automaton $A = (Q, q_0, \inputs\cup\outputs,
 \delta_A,d)$, for all $m\in M$, we let $R_m^{A,\MM}\subseteq Q$ (or just
 $R_m$ if $A$ and $\MM$ are clear from the context, all the states of $A$
 reachable from its initial state when reading words that reach state
 $m$ when read by $\MM$. Formally,
 $$
 R_m^{A,\MM} = \{\transS^*_A(q_0,u)\mid u\in
      (\inputs\outputs)^*, \transS_\MM^*(m_0,u)=m\}
      $$
      Given a subset $Q'\subseteq Q$ and some input
      $u\in(\inputs\outputs)^*$, we let $\transS_A^*(Q,u) = \{
      \transS_A^*(q,u)\mid q\in Q'\}$. 

\begin{lemma}\label{lem:keylemmatermination}
    Let $\spec$ be a safety specification given as a
    (complete\footnote{Automata in this paper are complete by
      definition, i.e. there is always a transition from any state on
      any input, but we stress it here as it is a necessary
      requirement for the statement to hold.}) deterministic safety
    automaton $A = (Q, Q_{usf}, q_0, \delta_A)$. Let $\MM
    = (M,m_0,\delta)$
    be a preMealy-machine such that $\spec$ is $\MM$-realizable. Let
    $(m,\inp)$ be a hole of $\MM$ (if it exists). For all $\out\in\outputs$ and
    $m'\in M$, if $\transS_A^*(R_m,\inp\out) \subseteq R_{m'}$, then
    $\spec$ is $\MM'$-realizable for $\MM' = (M, m_0,\delta\cup \{(m,\inp)\mapsto
    (\out,m')\})$.
\end{lemma}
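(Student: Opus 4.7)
The plan is to invoke the characterization of $\PP$-realizability established inside the proof of Theorem~\ref{thm:sizeSol}, specialized to the safety setting: $\spec$ is $\MM'$-realizable if and only if (a) $L_\omega(\MM') \subseteq \spec$, and (b) for every remaining hole $(p,\inp')$ of $\MM'$ there is an output $\out'\in\outputs$ such that $\transS_A^*(R_p^{A,\MM'},\inp'\out')$ lies in the winning region of the subset-construction safety game over $A$. Both items will be derived from a single central sub-claim.

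\emph{Sub-claim.} $R_p^{A,\MM'} \subseteq R_p^{A,\MM}$ for every state $p$ of $\MM'$. I will prove this by induction on the number $k$ of times the new transition $(m,\inp)\mapsto(\out,m')$ is used along a given witness $u \in \textsf{Left}_p^{\MM'}$. The base case $k = 0$ is immediate since $u$ is then a pure $\MM$-path to $p$. For the inductive step I decompose $u = u_1\,\inp\,\out\,u_2$ at the \emph{last} use of the new transition, so that $u_2$ is a pure $\MM$-path from $m'$ to $p$ while $u_1$ reaches $m$ in $\MM'$ using $k{-}1$ uses of the transition. The induction hypothesis gives $\transS_A^*(q_0,u_1) \in R_m^{A,\MM}$, and the lemma's hypothesis then yields $\transS_A^*(q_0,u_1\inp\out) \in R_{m'}^{A,\MM}$. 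Picking any $\MM$-path $v$ to $m'$ whose $A$-image equals $\transS_A^*(q_0,u_1\inp\out)$, the concatenation $v u_2$ is an $\MM$-path to $p$ achieving the same $A$-state as $u$; this establishes $\transS_A^*(q_0,u) \in R_p^{A,\MM}$.

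With the sub-claim in hand, (a) follows because $\MM$-realizability forces every set $R_p^{A,\MM}$ to contain only safe states of $A$: any witnessing Mealy completion $\TT$ of $\MM$ extends each $u \in \textsf{Left}_p^{\MM}$ into an infinite word of $L_\omega(\TT) \subseteq \spec$, so the $A$-state reached after $u$ is necessarily safe. Thus any prefix of $w \in L_\omega(\MM')$ sits in a safe $A$-state, giving $w \in L(A) = \spec$. For (b), every hole of $\MM'$ is also a hole of $\MM$, so the output $\out'$ that witnesses $\MM$-realizability at that hole still works for $\MM'$: by the sub-claim and downward-closure of safety winning sets under subsets, $\transS_A^*(R_p^{A,\MM'},\inp'\out') \subseteq \transS_A^*(R_p^{A,\MM},\inp'\out')$ remains in the winning region.

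The most delicate point is the inductive decomposition in the sub-claim: paths in $\MM'$ may exercise the new transition arbitrarily often, and the argument only goes through if one peels off the \emph{last} use, so that the suffix is a genuine $\MM$-path and the head can then be re-routed through an $\MM$-reachable witness of $m'$ realizing the same $A$-state.
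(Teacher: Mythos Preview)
Your proof is correct and takes a genuinely different route from the paper's. The paper proceeds constructively: it picks any Mealy machine $\MM_t$ that $\MM$-realizes $\spec$ (normalized so that once the run exits $M$ it never returns), redirects its $(m,\inp)$-transition to $(\out,m')$ to obtain $\MM'_t$, and shows $\MM'_t$ realizes $\spec$ by contradiction. Assuming a shortest unsafe prefix $u$ of some $w\in L_\omega(\MM'_t)$, it decomposes $u$ as $u_1\inp\out u_2\inp\out\cdots\inp\out u_k$ along the visits to the redirected transition and proves inductively that after each such visit the reached $A$-state lies in $R_{m'}^{A,\MM}$; the final segment $u_k$ is then a run of $\MM_t$ from $m'$ reaching an unsafe state, contradicting $\MM_t\models\spec$. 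You instead isolate the inclusion $R_p^{A,\MM'}\subseteq R_p^{A,\MM}$ for \emph{all} states $p$ as a standalone sub-claim and read off both conditions of the $\PP$-realizability characterization of Theorem~\ref{thm:sizeSol}, exploiting downward-closure of the safety winning region under subsets. Both arguments pivot on exactly the same inductive step---peel off the last visit to the new transition and re-route through an $\MM$-witness of $m'$ realizing the same $A$-state---but your packaging is more modular (the sub-claim is independently meaningful, applies uniformly to every state, and lets you reuse Theorem~\ref{thm:sizeSol}), whereas the paper's approach is constructive and hands you the realizing machine explicitly.
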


\begin{proof}
    We keep the same notations as in the statement of the lemma.
    Let $\MM_{t} = (M_t,m_0,\trans_t)$ be a Mealy machine which realizes $\spec$ and such that
    $\MM$ is a subgraph of $\MM_t$, i.e., $\MM\preceq \MM_t$. The
    subscript $t$ stands for the fact that the transition function of
    $\MM_t$ is total. We assume without loss of generality that
    $\MM_t$ has a special form: when $M\subseteq M_t$ is left,
    it is never visited again. Formally, for all $u\in\inputs^*$, if
    $\transS_{\MM_t}^*(m_0,u)\not\in M$, then for all $v\in\inputs^*$, 
    $\transS_{\MM_t}^*(m_0,uv)\not\in M$. The machine $\MM_t$ can be
    modified, using one additional bit of memory, so that it satisfies
    this assumption.

    Let $\out = \transO_{\MM_t}(m,\inp)$ and let $\MM' =
    (M,m_0,\delta' := \delta\cup \{(m,\inp)\mapsto (\out,m')\})$.
    We modify $\MM_t$ into a Mealy machine $\MM'_t$ by redirecting
    the transition from $(m,\inp)$ to $(\out,m')$. Formally, $\MM'_t =
    (M_t, m_0, \trans_t')$ where
    $$
    \trans_t' = (\trans_t \setminus \{
    (m,\inp)\rightarrow \trans_t(m,\inp)\})\cup \{(m,\inp)\mapsto
    (\out,m')\}
    $$
    Clearly, $\MM'$ is a subgraph of $\MM_t'$ because $\MM$ is a
    subgraph of $\MM_t$. It remains to show that $\MM_t'$ realizes
    $\spec$.

    Assume that it is not the case and let us derive a
    contradiction. In other words, there exists $w\in
    L_\omega(\MM'_t)$ ($u$ is a finite word) such that $w\not\in L(A)$.
    Let $u\in(\inputs\outputs)^*$ be the shortest unsafe prefix of $w$,
    i.e. the prefix of $w$ such that the
    states visited by the execution of $A$ on $u$ are safe but the
    last one. We decompose $u$ according to its execution in $\MM'_t$
    and the visit to the new transition $(m,\inp)\mapsto
    (\out,m')$ (call it $t_{new}$). By our assumption on the
    form of $\MM_t$, whenever $\MM$ is left, it is never visited
    again. By definition of $\MM'_t$, we therefore have that whenever
    $\MM'$ is left, it is never visited again. So, the execution of $\MM'_t$ on $u$ visits $t_{new}$ a
    couple of times (at least once) while staying in $\MM$ and
    after the last visit to $t_{new}$, is continued by the execution
    of $\MM_t$ on the remaining suffix of $u$, ending in an unsafe
    state. Formally, there exist $u_1,\dots,u_k\in
    (\inputs\outputs)^*$ and $p_1,p'_1,\dots,p_{k-1},p'_{k-1},p_k\in
    Q$ such that $u = u_1\inp\out u_2\inp\out\dots \inp\out u_k$ and
    $$
    \begin{array}{rllllllllllllllll}
      q_0  & \xrightarrow{u_1}_{A} & p_1 &
                                                  \xrightarrow{\inp\out}_A
      & p'_1 & \xrightarrow{u_2}_\mathsf{assist} & p_2 & \dots & p'_{k-1} &
                                                              \xrightarrow{u_k}
      & p_k\in Q_{usf} \\
      m_0 & \xrightarrow{u_1}_\MM & m & \xrightarrow{\inp|\out}
      & m' & \xrightarrow{u_2}_\MM & m & \dots & m' &
                                                       \xrightarrow{u_k}_{\MM_t}
      & m''
    \end{array}
    $$
    We prove that for all $1\leq j\leq k-1$, there exists
    $x_j\in(\inputs\outputs)^*$ such that $\transS_\MM(m_0,x_j) = m'$
    and $\transS_A(p_0,x_j) = p'_j$. We prove by induction on $j$. For
    $j=1$, note that we have $p_1\in R_m^{A,\MM}$, and so $p'_1\in
\transS_A(R_m^{A,\MM},\inp\out)$. So, $p'_1\in R_{m'}^{A,\MM}$. It
implies that there exists $x_1\in(\inputs\outputs)^*$ satisfying the
claim. Suppose it is true at rank $j-1$. So, there exists $x_{j-1}$
such that $\transS_\MM(m_0,x_{j-1}) = m'$
    and $\transS_A$ $(p_0,x_{j-1}) = p'_{j-1}$. Therefore,
    $\transS_\MM(m_0,x_{j-1}u_j) = m$ and $\transS_A(p_0,x_{j-1}u_j) =
    p_j$. This implies that $p_j\in R_m^{A,\MM}$, and so $p'_j\in
\transS_A(R_m^{A,\MM},\inp\out)$, and so there exists $x_j\in(\inputs\outputs)^*$ satisfying the
claim at rank $j$.

Now, consider the word $x_{k-1}u_k$ :  $\transS_A(p_0,x_{k-1}u_k) =
p_k\in Q_{usf}$ and $\transS_{\MM_t}$ $(m_0,x_{k-1}u_k)$ $= m''$. This contradicts that $\MM_t$ realizes $L(A)$. \qed
\end{proof}

We are now ready to prove termination of Algo~\ref{algo:comp}. To
establish the complexity, we need to introduce some notions about
chains and antichains of subsets. Let $Y$ be a finite set of
cardinality $n$. It is well-known that the set $2^Y$ is partially
ordered by inclusion. Therefore, an antichain of elements of $2^Y$ is
a set $\mathcal{Y}\subseteq 2^Y$ such that for all $Y_1,Y_2\in
\mathcal{Y}$, $Y_1$ and $Y_2$ are incomparable by $\subseteq$. We
denote by $\anti_\subseteq(Y)$ the set of $\subseteq$-antichains over $Y$. The set
$\anti_\subseteq(Y)$ can be partially ordered by the partial order
denoted $\trianglelefteq$: for all
$\mathcal{Y}_1,\mathcal{Y}_2\in \anti_\subseteq(Y)$, $\mathcal{Y}_1
\trianglelefteq \mathcal{Y}_2$ if for all $Y_1\in \mathcal{Y}_1$,
there exists $Y_2\in \mathcal{Y}_2$ such that $Y_1\subseteq Y_2$. It
is well-known that $(\anti_\subseteq(Y),\trianglelefteq)$ is a lattice. A
chain in $(\anti_\subseteq(Y),\trianglelefteq)$ is a sequence $\mathcal{Y}_1\vartriangleleft
\mathcal{Y}_2 \vartriangleleft\dots \vartriangleleft
\mathcal{Y}_m$ (note that all relations are strict). The following
lemma is key to bound the
termination time:

\begin{lemma}\label{lem:anti}
    Let $Y$ be a set of cardinality $n$. 
    Any $\vartriangleleft$-chain in $(\anti_\subseteq(Y),\trianglelefteq)$
     has length at most $2^n$. 
\end{lemma}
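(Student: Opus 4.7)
The plan is to reduce the problem to counting strict chains of subsets in a power set. The key idea is to encode each antichain $\mathcal{Y} \in \anti_{\subseteq}(Y)$ by its downward closure
\[
\downarrow \mathcal{Y} \;=\; \{\, Z \subseteq Y \mid \exists Y' \in \mathcal{Y},\ Z \subseteq Y'\,\} \;\subseteq\; 2^Y,
\]
which is a downward-closed subset of $2^Y$ ordered by inclusion. First I would verify that this correspondence is a bijection: every downward-closed subset $D \subseteq 2^Y$ is uniquely recovered as the downward closure of the antichain of its $\subseteq$-maximal elements, and conversely $\downarrow \mathcal{Y}$ is always downward-closed and its maximal elements form exactly $\mathcal{Y}$ since $\mathcal{Y}$ is itself an antichain.

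Next I would show that this encoding transports the order $\trianglelefteq$ to set inclusion on $2^{2^Y}$, namely $\mathcal{Y}_1 \trianglelefteq \mathcal{Y}_2$ iff $\downarrow \mathcal{Y}_1 \subseteq \downarrow \mathcal{Y}_2$. The forward direction is immediate from the definition of $\trianglelefteq$: any $Y_1 \in \mathcal{Y}_1$ is included in some $Y_2 \in \mathcal{Y}_2$, hence $Y_1 \in \downarrow \mathcal{Y}_2$, and this extends to the full downward closure by transitivity. The backward direction uses that for $Y_1 \in \mathcal{Y}_1 \subseteq \downarrow \mathcal{Y}_1 \subseteq \downarrow \mathcal{Y}_2$, there must exist a witness $Y_2 \in \mathcal{Y}_2$ with $Y_1 \subseteq Y_2$. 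Consequently $\mathcal{Y}_1 \vartriangleleft \mathcal{Y}_2$ iff $\downarrow \mathcal{Y}_1 \subsetneq \downarrow \mathcal{Y}_2$, since the encoding is injective.

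Finally I would conclude: a strict chain $\mathcal{Y}_1 \vartriangleleft \mathcal{Y}_2 \vartriangleleft \cdots \vartriangleleft \mathcal{Y}_m$ maps to a strict chain of subsets of $2^Y$ of the same length, and any such strict chain has length at most $|2^Y| = 2^n$. There is no real obstacle in this argument; the only thing to be careful about is the correctness of the bijection between antichains and downward-closed families, which is a standard fact about posets but worth stating explicitly so that the transfer of the order is unambiguous.
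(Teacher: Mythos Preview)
Your proposal is correct and follows essentially the same approach as the paper's proof: both encode an antichain $\mathcal{Y}$ by its downward closure $\downarrow\mathcal{Y}\subseteq 2^Y$, verify that $\mathcal{Y}_1\vartriangleleft\mathcal{Y}_2$ iff $\downarrow\mathcal{Y}_1\subsetneq\downarrow\mathcal{Y}_2$, and conclude by bounding the length of a strict $\subsetneq$-chain of subsets of $2^Y$. Your version is slightly more explicit about the bijection with downward-closed families, but the argument is the same.
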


\begin{proof}
    Given an antichain $\mathcal{Y}\in \anti_\subseteq(Y)$, we let
    $\downarrow \mathcal{Y}$ be the downward closure of $\mathcal{Y}$,
    i.e., $
    \downarrow \mathcal{Y} = \{ X\subseteq Y\mid \exists X'\in
    \mathcal{Y},X\subseteq X'\}
    $.
    Now, observe that for all $\mathcal{Y}_1,\mathcal{Y}_2\in
    \anti_\subseteq(Y)$, we have
    $\mathcal{Y}_1\vartriangleleft\mathcal{Y}_2$ iff $\downarrow
    \mathcal{Y}_1\subsetneq  \downarrow
    \mathcal{Y}_2$. Indeed, for the 'only if' direction, assume that 
    $\mathcal{Y}_1\vartriangleleft\mathcal{Y}_2$ and take $X_1\in 
{\downarrow} \mathcal{Y}_1$. Hence there exists $Y_1\in \mathcal{Y}_1$
s.t. $X_1\subseteq Y_1$. So, there exists $Y_2\in \mathcal{Y}_2$
s.t. $Y_1\subseteq Y_2$, which implies that $X_1\in {\downarrow}
\mathcal{Y}_2$. Conversely, suppose that $\downarrow
    \mathcal{Y}_1\subsetneq  \downarrow
    \mathcal{Y}_2$. Let $Y_1\in \mathcal{Y}_1$. Then, $Y_1\in
    {\downarrow} \mathcal{Y}_1$, so $Y_1\in {\downarrow} \mathcal{Y}_2$,
    which means that there exists $Y_2\in\mathcal{Y}_2$ s.t.
    $Y_1\subseteq Y_2$. This observation implies that the length of any $\vartriangleleft$-chain of
    antichains is at most the length of a maximal $\subsetneq$-chain
    of subsets of ${2^Y}$, which is at most $2^{|Y|}$. \qed
\end{proof}

We now prove termination (assuming the completion strategy $\sigma_C$
is computable in exptime). 

\begin{lemma}\label{lem:terminationintermediate}
   If $\sigma_C$ is lazy, \textsc{Comp}$(\MM_0,S,\sigma_C)$ terminates
   in time polynomial in $|\inputs|$ and the number of holes in
   $\MM_0$, and exponential in the number of states of the
   deterministic safety automaton defining $\spec$, and the number of
   states of $\MM_0$. 
\end{lemma}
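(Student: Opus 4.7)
The plan is to bound the number of iterations of the \textbf{while}-loop of \textsc{Comp} by first bounding the number of fresh states that can ever be created, and then multiplying this with the per-iteration cost. The bound on fresh states will come from combining Lemma~\ref{lem:keylemmatermination} (used contrapositively), the laziness of $\sigma_C$, and the antichain bound of Lemma~\ref{lem:anti}.

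For any intermediate preMealy machine $\MM = (M, m_{\sf init}, \trans)$ arising during the execution, I will consider the antichain $\mathcal{A}(\MM) = \lceil \{R_m^{A,\MM} \mid m \in M\}\rceil_{\subseteq} \in \anti_{\subseteq}(Q)$, where $A$ is the deterministic safety automaton defining $\spec$ and $Q$ its set of states, of cardinality $n$. The first step will be to observe that $\mathcal{A}$ is $\trianglelefteq$-monotone along the execution: if an iteration transforms $\MM$ into $\MM'$, then $\mathcal{A}(\MM) \trianglelefteq \mathcal{A}(\MM')$, because adding a transition can only enlarge existing reachability sets or add a fresh one.

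The core step will be to show that whenever an iteration creates a fresh state, the antichain strictly increases: $\mathcal{A}(\MM) \vartriangleleft \mathcal{A}(\MM')$. Suppose the iteration fills a hole $(m,\inp)$ with a new transition $(m,\inp) \mapsto (\out, m')$ where $m'$ is fresh. Since $\sigma_C$ is lazy and still picks a fresh target, the candidate list for $(m,\inp)$ cannot contain any pair $(\out'', m'')$ with $m''\in M$; equivalently, for every $\out''\in\outputs$ and every $m''\in M$, $\spec$ is not $\MM_{\out'',m''}$-realizable. Applying the contrapositive of Lemma~\ref{lem:keylemmatermination} to the chosen $\out$ and every existing $m''\in M$ will yield $R_{m'}^{A,\MM'} = \transS_A^*(R_m^{A,\MM}, \inp\out) \not\subseteq R_{m''}^{A,\MM}$. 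Because the added transition targets the fresh state $m'$, the $R$-sets of the pre-existing states are unchanged between $\MM$ and $\MM'$, so $R_{m'}^{A,\MM'}$ will be a maximal element of $\mathcal{A}(\MM')$ dominated by no element of $\mathcal{A}(\MM)$; hence $\mathcal{A}(\MM') \not\trianglelefteq \mathcal{A}(\MM)$, which combined with monotonicity gives the strict inequality.

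Putting these pieces together, the sequence of antichains produced at fresh-state creations will form a strict $\vartriangleleft$-chain in $(\anti_{\subseteq}(Q), \trianglelefteq)$, which by Lemma~\ref{lem:anti} has length at most $2^n$. Therefore at most $2^n$ fresh states are ever created, the preMealy machine has at most $m_0 + 2^n$ states at any time (writing $m_0$ for the number of states of $\MM_0$), and the number of iterations of the \textbf{while}-loop is at most the total number of holes ever filled, i.e.\ $(m_0 + 2^n)\cdot|\inputs|$. Each iteration runs at most $|\outputs|\cdot(m_0+2^n+1)$ $\PP$-realizability tests, each in time polynomial in $m_0+2^n$ and exponential in $n$ by Theorem~\ref{thm:sizeSol}, plus one call to $\sigma_C$ within the assumed cost bound. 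Multiplying these bounds will yield the claimed complexity. The main obstacle is the strict-increase step: it requires combining the laziness of $\sigma_C$ with the merely \emph{sufficient} condition of Lemma~\ref{lem:keylemmatermination} to extract genuine incomparability of $R_{m'}^{A,\MM'}$ with every previously maximal reachability set, crucially exploiting that a transition toward a fresh state does not silently enlarge any pre-existing $R$-set.
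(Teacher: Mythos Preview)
Your proposal is correct and follows essentially the same route as the paper: both arguments track the antichain of maximal sets $R_m^{A,\MM}$, use monotonicity across iterations, invoke Lemma~\ref{lem:keylemmatermination} contrapositively together with laziness to show a strict $\vartriangleleft$-increase whenever a fresh state is created, and then apply Lemma~\ref{lem:anti} to bound the number of fresh states by $2^{|Q|}$. Your presentation is slightly more direct (bounding the number of fresh states immediately, rather than first arguing that the antichain sequence stabilizes at some index $\alpha$ and then separately counting increase versus decrease steps), but the underlying mechanism is the same and the resulting bounds coincide.
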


\begin{proof}
    Let $A = (Q,q_0,\delta,d)$ be a deterministic parity automaton
    defining the specification. Let $\MM_0 = (M_0,m_0,\trans_0)$ be a preMealy machine such that $\spec$
    is $\MM_0$-realizable (otherwise the algorithm terminates at
    line~\ref{line:test1}). Suppose that the algorithm does not
    terminate and for all $i\geq 1$, let $\MM_i = (M_i,m_0,\trans_i)$ be the preMealy-machine
    computed at the $i$th-iteration of the \textbf{while}-loop.
    For all $i\geq 0$ and $m\in M_i$, we define $R_m^i$ as a shortcut
    for $R_m^{A,\MM_i}$. Since $m$ is a state of $\MM_i$ and $\MM_i$ is a subgraph of
    $\MM_{i+1}$, the set of words reaching $m$ in $\MM_i$
    is included in the set of words reaching $m$ in
    $\MM_{i+1}$. Therefore we obtain the following
    monotonicity property: $R_m^i\subseteq
    R_m^{i+1}$.

    Let $\mathcal{X}_i = \{ R_m^i\mid m\in M_i\}$. Let denote by
    $\lceil \mathcal{X}_i \rceil$ the maximal elements of
    $\mathcal{X}_i$ for inclusion. By the monotonicity
    property, we get that the sequence $(\lceil \mathcal{X}_i
    \rceil)_{i\geq 0}$ eventually stabilizes: there exists $\alpha$
    such that for all $i\geq \alpha$, $\lceil \mathcal{X}_i
    \rceil = \lceil \mathcal{X}_{i+1}
    \rceil$. 

    Consider a machine $\MM_j$ for $j\geq \alpha$ and the hole $(m,\inp)$ of
    $\MM_j$ completed at iteration $j$ by the algorithm,
    i.e. $\trans_{j+1}(m,\inp)=(\out,m')$ for some
    $\out\in\outputs$ and $m'\in M_{j+1}$. We claim that $m'\in M_j$,
    i.e., the algorithm has reused some existing state of
    $\MM_j$. Indeed, consider the set $R_{m,\inp,\out}^j =
    \transS_A^*(R_m^j,\inp\out)$. We have that
    $R_{m,\inp,\out}^j\subseteq R_{m'}^{j+1}$ because
    $\MM_{j+1}$ transitions from $(m,\inp)$ to $(\out,m')$. Clearly,
    there exists $X\in \lceil \mathcal{X}_{j+1} \rceil$ such that
    $R_{m,\inp,\out}^j \subseteq R_{m'}^{j+1} \subseteq X$ and  since $\lceil \mathcal{X}_j \rceil =
    \lceil \mathcal{X}_{j+1} \rceil$, $X\in \lceil \mathcal{X}_{j}
    \rceil$. By definition of $\mathcal{X}_{j}$, $X = R_{m''}^j$ for
    some $m''\in M_j$. Hence, $R_{m,\inp,\out}^j\subseteq R_{m''}^j$.
    Let $\MM' =  (M_j, m_0, \delta_i\cup \{(m,\inp)\mapsto
    (\out,m'')\}$. By Lemma~\ref{lem:keylemmatermination}, $\spec$ is
    $\MM'$-realizable, therefore at iteration $j$, the set
    $candidates$ at line~\ref{line:selection} contains the pair
    $(\out,m'')$. Since the selection strategy is lazy, if $m'\not\in
    M_j$, it would favour $m''$. Hence, $m'\in M_j$.

    We have just proved that $\MM_{j+1}$ has strictly one less hole than
    $\MM_j$, for all $j\geq \alpha$. It implies that if $\MM_\alpha$
    has $k$ holes, then $\MM_{\alpha+k}$ has no holes, and the
    algorithm terminates, contradiction.

    We now bound the number of iterations of the \textbf{while}-loop
    before termination. We only give the main ideas. 
    For all iteration $j$ of the algorithm,
    we let $H_j$ be the number of holes of $\MM_j$. The following
    claim is proved in App.~\ref{app:terminationsafetyclaim}:

    \noindent \textit{Claim.} For all $j{<}\alpha$, either $H_{j+1} = H_j - 1$ and $\lceil \mathcal{X}_j \rceil
    \trianglelefteq \lceil \mathcal{X}_{j+1} \rceil$, or
    $H_{j+1} = H_j + |\inputs|-1$ and $\lceil \mathcal{X}_j \rceil
    \vartriangleleft \lceil \mathcal{X}_{j+1} \rceil$.

Call the first case
    of the claim a \emph{decrease step} and the other case an
    \emph{increase step}. Thanks to Lemma~\ref{lem:anti}, the maximal
    number of increase steps is bounded by $2^{|Q|}$. This allows us
    to bound the number of decrease steps as well. A simple
    calculation detailed in App.~\ref{app:terminationsafetyup} entails that the
    number of iterations before termination is bounded by $
    2^{|Q|}+ k_0+ (2^{|Q|}+1)(|\inputs|-1) \in
    O(k_0+|\inputs|.2^{|Q|})$.

    The overall time complexity for $\textsc{Comp}$ to terminate is
    then the number of iterations bounded by the expression above,
    multiplied by the time complexity of each
    inner-computation of the \textbf{while}-loop, which is dominated by the
    complexity of checking $\MM_{\out,m'}$-realizability, which is
    polynomial in the number of states of $\MM_{\out,m'}$ and
    exponential in the number of states of $A$, by
    Theorem~\ref{thm:sizeSol}.\qed
\end{proof}

\subsection{Proof of the claim used in Lemma~\ref{lem:terminationintermediate}}\label{app:terminationsafetyclaim}
        
We first start by proving the claim:

    \textit{Claim} At each iteration $j<\alpha$, either $H_{j+1} = H_j - 1$ and $\lceil \mathcal{X}_j \rceil
    \trianglelefteq \lceil \mathcal{X}_{j+1} \rceil$, or
    $H_{j+1} = H_j + |\inputs|-1$ and $\lceil \mathcal{X}_j \rceil
    \vartriangleleft \lceil \mathcal{X}_{j+1} \rceil$.

    \begin{proof}
    By the
    monotonicity property, we always have that $\lceil \mathcal{X}_j \rceil
    \trianglelefteq \lceil \mathcal{X}_{j+1} \rceil$. Now, suppose that
    at step     $j$ some of the states of $\MM_j$ can be reused to complete the
    selected hole, then $H_{j+1} = H_j - 1$.

    Suppose now that at step $j$, a new state has been created, and
    let $h = (m,\inp)$ be the hole selected for completion. Clearly,
    $H_{j+1} = H_j+|\inputs|-1$ since adding a new state creates
    $|\inputs|$ holes. Since no state could be reused, it implies by
    Lemma~\ref{lem:keylemmatermination} that for all $\out\in\outputs$
    and all $m'\in M_j$, $\transS^*_A(R_m^j,\inp\out)\not\subseteq
    R_{m'}^j$. So, in particular,
    $\transS^*_A(R_m^j,\inp\out)\not\subseteq X$ for all $X\in \lceil
    \mathcal{X}_j \rceil$.  Let $(m,\inp)\mapsto (\out,f)$
    be the new transition added to $\MM_j$, where $f\not\in M_j$. Note
    that $R_f^{j+1} = \transS^*_A(R_m^j,\inp\out)$. Therefore, $R_f^{j+1}\not\subseteq
    X$ for all $X\in \lceil \mathcal{X}_j \rceil$. Since $R_f^{j+1}\in
    \mathcal{X}_{j+1}$, there exists $Y\in \lceil \mathcal{X}_{j+1}
    \rceil$ such that $R_f^{j+1}\subseteq Y$. Necessarily, $Y\not\in \lceil \mathcal{X}_{j+1}
    \rceil$, proving that $\lceil \mathcal{X}_{j} \rceil\neq \lceil
    \mathcal{X}_{j+1} \rceil$. So, $\lceil \mathcal{X}_{j} \rceil\vartriangleleft \lceil
    \mathcal{X}_{j+1} \rceil$.\qed
\end{proof}

\subsection{Proof of the upper-bound given in the proof of Lemma~\ref{lem:terminationintermediate}}\label{app:terminationsafetyup}

The following proof details the calculation done to bound the number
of iterations, call it $\beta$, of the \textbf{while}-loop before termination.

\begin{proof}
    
    Let $j_1<j_2<\dots <j_t\leq \alpha$ be the
    iterations corresponding to an increase step. We also let
    $j_{t+1} = \beta$ and $j_0 = 0$. For all $0\leq \ell\leq t+1$,
    let $x_\ell$ be the number of holes in the machine
    $\MM_{j_\ell}$. In particular, $x_{t+1} = 0$. 
    Then, we have the following relation:
    $$
    x_0 = k_0\qquad x_{\ell+1} = x_{\ell} - (j_{\ell+1} - j_{\ell} - 1) +
    |\inputs| - 1
    $$
    Indeed, in between two increase steps $j_\ell$ and $j_{\ell+1}$,
    there is $j_{\ell+1}-j_\ell-1$ decrease steps and the increase
    step $j_{\ell+1}$ adds $|\inputs|-1$ holes. 
    
    Clearly, $\beta$ is bounded by the maximal number of increase
    steps plus the maximal number of decrease steps. The latter
    corresponds to
    $$
    \sum_{\ell=0}^{t} (j_{\ell + 1} - j_\ell - 1) = \sum_{\ell=0}^{t}
    (x_\ell - x_{\ell+1} + |\inputs|-1) $$
    $$ = x_0 - x_{t+1}
    +(t+1)(|\inputs|-1) = k_0 + (t + 1) (|\inputs|-1)
    $$
    Moreover, $t$ is bounded by $2^{|Q|}$, and therefore
    $$
    \beta\leq 2^{|Q|}+ k_0+ (2^{|Q|}+1)(|\inputs|-1) \in O(k_0+|\inputs|.2^{|Q|})
    $$
\qed
\end{proof}

\subsection{Proof of Theorem~\ref{thm:terminationcorrectness}}\label{sec:proofs}

The proof of this theorem is based on a series of Lemmas proved in
App.~\ref{app:learningframework}. Let us give an overview of how
this appendix is structured:

\begin{enumerate}
  \item App.~\ref{app:gen-app} formally defines the notions used
    in algorithm \textsc{Gen}: prefix-tree acceptor, state merging,
    and in particular the notions of mergeable classes and the result
    of merging them, together with
    examples. Lemma~\ref{lem:soundnessmerge} states properties about
    merging. 

  \item Lemma~\ref{lem:charac} in App.~\ref{app:terminationgen} proves correctness properties about the
    generalization phase (algorithm \textsc{Gen}), and provides an
    analysis of its termination time. 

  \item App.~\ref{app:comple} establishes correctness of the
    completion phase (Lemma~\ref{lem:correctnesscompletion}). Then, it
    provides a complexity analysis when the completion strategy is
    lazy (Lemmas~\ref{lem:keylemmatermination},~\ref{lem:anti}
    and~\ref{lem:terminationintermediate}). 
\end{enumerate}

We now have all the ingredients to prove
Theorem~\ref{thm:terminationcorrectness}. First, we prove the
correctness part of the statement. Suppose that
$\textsc{SynthSafe}(E,\spec,\sigma_G,\sigma_C)$ terminates, then there are two
cases:

\begin{enumerate}
    \item $\textsc{SynthSafe}$ $(E,\spec,\sigma_G,\sigma_C) = \textsc{UNREAL}$, then
      either $\textsc{Gen}$ $(E,\spec,\sigma_G) = $ $\textsc{UNREAL}$, and we get the
      result by Lemma~\ref{lem:charac}, or
      $\textsc{Gen}(E,\spec,\sigma_G)$ returns some preMealy machine
      $\MM_0$ such that  $\spec$ is not $\MM_0$-realizable. This case is
      impossible: by Lemma~\ref{lem:charac}, $\spec$ is
      necessarily $\MM_0$-realizable.

    \item $\textsc{SynthSafe}(E,\spec,\sigma_G,\sigma_C)$ returns a Mealy
      machine $\MM$. Let $\MM_0$ be the preMealy machine returned by 
      $\textsc{Gen}(E,\spec,\sigma_G)$. Then by
      Lemma~\ref{lem:charac}, $\spec$ is $\MM_0$-realizable and
      $E\subseteq L(\MM_0)$. From
      Lemma~\ref{lem:correctnesscompletion} we get that $\spec$ is
      $\MM_0$-realizable by $\MM$. Therefore, $E\subseteq L(\MM)$ and
      $\spec$ is realizable by $\MM$. 
  \end{enumerate}

We now prove termination in the case of lazy strategies, together with
the complexity. First, $\textsc{Gen}(E,\spec,\sigma_G)$ always
terminate, in time polynomial in the size of $E$ and exponential in
$n$ the number of states of $\atm$, according to
Lemma~\ref{lem:charac}. If $\textsc{Gen}(E,\spec,\sigma_G) \neq
UNREAL$, then it outputs a preMealy machine $\MM_0$ obtained by merging states of \textsf{PTA}(E),
hence it has less states than the size of $E$. From
Lemma~\ref{lem:terminationintermediate},
$\textsc{Comp}(\MM_0,\spec,\sigma_c)$ terminates, in time polynomial in
$|\inputs|$ and the number of holes of $\MM_0$ (which is bounded by
$|E|$), and exponential in $n$. This yields the claimed
complexity.

\subsection{Proof of Theorem~\ref{thm:mealycompleteness}}
\label{app:proofmealycompleteness}

\begin{proof}
    We prove that the generalizing phase of \textsc{SynthLearn} is
    already complete, i.e., given a well-chosen set of examples, it
    already returns $\MM$. So, the completion phase immediately returns $\MM$
    as well, as there is no holes in $\MM$. The Mealy completeness
    result for the generalizing phase is stated in
    Lemma~\ref{lem:learnability} below. \qed
\end{proof}

The next result states that any
minimal Mealy machine realizing a given specification can be learnt
when given as input a set of examples which includes a characteristic
set of examples of polynomial size in the size of the machine.

\begin{lemma}\label{lem:learnability}
    For all specification $\spec$ given as a det. safety automaton and all minimal Mealy machine $\TT$ realizing
    $\spec$, there exists $E_\TT\subseteq (\inputs\outputs)^*$ of polynomial
    size (in the number of states and transitions of $\TT$) such that for all merging
    strategy $\sigma_G$ and all finite set $E$ s.t. $E_\TT\subseteq
    E\subseteq L(\TT)$, $\textsf{GEN}(E,\spec,\sigma_G) = \TT$.
\end{lemma}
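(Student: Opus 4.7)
The plan is to construct an explicit characteristic sample $E_\TT$ and show that, regardless of the merging strategy $\sigma_G$ and of any extra examples chosen from $L(\TT)$, the algorithm \textsc{GEN} reconstructs $\TT$ up to isomorphism. Let $\TT=(T,t_0,\trans_\TT)$ have $n$ states and $m$ transitions, and for every example $w\in(\inputs\outputs)^*$ write $t_w:=\transS_\TT^*(t_0,\inproj(w))$. For each $t\in T$, fix $s_t\in\inputs^*$ to be a length-lex-minimal input word with $\transS_\TT^*(t_0,s_t)=t$, and set the \emph{access example} $e_t:=\fIO^\TT(s_t)$; observe $|s_t|\le n-1$. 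For each transition $p=(t,\inp)$, include the \emph{transition example} $e_p:=\fIO^\TT(s_t\inp)$. For every pair $t\neq t'$, using minimality of $\TT$ choose a shortest \emph{distinguishing word} $d_{t,t'}\in\inputs^*$ with $\transO_\TT^*(t,d_{t,t'})\neq \transO_\TT^*(t',d_{t,t'})$ (polynomial in $n$), and include $v_{t,t'}:=\fIO^\TT(s_t\,d_{t,t'})$ and $v_{t',t}:=\fIO^\TT(s_{t'}\,d_{t,t'})$. Setting $E_\TT$ to the union of these gives a set of cardinality $O(m+n^2)$, with each example of length $O(n^2)$, so size polynomial in $|\TT|$.

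Fix any $E$ with $E_\TT\subseteq E\subseteq L(\TT)$. Since $\TT$ is deterministic and $E\subseteq L(\TT)$, $E$ is consistent. Moreover $\spec$ is $\textsf{PTA}(E)$-realizable: take the Mealy machine obtained by grafting $\TT$ on the frontier of $\textsf{PTA}(E)$ via the natural map $u\mapsto t_u$. This map is well defined because $L(\textsf{PTA}(E))\subseteq L(\TT)$, so the outputs produced while traversing $\textsf{PTA}(E)$ coincide with those produced by $\TT$; any infinite run of the glued machine is therefore (up to a finite prefix) a run of $\TT$, and thus lies in $\spec$. Consequently \textsc{GEN} enters the main loop.

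The heart of the argument is the following invariant, to be proved by induction on the outer \textbf{for}-loop of Algo.~\ref{algo:gen}: after processing example $e$ in length-lex order, the current equivalence $\sim$ satisfies $u\sim v\iff t_u=t_v$ for all $u,v\in\prefs(E)\cap(\inputs\outputs)^*$ with $u,v\preceq_{ll}e$. The inductive step splits into two sub-arguments. First, because $|s_t|\le n-1$, $e_t$ has been processed before any other example reaching $t$; so by the induction hypothesis, the unique class reaching $t_e$ contains $e_{t_e}$. Second, I analyse the candidates that reach line~\ref{line:choice}. If a candidate $e'\prec_{ll}e$ satisfies $t_{e'}\neq t_e$, then both $e_{t_e}\in[e']_\sim$'s rival class and $e_{t_{e'}}$ belong to the respective classes, and by construction of $v_{t_e,t_{e'}}$ and $v_{t_{e'},t_e}$ the input sequence $d_{t_e,t_{e'}}$ is realised as a full path in $\textsf{PTA}(E)$ from both $e_{t_e}$ and $e_{t_{e'}}$; the congruence-closure computation therefore propagates the merge step by step along $d_{t_e,t_{e'}}$ until it hits the first position where the two outputs differ, at which point the Mealy-congruence condition fails and $\textsf{Mergeable}$ returns false. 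Conversely, if $t_{e'}=t_e$, then the quotient produced by merging $[e]$ and $[e']$ admits the natural map to $\TT$ as an embedding (it identifies states that reach the same $\TT$-state), so Mealy-congruence is automatic and $\TT$ itself witnesses $\spec$-realizability at line~\ref{line:real}. Hence after filtering, $\textit{mergeCand}$ consists exactly of the elements $e'\prec_{ll}e$ with $t_{e'}=t_e$, which all lie in the same $\sim$-class by induction, so the choice made by $\sigma_G$ is immaterial and the invariant is preserved.

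Once the loop terminates, the classes of $\sim$ are in bijection with the states of $\TT$, and the transition examples $e_p$ ensure that from every class and every input the quotient has a defined transition equal to the corresponding transition of $\TT$. Thus $\textsf{PTA}(E)/_\sim$ is (isomorphic to) $\TT$ itself, a complete Mealy machine, and it is returned by \textsc{GEN}. The main technical obstacle in filling in this plan will be the careful combinatorial analysis of the congruence-closure propagation in the "no spurious merge" case: one must confirm that the propagation really does follow the distinguishing path $d_{t_e,t_{e'}}$ rather than halting earlier for some unrelated reason, which boils down to showing that in every intermediate equivalence relation along the closure, the two frontier states admit the next input of $d_{t_e,t_{e'}}$ with outputs forced by the inclusion of $v_{t_e,t_{e'}}$ and $v_{t_{e'},t_e}$ in $E_\TT$.
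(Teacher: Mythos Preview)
Your plan mirrors the paper's proof: the same characteristic sample (access examples $e_t$, transition examples $e_p$, and distinguishing examples $v_{t,t'}=\fIO^\TT(s_t\,d_{t,t'})$), the same loop invariant on the length-lex--processed prefix, and the same two-way case split on whether a candidate $e'$ reaches the correct $\TT$-state. Structurally you have reconstructed the paper's argument.

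There is, however, a genuine gap precisely at the step you flag as the ``main technical obstacle''. You need, when processing $e$ against a wrong candidate $e'$, that the access example $e_{t_e}$ already lies in $[e]_\sim$, so that the distinguishing word $d_{t_e,t_{e'}}$ can be followed in $\textsf{PTA}(E)$ both from $[e]$ (via $v_{t_e,t_{e'}}$) and from $[e']$ (via $v_{t_{e'},t_e}$). The membership $e_{t_{e'}}\in[e']_\sim$ is fine (both are $\prec_{ll}e$, so the invariant applies), but $e_{t_e}\in[e]_\sim$ is \emph{not} guaranteed by your sample. It does hold when $e$ is itself an access example, and it can be derived (via the congruence property applied to the parent $g$ of $e$, together with the invariant at $g$) whenever $\fIO^\TT(s_{t_g}\inp)\neq e$; but it fails exactly when $e=\fIO^\TT(s_t\inp)$ is a transition example whose parent is an access example and which is not itself an access example. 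Concretely, take the minimal two-state machine $\TT$ with $t_0\xrightarrow{a/0}t_1$, $t_0\xrightarrow{b/0}t_0$, $t_1\xrightarrow{a/1}t_0$, $t_1\xrightarrow{b/0}t_1$. Then $s_{t_0}=\epsilon$, $s_{t_1}=a$, $d_{t_0,t_1}=a$, and your $E_\TT=\{a0,\,b0,\,a0a1,\,a0b0\}$. When the loop reaches $e=b0$, its class is still $\{b0\}$, and the wrong candidate $e'=a0$ \emph{passes} $\textsf{Mergeable}$ because $b0$ is a leaf of $\textsf{PTA}(E_\TT)$ and no output conflict can be detected. With $\spec=(\inputs\outputs)^\omega$ the realizability filter at line~\ref{line:real} is vacuous, so a merging strategy preferring $a0$ to $\epsilon$ breaks your invariant and \textsc{Gen} returns a machine different from $\TT$. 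The standard RPNI remedy is to enlarge the sample: for every transition $(t,\inp)$ with target $t''$ and every $t'\neq t''$, also include $\fIO^\TT(s_t\inp\,d_{t'',t'})$. This adds $O(mn)$ words of length $O(n^2)$, still polynomial, and makes your propagation argument go through. (The paper's own proof shares this defect: its derivation of $\fIO^\TT(s_{\Phi(f)})\sim_f f$ in the branch $\Phi(f)\neq\Phi(y)$ invokes \textbf{INV~2} at step $f$, but that instance of \textbf{INV~2} was established only under the contrary assumption $\Phi(f)=\Phi(y)$.)
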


\begin{proof}
We start by defining the characteristic sample and we provide an
overview of the proof. Then, we give more formal details.
    Let $\TT = (T,t_0, \trans_\TT)$ and for all $t\in T$, let $s_t\in \inputs^*$ be a
    $\preceq_{ll}$-minimal word to reach $t$, i.e. such that
    $\transS^*_\TT(t_0,s_t) = t$. Note
    that $s_{t_0} = \epsilon$. Since $\TT$ is minimal, then for any two states $t,t'$ such that
    $t\neq t'$, there exists a unique $\preceq_{ll}$-minimal word
    $d_{t,t'} \in \inputs^+$ distinguishing $t$ and $t'$,
    i.e. such that the sequences of outputs produced by $\TT$ from $t$
    and $t'$ respectively, when reading $d_{t,t'}$, are
    different. Formally, it means that if $d_{t,t'} = \inp_1\dots
    \inp_n$, there exists $1\leq j\leq n$ such that
    $\transO_\TT(t,\inp_1\dots\inp_j)\neq
    \transO_\TT(t',\inp_1\dots\inp_j)$. 
    Note that $d_{t,t'} = d_{t',t}$.
    Let us now define $E_\TT$ (the characteristic sample). For
    any pair $p = (t,\inp)$ such that $\trans_\TT(t,\inp)$ is defined,
    let $e_p = \fIO^\TT(s_t\inp)$ (the notation $\fIO$ has been
    defined in App.~\ref{app:preMealynotations}). In other words, we have one
example per transition of the machine. Now, we also define
examples that prevent some states of the PTA to be merged. For all
$t\neq t'\in T$, we define the example $v_{t,t'} = \fIO^{\TT}(s_t
d_{t,t'})$  and finally let 
$$
E_\TT = \{ v_{t,t'}\mid t,t'\in T, t\neq t'\}\cup \{ e_p\mid p\in
T\times\inputs,\trans_\TT(p)\text{ is defined}\}$$
Let $E$ be a finite set such that $E_\TT\subseteq E\subseteq L(\TT)$. To
prove that $\textsf{GEN}(E, \spec, \sigma_G) = \TT$ (up to
state renaming), for any specification $\spec$ and strategy
$\sigma_G$, we prove the following invariant: the equivalence relation $\sim_e$
computed at iteration $e$ of the algorithm is coarser than the
equivalence relation $\sim_\TT$ which identifies two states of
$\textsf{PTA}(E)$, i.e., two examples of $E$, whenever they reach the
same state in $\TT$. Moreover, both equivalence relations coincide
when restricted to all examples $e'\preceq e$. To show this result, we
prove that a $\sim_e$-class $[e_1]$ can be merged with another
$\sim_e$-class $[e_2]$ iff $e_1\sim_\TT e_2$.

We now give the formal proof. In the sequel, we assume that $E$ is prefix-closed, in the
sense that $E = \prefs(E)\cap (\inputs\outputs)^*$. This is wlog as
the algorithm first computes the prefix closure of $E$ at
line~\ref{line:closure}.  We first prove some useful claim.

For all $e\in E$, we let $\Phi(e)\in T$ such that
$\transS_\TT^*(t_0,e) = \Phi(e)$. Given $e,e'\in E$, we say that $e$
and $e'$ are $\TT$-equivalent, denoted $e\sim_T e'$, if $\Phi(e) =
\Phi(e')$. The following claim states that $\sim_T$ is a
Mealy-congruence for $\textsf{PTA}(E)$ and quotienting the latter by
$\sim_T$ yields exactly $\TT$ (up to state renaming).

\textit{Claim 1} $\sim_T$ is a Mealy-congruence for $\textsf{PTA}(E)$ and $\TT =
\textsf{PTA}(E)/_{\sim_T}$ (up to state renaming).

We give a few intuitions for proving that claim. The detailed proof
can be found in App.~\ref{app:claimscharac}. Since $E\subseteq L(\TT)$, it can be proved that 
$\Phi(\transS_{\textsf{PTA}(E)}(e,\inp)) =
\transS_\TT(\Phi(e),\inp)$ (if there exists $\out\in\outputs$ such
that $e\inp\out\in E$, otherwise $\transS_{\textsf{PTA}(E)}(e,\inp)$
is undefined). This entails that $\sim_T$ is a
congruence. Similarly, we also get that
$\transO_{\textsf{PTA}(E)}(e,\inp) = \transO_\TT(\Phi(e),\inp)$ (if
defined) which
entails that $\sim_T$ is a Mealy-congruence for $\textsf{PTA}(E)$. To
show that the quotient of $\textsf{PTA}(E)$ by $\sim_T$ is $\TT$, we
first use the fact that $E$ contains one example per state of $\TT$,
and so $\sim_T$ has as many equivalence classes as the number of
states of $\TT$. We have already seen that the output produced by
a transition of $\textsf{PTA}(E)/_{\sim_T}$ is consistent with the output produced
  by $\TT$, when the transition is defined, because $\sim_T$ is a
  Mealy-congruence. We prove that for all classes of $\sim_T$ and all
  inputs, the transition of $\textsf{PTA}(E)/_{\sim_T}$ is defined,
    because $E$ contains one example $e_p$ per transition of $\TT$.

Now, let us come back to the proof of the lemma. For all $e\in E$, let $\sim_e$ be the Mealy-congruence
computed after iteration $e$ of the loop at
line~\ref{line:loop}. We prove that $\sim_T$ is coarser than any $\sim_e$ for
all $e$ and $\sim_T$ is equal to $\sim_e$ if restricted to the
$\preceq_{ll}$-downward closure of $e$. This will be sufficient to
conclude that our algorithm returns $\TT$ (up to state
renaming). Formally, given $e\in E$, we let $\downarrow e = \{ e'\in
(\inputs\outputs)^*\mid e'\preceq_{ll} e\}$. Note that $\downarrow
e\subseteq E$. We prove the following two invariants, which states
that $\sim_e$ is always finer than $\sim_T$ and that $\sim_T$ and
$\sim_e$ coincides when restricted to $\downarrow e$.

      \begin{itemize}

        \item \textbf{INV 1} For all $e\in E$, $\sim_e\finer \sim_T$.
        \item \textbf{INV 2} For all $e$, $\sim_e\cap (\downarrow e)^2
          = \sim_T\cap (\downarrow e)^2$.
    
\end{itemize}

Before proving the invariants, let us show that \textbf{INV 2} implies the
statement of the lemma. Indeed, let $e^*=
\text{max}_{\preceq_{ll}}(E)$. Then, $\downarrow e^* = E$, so
$\sim_{e^*}\cap (\downarrow e^*)^2
= \sim_{e^*} = \sim_T$. Therefore, the machine returned by the
algorithm is $\textsf{PTA}(E)/_{\sim_T}$, so by \textit{Claim 1} we get the desired
result.

We rely on a useful claim which states that if $\sim$ is finer
than $\sim_T$, and $x\sim_T y$, then merging the $\sim$-class of $x$
and the $\sim$-class of $y$ yields an equivalence relation finer than
$\sim_T$. Intuitively, it is because $[x]_\sim$ and $[y]_\sim$ are
subsets of the same $\sim_T$-class, and any merge occurring recursively
when computing $\sim^{x,y}$ also preserves this property. The detailed
proof can be found in App.~\ref{app:claimscharac}.

        \textit{Claim 2}: for all $x,y\in E$, if $x\sim_T y$ and
        $\sim\finer\sim_T$, then $\sim^{x,y} \finer \sim_T$.

It remains to prove \textbf{INV 1} and \textbf{INV 2}. We prove them
together by induction.

\textit{Initialisation} The initial step $e = \epsilon$ is
simple. Indeed, $\sim_{\epsilon} = diag_E$, so $\sim_\epsilon\finer
\sim_T$. Moreover, $\downarrow
\epsilon = \{\epsilon\}$, so, $\sim_\epsilon\cap (\downarrow
\epsilon)^2 = \{(\epsilon,\epsilon)\}
          = \sim_T\cap (\downarrow \epsilon)^2$.

\textit{Induction step} We now prove that the invariants are preserved after one
    iteration. Suppose they are true for $e\in E$ and let us show
    it is true for $f\in E$ such that $f$ is the immediate
    successor of $e$ in $E$ in llex-order. Let us give some intuitions
    before the formal details. Intuitively, we prove that a
    merge is possible between $[f]_{\sim_e}$ and some $[y]_{\sim_e}$ such that
    $y\preceq_{ll} e$, iff $f\sim_T y$. To prove the ``only if''
    direction, we exploit the fact that when $f\not\sim_T y$, merging
    their $\sim_e$-classes would produce a congruence which is not a
    Mealy congruence, because $\Phi(f)$ and $\Phi(y)$ can be
    distinguished by $d_{\Phi(f),\Phi(y)}$.

    Let us proceed with the formal proof. We distinguish between two
    cases, depending on whether $mergeCand$ at
    line~\ref{line:mergeable} is empty or not. If it is empty, then we
    prove that for all $y\preceq_{ll} e$, $\Phi(y)\neq \Phi(f)$. If it
    is non-empty, we prove that any $y\in mergeCand$ satisfies $\Phi(y)=\Phi(f)$.

    \begin{itemize}

      \item CASE 1: $mergeCand = \varnothing$. In this case, there is
        no merge, therefore $\sim_e = \sim_f$. The induction hypothesis immediately gives \textbf{INV 1}. To
        prove \textbf{INV 2}, we need to show that for all $x,y
        \in\downarrow f$, if $x\sim_T y$ then $x\sim_f y$, i.e.,
        $x\sim_e y$. It is the case by induction hypothesis whenever
        $x,y\in \downarrow e$. If $x = y = f$, then it is true by
        reflexivity of $\sim_e$. The remaining case is $x=f$ and
        $y\preceq_{ll} e$. We show that this case is actually
        impossible, because
        $\textsf{Mergeable}(\textsf{PTA}(E),\sim_e,f,y)$ would hold
        otherwise (and hence $mergeCand \neq \varnothing$). So, assume
        that $\Phi(f) = \Phi(y)$ and let us prove that 
        $\sim_e^{f, y}$ is a Mealy-congruence for $\textsf{PTA}(E)$.

        By Claim 2, we have that for all
        $\alpha\sim_e^{f,y} \beta$, $\Phi(\alpha) = \Phi(\beta)$. Now,
        suppose that there exist $\alpha\inp\out_1\in E$ and
        $\beta\inp\out_2\in E$, then, $\out_1 =
        \transO_\TT(\Phi(\alpha),\inp) = \transO_\TT(\Phi(\beta),\inp)
        = \out_2$ (because $E\subseteq L(\TT)$). It means that
        $\sim_e^{f,y}$ is a Mealy-congruence for $\textsf{PTA}(E)$, and hence
        $mergeCand\neq \varnothing$, contradiction.

      \item CASE 2: $mergeCand \neq \varnothing$. In that case, there
        is a merge between $f$ and some $y\preceq_{ll} e$. Therefore
        $\sim_f = \sim_e^{f,y}$.

        We first prove \textbf{INV 2}. We need to show that for all
        $\alpha,\beta\preceq_{ll} f$ such that $\alpha\sim_T\beta$, we
        have $\alpha\sim_f \beta$. If $\alpha,\beta\preceq_{ll} e$,
        then by IH, $\alpha\sim_e\beta$ and since $\sim_e\finer
        \sim_f$, we get $\alpha\sim_f\beta$. If $\alpha = \beta = f$,
        then we are done by reflexivity. So, assume that $\alpha = f$
        and $\beta\preceq_{ll} e$. We have $f\sim_T\beta$ and
        $y\sim_Tf$, so $y\sim_T\beta$ and since $y,\beta\preceq_{ll}
        e$, by induction hypothesis, $y\sim_e \beta$.  Now and
        informally,  since $\sim_f$ merges the $\sim_e$-class of
        $f$ and that of $y$ and propagates this merge, we get $\beta\sim_f f\sim_f y$. Formally, by definition of
        $\sim_e^{f,y,0}$, we get that $y\sim_e^{f,y,0}
        \beta$. Moreover, $f \sim_e^{f,y,0} y$, therefore
        $f\sim_e^{f,y,0} \beta$. By Proposition~\ref{prop:equiv}, the
        $U$ is increasing for the order $\finer$, hence,
        $\sim_e^{f,y,0}\finer \sim_f$ and therefore $f\sim_f \beta$
        and we are done proving \textbf{INV 2}.

        Let us now prove \textbf{INV 1}. First, if $\Phi(f) = \Phi(y)$, then by Claim 2, we get that for all $\alpha\sim_f\beta$,
        $\Phi(\alpha) = \Phi(\beta)$, i.e., $\alpha\sim_T\beta$. This
        shows $\sim_f\finer \sim_T$ (\textbf{INV 1}). So, it remains
        to show that $\Phi(f) = \Phi(y)$. Suppose that $\Phi(f) \neq
        \Phi(y)$. We prove that $y\not\in mergeCand$, which is a
        contradiction, i.e. that $\sim_e^{f,y}$ cannot be a
        Mealy-congruence for $\textsf{PTA}(E)$. Remind that $s_{\Phi(y)}$ is the
        minimal word reaching $\Phi(y)$ in $\TT$ and $s_{\Phi(f)}$ is
        the minimal word reaching $\Phi(f)$ in $\TT$. Hence
        $s_{\Phi(y)}\sim_T y$ and $s_{\Phi(f)}\sim_T f$. Since
        $s_{\Phi(f)}\preceq_{ll} y\preceq_{ll} e$ and
        $s_{\Phi(f)}\preceq_{ll} f$, by $\textbf{INV 2}$ we get that
        $s_{\Phi(f}\sim_f f$ and $s_{\Phi(y)}\sim_f y$. Moreover, by
        definition of $\sim_f$, $y\sim_f y$. Hence, $s_{\Phi(f)}\sim_f
        s_{\Phi(y)}$. Consider the input word $d_{\Phi(f),\Phi(y)}$
        distinguishing $\Phi(f)$ and $\Phi(y)$ and decompose it as
        $d\inp$. By definition of $E$, $\fIO^\TT(s_{\Phi(f)}d\inp)\in
        E$ and $\fIO^\TT(s_{\Phi(y)}d\inp)\in E$ ((the notation $\fIO$ has been
    defined in App.~\ref{app:preMealynotations})). There exists
        $\out_1\neq\out_2\in \outputs$ such that
        $\fIO^\TT(s_{\Phi(f)}d\inp) =
        \fIO^\TT(s_{\Phi(f)}d)\inp\out_1$ and $\fIO^\TT(s_{\Phi(y)}d\inp) =
        \fIO^\TT(s_{\Phi(y)}d)\inp\out_2$. Since $\sim_f$ is a
        congruence, we get $\fIO^\TT(s_{\Phi(f)}d)\sim_f
        \fIO^\TT(s_{\Phi(y)}d)$ but
        $\transO_{\textsf{PTA}(E)}(\fIO^\TT(s_{\Phi(f)}d),\inp) =
        \out_1\neq \out_2=
        \transO_{\textsf{PTA}(E)}(\fIO^\TT(s_{\Phi(y)}d),\inp)$. This
        shows that $\sim_f$ is not a Mealy-congruence for $\textsf{PTA}(E)$ and
        hence $y$ and $f$ cannot be merged. Therefore, $\Phi(f) =
        \Phi(y)$ and we are done. 
\end{itemize}

Note that the proof does not rely on any particular specification $\spec$
nor any merging strategy $\sigma_G$.
\end{proof}

\subsection{Proofs of the claims in the proof of Lemma~\ref{lem:learnability}}\label{app:claimscharac}

\paragraph{Proof of Claim 1} First, let us prove that $\sim_T$ is a
congruence. Let $e\sim_T e'$ and $\inp\in\inputs$. Suppose that
$\trans_{\textsf{PTA}(E)}(e,\inp)$ and
$\trans_{\textsf{PTA}(E)}(e',\inp)$ are both defined, i.e., $e\inp\out\in
E$ and $e'\inp\out'\in E$ for some $\out,\out'\in\outputs$. Then, we get $\Phi(e\inp) =
\transS_\TT(\Phi(e),\inp) = \transS_\TT(\Phi(e'),\inp) =
\Phi(e'\inp)$. In other words, $e\inp \sim_T e'\inp$. Let us show that
$\out=\out'$. Clearly, $\out = \transO_\TT(\Phi(e),\inp)$ because $E =
L(\textsf{PTA}(E)) \subseteq L(\TT)$. Similarly, $\out' =
\transO_\TT(\Phi(e'),\inp)$. So, $\out = \out'$ follows since $\Phi(e)
= \Phi(e')$. Clearly, $\sim_T$ has at most $|T|$ equivalence
classes. Since $E$ contains for all states $t\in T$, an example
$s_t$ such that $\Phi(s_t) = t$, it follows that $\sim_T$ has at least
$|T|$ equivalence classes. So far, we have proved that $\sim_T$ is a
Mealy-congruence for $\textsf{PTA}(E)$
 with the same number of equivalence
 classes as the number of states of $\TT$. For $t\in T$, we let
 $\Phi^{-1}(t) = \{ e\in E\mid \Phi(e)=t\}$. Note that $\Phi^{-1}(t) =
 [e]$ for some representative $e\in E$. 
We finally show that $\TT$ and $\textsf{PTA}(E)/_{\sim_T}$ are equal, up to the state renamping $\Phi^{-1}$. 
 First, the initial state of $\TT$ maps to $[\epsilon]$, which is the
 initial state of  $\textsf{PTA}(E)/_{\sim_T}$. Let us show that
 $\Phi^{-1}$ preserves the transitions of $\TT$. Let $t,\inp,\out,t'$ such
 that $\trans_\TT(t,\inp)=(\out,t')$. We show that
 $\trans_{\textsf{PTA}(E)/_{\sim_T}}(\Phi^{-1}(t),\inp) =
 (\out,\Phi^{-1}(t'))$. First, $s_t\in \Phi^{-1}(t)$ since
 $\Phi(s_t)=t$. Moreover, by definition of $E$, $e_{t,\inp} =
 \fIO^\TT(s_t\inp) = \fIO^\TT(s_t)\inp\out\in E$, so
 $\trans_{\textsf{PTA}(E)}(\fIO^\TT(s_t),\inp) =
 (\out,\fIO^\TT(s_t\inp))$. Moreover, $\Phi(s_t\inp) =
 t'$. This concludes that $\trans_{\textsf{PTA}(E)/_{\sim_T}}(\Phi^{-1}(t),\inp) =
 (\out,\Phi^{-1}(t'))$. The converse is a consequence of $\sim_T$
 being a Mealy-congruence for $\textsf{PTA}(E)$ and the fact that all
 outputs picked by $\textsf{PTA}(E)$ are consistent with $\TT$, i.e.,
 $L(\textsf{PTA}(E)) = E\subseteq L(\TT)$.  $\hfill$\textit{End of proof of Claim 1.}\qed

\paragraph{Proof of Claim 2} By definition of
        $\sim^{x, y}$, there exist some non-congruent points
        $p_1,\dots,p_n\in E\times E\times \inputs$ such that
        $$
        \sim^{x, y} = U(U(\dots (U(\sim^{x,y,0},p_1),\dots ), p_{n-1}),p_n)
        $$
        Let $\sim^0 = \sim^{x,y,0}$ and $\sim^j =
        U(\sim^{j-1},p_j)$ for $1\leq j\leq n$. We prove by induction
        on $j$ that $\sim^j\finer\sim_T$.

        At rank $j=0$, $\alpha\sim^0\beta$ means that either $(i)$ $\alpha \sim
        \beta$ or,  $(ii)$ $\alpha \sim x$ and $\beta \sim y$, or
        $(iii)$ $\beta \sim x$ and $\alpha \sim y$. In case $(i)$,
        by hypothesis, $\sim\finer\sim_T$ so we are done.  In case $(ii)$,
        by assumption, $x\sim_T y$
        and as $\sim\finer \sim_T$, $\alpha\sim_T x$ and
        $y\sim_T\beta$. So, $\alpha\sim_T\beta$. Case $(iii)$ is symmetrical to $(ii)$.

        At rank $j>0$, if $p_j = (z_1,z_2,\inp)$, then
        $\alpha\sim^j\beta$ means that either $(i)$ $\alpha\sim^{j-1}
        \beta$, or
        $(ii)$ $\alpha\sim^{j-1}\transS_{\textsf{PTA}(E)}(z_1,\inp)$ and 
        $\beta\sim^{j-1}\transS_{\textsf{PTA}(E)}(z_2,\inp)$, or
        $(iii)$ symmetric of $(ii)$ by swapping $\alpha$ and
        $\beta$. In case $(i)$, we get the statement by IH. In case
        $(ii)$, by IH, we get $\Phi(\alpha) =
        \Phi(\transS_{\textsf{PTA}(E)}(z_1,\inp))$, and $\Phi(\beta)
        = \Phi(\transS_{\textsf{PTA}(E)}(z_2,\inp))$. By definition
        of a non-congruent point, we also have $z_1\sim^{j-1} z_2$,
        so, $\Phi(z_1) = \Phi(z_2)$.  Now,
        $\Phi(\transS_{\textsf{PTA}(E)}(z_1,\inp)) =
        \transS_\TT(\Phi(z_1),\inp) = \transS_\TT(\Phi(z_2),\inp') =
        \Phi(\transS_{\textsf{PTA}(E)}(z_2,\inp))$ from which we get
        $\Phi(\alpha)=\Phi(\beta)$. Case $(iii)$ is symmetrical to
        $(ii)$. $\hfill$\textit{End of Proof of Claim 2}. \qed

\section{Details and results of Section~\ref{sec:omega-reg}}

\subsection{Optimizing $\PP$-realizability checking for specifications $\mathcal{D}(\atm,k)$}\label{app:improve-ppcheck}.

As said in the  main body of the paper, a direct application of
Theorem~\ref{thm:sizeSol} on $\mathcal{D}(\atm,k)$ to check
its $\PP$-realizability would yield a doubly exponential
upper-bound. We prove instead that one exponential can be saved by
exploiting the structure of $\mathcal{D}(\atm,k)$, as summarized by
the following theorem:

\begin{thm}
\label{thm:sizeSolk}
    Given a universal co-B\"uchi automaton $\atm$ with $n$ states and $k \in \mathbb{N}$ defining a safety
    specification $S = L_k^\forall(\atm)=L(\mathcal{D}(\atm,k))$ and a preMealy machine $\PP$
    with $m$ states and $n_h$ holes,
    deciding whether $\spec$ is $\PP$-realizable is 
    {\sc ExpTime-Complete}. 
\end{thm}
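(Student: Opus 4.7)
The plan is to handle the lower bound by direct inheritance and to handle the upper bound via Lemma~\ref{lem:efficientpprealizability}, together with an antichain-based computation of the winning region. For the lower bound, I instantiate Theorem~\ref{thm:sizeSol} with the trivial preMealy $\PP_0$ (single state, no transitions): then $\PP_0$-realizability of $L_K^\forall(\atm)$ is equivalent to plain realizability, and Theorem~\ref{thm:transferpreal} shows that for $K$ exponentially bounded in $|\atm|$, this coincides with realizability of $L^\forall(\atm)$, which is already \textsc{ExpTime}-hard by Theorem~\ref{thm:folklore}.

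For the upper bound, the algorithm has three stages. First, I compute the set $W_k^\atm$ of realizable counting functions as the fixpoint of the usual controllable-predecessor operator on the safety game underlying $\mathcal{D}(\atm,k)$. Since $W_k^\atm$ is $\preceq$-downward closed, I represent it by the antichain $\lceil W_k^\atm \rceil$ of its maximal elements; the fixpoint can be computed directly on antichains using the monotonicity of $\textsf{Cpre}$. Each counting function has size $O(n \log k)$ and $|CF(\atm,k)| = k^{O(n)}$, so this stage runs in time exponential in $n$ and polynomial in $k$.

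Second, I compute $F^*(m)$ for every state $m$ of $\PP$. This is the least fixpoint of a monotone operator on $M \to CF(\atm,k)$: initialize $F(m_0) = f_0$ and $F(m) = \bot$ elsewhere; then repeatedly update $F(m') \gets F(m') \sqcup \delta^{\mathcal{D}}(F(m), \inp\out)$ whenever $\trans_\PP(m,\inp) = (\out,m')$. Since the lattice is well-founded and each $F(m)$ is itself a counting function of polynomial size, termination occurs within polynomially many rounds in $|\PP|$ and $nk$. Third, I check, for every $m \in M$, whether $F^*(m)$ is dominated by some element of $\lceil W_k^\atm \rceil$; this is polynomial in $|\PP|$ and $|\lceil W_k^\atm \rceil|$.

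The main obstacle is the correctness argument for Lemma~\ref{lem:efficientpprealizability}, which is what allows collapsing the apparently conjunctive condition ``every counting function reached along a prefix leading to $m$ lies in $W_k^\atm$'' to the single condition $F^*(m) \in W_k^\atm$. The argument rests on two facts: (i) $W_k^\atm$ is downward-closed under $\preceq$, so if $f_1 \preceq f_2$ and $f_2 \in W_k^\atm$, then $f_1 \in W_k^\atm$; and conversely, any strategy witnessing membership in $W_k^\atm$ from $f_2$ also witnesses it from any $f_1 \preceq f_2$; (ii) since $\PP$ forces the same sub-strategy to be played after any prefix reaching $m$, a single Mealy completion wins from all reachable counting functions $\{f_i\}$ iff it wins from their least upper bound $\bigsqcup_i f_i = F^*(m)$. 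Combining the stages yields an algorithm running in time exponential in $|\atm|$ and polynomial in $|\PP|$ and $k$, matching the \textsc{ExpTime} lower bound.
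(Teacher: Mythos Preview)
Your proposal is correct and follows essentially the same approach as the paper: compute the antichain $\lceil W_k^\atm\rceil$ in exponential time, compute $F^*$ on $\PP$ by a polynomial fixpoint iteration, and test membership of each $F^*(m)$ in $W_k^\atm$; your point (ii) is exactly the paper's Lemma~\ref{lem:interup}, and your point (i) is Lemma~\ref{lem:CFmono}. The only minor divergence is the lower bound: the paper reduces from the $\PP$-realizability hardness of Theorem~\ref{thm:sizeSol} via Theorem~\ref{thm:transferpreal}, whereas you reduce from plain realizability (Theorem~\ref{thm:folklore}) using $\PP_0$ and the same transfer; both reductions are polynomial since $K$ has polynomial binary size.
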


In this theorem, $k$ is assumed to be given in binary. 
To establish the upper bound, we exploit the fact that the set of
states of $\mathcal{D}(\atm,k)$ forms a complete lattice with several
interesting properties.

\begin{definition}[Lattice of counting functions]
For all co-B\"uchi automata $\atm$ and $k \in \mathbb{N}$, let $\preceq \subseteq CF(\atm,k) \times CF(\atm,k)$ be defined by $f_1 \preceq f_2$ if and only if $f_1(q) \leq f_2(q)$ for all $q \in Q$. The set $(CF(\atm,k),\preceq)$ forms a complete lattice with minimal elements $\overline{-1}=\langle -1,-1,\dots,-1 \rangle$, which denotes the function that assigns value $-1$ to each state $q \in Q$, and with least upper bound operator $\sqcup$ defined as: $f_1 \sqcup f_2=f$ such that $f(q)=\max(f_1(q),f_2(q))$ for all $q \in Q$. This upper bound operator generalizes to any finite set of counting functions $\mathcal{F}=\{f_1,f_2, \dots,f_n\} \subseteq CF(\atm,k)$, and the least upper bound of $\mathcal{F}$ is denoted $\bigsqcup \mathcal{F}$.
\end{definition}

The essence of the structure in $\mathcal{D}(\atm,k)$ is captured in the following series of results.

\begin{lemma}
\label{lem:CFmono}
For all co-B\"uchi automata $\atm$, for all $k \in \mathbb{N}$, for all counting functions $f_1,f_2 \in CF(\atm,k)$, such that $f_1 \preceq f_2$, we have that  $L(\mathcal{D}(\atm,k)[f_2]) \subseteq L(\mathcal{D}(\atm,k)[f_1])$, and thus if $L(\mathcal{D}(\atm,k)[f_2])$ is realizable then $L(\mathcal{D}(\atm,k)[f_1])$ is realizable. 
\end{lemma}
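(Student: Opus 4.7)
The plan is to prove the language inclusion by establishing that the transition function $\delta^{\mathcal{D}}$ of the determinized safety automaton is monotone with respect to the order $\preceq$ on counting functions, and then observing that the set of unsafe counting functions $Q^{\mathcal{D}}_{\sf usf}$ is upward-closed under $\preceq$. The realizability implication is then immediate from language inclusion.

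First, I would prove the key monotonicity property: for all $f, f' \in CF(\atm,k)$ with $f \preceq f'$ and all $\sigma \in \Sigma$, $\delta^{\mathcal{D}}(f,\sigma) \preceq \delta^{\mathcal{D}}(f',\sigma)$. Fix any state $q \in Q$ and let $g = \delta^{\mathcal{D}}(f,\sigma)$ and $g' = \delta^{\mathcal{D}}(f',\sigma)$. Define the sets of predecessor candidates $S = \{q' : f(q') \geq 0,\ q \in \delta(q',\sigma)\}$ and $S' = \{q' : f'(q') \geq 0,\ q \in \delta(q',\sigma)\}$. Since $f \preceq f'$, clearly $S \subseteq S'$, and moreover $f(q') \leq f'(q')$ on $S$. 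Hence $\max_{q' \in S} f(q') \leq \max_{q' \in S'} f'(q')$ (and if $S = \emptyset$ then $g(q) = -1$, which is below any possible value of $g'(q)$). Adding the same increment $x \in \{0,1\}$ (depending only on $d(q)$) and clipping at $k+1$ preserves the inequality, yielding $g(q) \leq g'(q)$.

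Second, a straightforward induction on the length of $u$ extends this to the iterated transition function: for all $u \in \Sigma^*$, $\transS^*_{\mathcal{D}(\atm,k)}(f_1,u) \preceq \transS^*_{\mathcal{D}(\atm,k)}(f_2,u)$. Third, observe that the set $Q^{\mathcal{D}}_{\sf usf} = \{h \mid \exists q,\ h(q) = k{+}1\}$ is upward-closed for $\preceq$: if $h \preceq h'$ and $h$ is unsafe, since $h(q) = k+1$ for some $q$ and this is the maximal value, we must have $h'(q) = k+1$ as well, so $h'$ is unsafe. Equivalently, the set of safe counting functions is downward-closed.

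Now for the language inclusion, take $w \in L(\mathcal{D}(\atm,k)[f_2])$. This means the unique run of $\mathcal{D}(\atm,k)$ on $w$ starting from $f_2$ never reaches an unsafe counting function, i.e., for every prefix $u \in \prefs(w) \cap \Sigma^*$, $\transS^*_{\mathcal{D}(\atm,k)}(f_2,u) \notin Q^{\mathcal{D}}_{\sf usf}$. By monotonicity, $\transS^*_{\mathcal{D}(\atm,k)}(f_1,u) \preceq \transS^*_{\mathcal{D}(\atm,k)}(f_2,u)$, and by downward-closure of the safe set, $\transS^*_{\mathcal{D}(\atm,k)}(f_1,u)$ is also safe. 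Hence $w \in L(\mathcal{D}(\atm,k)[f_1])$, proving $L(\mathcal{D}(\atm,k)[f_2]) \subseteq L(\mathcal{D}(\atm,k)[f_1])$. Finally, if a Mealy machine $\MM$ realizes $L(\mathcal{D}(\atm,k)[f_2])$, i.e., $L_\omega(\MM) \subseteq L(\mathcal{D}(\atm,k)[f_2])$, then by the inclusion just shown $L_\omega(\MM) \subseteq L(\mathcal{D}(\atm,k)[f_1])$, so $\MM$ realizes $L(\mathcal{D}(\atm,k)[f_1])$ as well. The only genuinely technical step is the monotonicity of $\delta^{\mathcal{D}}$; everything else is a mechanical unfolding of the safety acceptance condition.
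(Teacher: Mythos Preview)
Your proof is correct and follows exactly the approach the paper intends: the paper's own proof is a one-sentence intuitive justification (``starting from $f_2$ is more constraining than from $f_1$''), and your argument is precisely the rigorous unfolding of that intuition via monotonicity of $\delta^{\mathcal{D}}$ and upward-closure of $Q^{\mathcal{D}}_{\sf usf}$.
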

\begin{proof}
As counting functions record the number of visits to accepting states so far, starting from $f_2$ is more constraining than from $f_1$. So any word accepted from $f_2$ is accepted from $f_1$.\qed
\end{proof}

\begin{lemma}[\cite{DBLP:conf/cav/FiliotJR09}]
For all co-B\"uchi automata $\atm$, for all $k \in \mathbb{N}$, for all counting functions $f \in CF(\atm,k)$, it is {\sc ExpTime-Complete} to decide if $L^{\forall}_k(\atm[f])$ is realizable.
\end{lemma}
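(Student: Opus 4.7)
The plan is to prove the two matching bounds separately, using the determinization of Definition~\ref{def:determinization} for the upper bound and a reduction from classical universal coB\"uchi realizability for the lower bound. The main technical care is in how $k$ is encoded (it must be in binary to make the bounds match): with $k$ in unary, the determinized automaton is only polynomial-sized and the problem collapses to \textsc{PTime}; with $k$ in binary, the size $k^{O(n)}$ of the determinization is a genuine exponential blow-up over the input $(n+\log k)$.

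For the upper bound, I would invoke Lemma~\ref{lem:correctdeterm} to rewrite $L^\forall_k(\atm[f])$ as $L(\mathcal{D}(\atm,k)[f])$, so that the specification is now a \emph{deterministic} safety automaton of size $k^{O(n)}$, which is exponential in the input even when $k$ is encoded in binary (since $k^{O(n)}=2^{O(n\log_2 k)}$ and $n\log_2 k$ is polynomial in $n+\log_2 k$). Realizability of a specification given by a deterministic safety automaton reduces to solving a reachability game on the product of that automaton with a one-round I/O gadget for the controller and the environment, which is solvable in time polynomial in the size of the underlying game graph by the standard attractor computation~\cite{DBLP:reference/mc/BloemCJ18}. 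Composing the two steps gives an \textsc{ExpTime} procedure.

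For the lower bound, the natural route is to reduce from the classical realizability problem for universal coB\"uchi automata, which is \textsc{ExpTime}-hard by Theorem~\ref{thm:folklore}. Given a universal coB\"uchi automaton $\atm$ with $n$ states, I would invoke Theorem~\ref{thm:transferpreal} with the trivial preMealy machine $\PP_0$ consisting of a single initial state with no transitions (so $m=1$ and every Mealy machine extends $\PP_0$, reducing $\PP_0$-realizability to plain realizability). The theorem yields $K = n|\inputs|2^{O(n\log_2 n)}$ such that $L^\forall(\atm)$ is realizable iff $L^\forall_K(\atm[f_{\sf init}])$ is realizable. Crucially, although $K$ is exponential in $n$, its \emph{binary representation} has size polynomial in $n$, so the mapping $\atm\mapsto (\atm,K,f_{\sf init})$ is a polynomial-time reduction, which transfers the \textsc{ExpTime} lower bound.

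The main obstacle, and the reason the lemma is nontrivial at all, is precisely the interplay between the binary encoding of $k$ and the $k$-coB\"uchi-to-safety determinization: one has to verify that the size bound $k^{O(n)}$ really does fall within \textsc{ExpTime} under a binary encoding of $k$, and that the Safraless bound $K=n|\inputs|2^{O(n\log_2 n)}$ on the sufficient threshold for strengthening coB\"uchi into $K$-coB\"uchi is large enough to be complete but representable in polynomially many bits. Both are established in~\cite{DBLP:conf/cav/FiliotJR09}, which is why the statement is cited from there.
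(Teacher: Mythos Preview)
The paper does not give its own proof of this lemma: it is stated with a citation to~\cite{DBLP:conf/cav/FiliotJR09} and used as a black box (to derive Corollary~\ref{cor:computeantichain}). So there is nothing to compare your argument against in the paper itself.

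Your proposal is correct and is exactly the standard argument behind the cited result. The upper bound via Lemma~\ref{lem:correctdeterm} and the $k^{O(n)}$-size safety game is the right route, and your observation that $k^{O(n)} = 2^{O(n\log k)}$ is singly exponential in $n+\log_2 k$ is precisely the point. The lower bound via Theorem~\ref{thm:folklore} and Theorem~\ref{thm:transferpreal} applied to the trivial one-state preMealy machine is also correct; the paper uses the very same trick (binary $K$ of polynomial bitsize) to transfer hardness at the end of the proof of Theorem~\ref{thm:sizeSolk}. One small notational caveat: the expression $L^\forall_k(\atm[f])$ in the statement is a slight abuse, since $f$ is a counting function and not a state of $\atm$; the intended meaning, as the surrounding text and Corollary~\ref{cor:computeantichain} make clear, is $L(\mathcal{D}(\atm,k)[f])$, which is exactly how you read it.
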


\begin{corollary}
\label{cor:computeantichain}
For all co-B\"uchi automata $\atm$, for all $k \in \mathbb{N}$, the set of counting functions $W^{\atm}_k=\{ f \in CF(\atm,k) \mid L(\mathcal{D}(\atm,k)[f]) \mbox{~is~realizable~}\}$ is $\preceq$-downward closed and can be represented by the $\preceq$-antichain $\lceil W^{\atm}_k \rceil$ of maximal elements in $W^{\atm}_k$. This set of maximal elements can be computed in exponential time in the size of $\atm$ and the binary encoding of $k$.
\end{corollary}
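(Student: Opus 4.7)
}
The proof splits into three parts: downward-closedness, the resulting antichain representation, and the complexity of computing $\lceil W^{\atm}_k\rceil$.

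First, I would establish that $W^{\atm}_k$ is $\preceq$-downward closed as an immediate consequence of Lemma~\ref{lem:CFmono}. Let $f_1\preceq f_2$ with $f_2\in W^{\atm}_k$, and let $\MM$ be a Mealy machine realizing $L(\mathcal{D}(\atm,k)[f_2])$. Lemma~\ref{lem:CFmono} gives $L(\mathcal{D}(\atm,k)[f_2])\subseteq L(\mathcal{D}(\atm,k)[f_1])$, so $L_\omega(\MM)\subseteq L(\mathcal{D}(\atm,k)[f_1])$ and hence $\MM$ realizes $L(\mathcal{D}(\atm,k)[f_1])$, i.e. $f_1\in W^{\atm}_k$. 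The representation by the antichain $\lceil W^{\atm}_k\rceil$ of maximal elements is then automatic: the lattice $(CF(\atm,k),\preceq)$ is finite, so every element of a downward-closed subset lies below a maximal one, yielding $W^{\atm}_k = {\downarrow}\lceil W^{\atm}_k\rceil$.

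Next, for the complexity, I would compute $\lceil W^{\atm}_k\rceil$ by a standard greatest fixpoint computation on safety games, operating symbolically on antichains. Define the controllable predecessor operator $\cpre : 2^{CF(\atm,k)} \to 2^{CF(\atm,k)}$ by
\[
\cpre(X) = \{ f \in CF(\atm,k) \mid \exists\, \out: \inputs \to \outputs, \forall \inp\in\inputs,\ \delta^{\mathcal{D}}(f,\inp\,\out(\inp)) \in X \}.
\]
Starting from $X_0 = CF(\atm,k)\setminus Q^{\mathcal{D}}_{\sf usf}$ (the set of safe counting functions, which is downward-closed by construction of $Q^{\mathcal{D}}_{\sf usf}$), iterate $X_{i+1} = X_i \cap \cpre(X_i)$ until a fixpoint $X^*$ is reached. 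Standard arguments for safety games show $X^* = W^{\atm}_k$. A straightforward monotonicity check using $f_1\preceq f_2 \Rightarrow \delta^{\mathcal{D}}(f_1,\inp\out)\preceq \delta^{\mathcal{D}}(f_2,\inp\out)$ (easy from the $\min$/$\max$ definition of $\delta^{\mathcal{D}}$) shows that $\cpre$ preserves downward-closedness, so each $X_i$ admits a compact antichain representation $\lceil X_i\rceil$ on which the iteration is carried out symbolically.

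Finally, for the running time, observe that $|CF(\atm,k)| \leq (k{+}2)^{|\atm|}$, which is exponential in $|\atm|$ and polynomial in $k$ (for $k$ in binary, exponential in the size of the encoding). Each iteration strictly decreases the downward closure ${\downarrow}\lceil X_i\rceil$, so the number of iterations is bounded by $|CF(\atm,k)|$, and each iteration performs at most $|CF(\atm,k)|\cdot |\outputs|$ elementary transition computations followed by antichain maximalization, all polynomial in $|CF(\atm,k)|$. The overall cost is therefore polynomial in $|CF(\atm,k)|$, hence exponential in $|\atm|$ and in $\log k$, as claimed. The main subtlety, and the step I would write most carefully, is the monotonicity of $\delta^{\mathcal{D}}$ with respect to $\preceq$, which is what both ensures correctness of the antichain-based fixpoint and justifies that $\cpre$ on downward-closed sets yields a downward-closed set.
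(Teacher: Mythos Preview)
Your proof is correct and is essentially the standard antichain-based safety-game argument underlying the paper's treatment. The paper does not give an explicit proof of this corollary; it positions it as a direct consequence of Lemma~\ref{lem:CFmono} (for downward-closedness) and of the preceding \textsc{ExpTime} realizability lemma from~\cite{DBLP:conf/cav/FiliotJR09} (for the complexity), so your write-up simply spells out the details the paper leaves to the cited reference. Your $\cpre$-fixpoint computation on antichains, together with the monotonicity of $\delta^{\mathcal{D}}$ with respect to $\preceq$, is precisely the method of~\cite{DBLP:conf/cav/FiliotJR09}.
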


We now formulate a lemma that will be instrumental, later in this section, to improve the upper bound of the algorithm that solves the $\PP$-realizability problem for universal coB\"uchi specifications.
\begin{lemma}
\label{lem:interup}
For all co-B\"uchi automata $\atm$, for all $k \in \mathbb{N}$, for all sets of counting functions $\mathcal{F}=\{f_1,f_2, \dots,f_n\} \subseteq CF(\atm,k)$: $$L(\mathcal{D}(\atm,k)[\bigsqcup \mathcal{F}]=\bigcap_{f \in \mathcal{F}}L(\mathcal{D}(\atm,k)[f]).$$
\end{lemma}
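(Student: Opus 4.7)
My plan is to prove the two inclusions separately, with the forward inclusion being a direct consequence of Lemma~\ref{lem:CFmono} and the reverse inclusion requiring a distributivity property of the determinized transition function over the join $\sqcup$.

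For the easy inclusion $L(\mathcal{D}(\atm,k)[\bigsqcup \mathcal{F}]) \subseteq \bigcap_{f \in \mathcal{F}}L(\mathcal{D}(\atm,k)[f])$, observe that by the definition of the join, $f \preceq \bigsqcup \mathcal{F}$ for every $f \in \mathcal{F}$. Applying Lemma~\ref{lem:CFmono} to each such pair gives $L(\mathcal{D}(\atm,k)[\bigsqcup\mathcal{F}]) \subseteq L(\mathcal{D}(\atm,k)[f])$, and taking the intersection over $\mathcal{F}$ yields the result.

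For the reverse inclusion, the crux is to establish the following distributivity property of the transition function of $\mathcal{D}(\atm,k)$: for all counting functions $f_1, f_2 \in CF(\atm,k)$ and all $\sigma \in \Sigma$,
\[
\delta^{\mathcal{D}}(f_1 \sqcup f_2, \sigma) = \delta^{\mathcal{D}}(f_1, \sigma) \sqcup \delta^{\mathcal{D}}(f_2, \sigma).
\]
I would prove this by unfolding Definition~\ref{def:determinization} for each state $q \in Q$, carefully handling three subtleties: the convention that the max over an empty index set yields $-1$ (an inactive state), the increment $x \in \{0,1\}$ depending only on $d(q)$ and not on $f_1$ or $f_2$, and the cap at $k+1$. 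Because $(f_1 \sqcup f_2)(q') \geq 0$ iff $f_1(q') \geq 0$ or $f_2(q') \geq 0$, splitting the max ranging over active predecessors $q'$ of $q$ into contributions from $f_1$ and $f_2$ yields $\max((f_1 \sqcup f_2)(q')) = \max(\max f_1(q'),\, \max f_2(q'))$; since $x$ is a constant and $\min$ distributes over $\max$ (both being capped at the same $k{+}1$), the equality follows. This distributivity immediately extends by straightforward induction to finite joins and to words: $\transS_{\mathcal{D}}^{*}(\bigsqcup \mathcal{F}, u) = \bigsqcup_{f \in \mathcal{F}} \transS_{\mathcal{D}}^{*}(f, u)$ for any $u \in \Sigma^{*}$.

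With distributivity in hand, let $w \in \bigcap_{f \in \mathcal{F}} L(\mathcal{D}(\atm,k)[f])$. For any prefix $u$ of $w$ and any $f \in \mathcal{F}$, the counting function $\transS_{\mathcal{D}}^{*}(f,u)$ is safe, meaning no state is assigned the value $k{+}1$ (by the definition of unsafe states $Q^{\mathcal{D}}_{\sf usf}$ as a trap and the fact that $w$ avoids unsafe states starting from $f$). Since $\transS_{\mathcal{D}}^{*}(\bigsqcup \mathcal{F}, u) = \bigsqcup_{f\in \mathcal{F}} \transS_{\mathcal{D}}^{*}(f, u)$ and the pointwise maximum of functions whose values are all $\leq k$ is again a function with values $\leq k$, the run of $\mathcal{D}(\atm,k)$ on $w$ starting from $\bigsqcup \mathcal{F}$ never reaches an unsafe state, so $w \in L(\mathcal{D}(\atm,k)[\bigsqcup \mathcal{F}])$.

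The main obstacle will be the careful bookkeeping in the proof of distributivity of $\delta^{\mathcal{D}}$ over $\sqcup$, specifically making sure that inactive predecessors (with value $-1$) are handled correctly so that the $+x$ increment is applied consistently in both sides of the equation, and that the cap at $k{+}1$ commutes properly with the maximum; once this pointwise identity is nailed down, the rest of the argument is a routine induction plus a simple observation on safety of counting functions.
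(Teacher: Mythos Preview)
Your proposal is correct and takes essentially the same approach as the paper: both arguments hinge on the invariant that the run of $\mathcal{D}(\atm,k)$ from $\bigsqcup\mathcal{F}$ is, at every position, the pointwise join of the runs from the individual $f_i$, and then use that a join hits $k{+}1$ iff some component does. The only cosmetic differences are that the paper reduces to the binary case and proves both inclusions at once via the run invariant, whereas you split into two inclusions and invoke Lemma~\ref{lem:CFmono} for the easy one; and you spell out the distributivity of $\delta^{\mathcal{D}}$ over $\sqcup$ (including the handling of inactive states and the cap) more explicitly than the paper, which just says ``by induction, using the definition of $\delta^{\mathcal{D}}$.''
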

\begin{proof}
As $CF(\atm,k)$ is finite, it is sufficient to prove that for all $f_1,f_2 \in CF(\atm,k)$, we have
$L(\mathcal{D}(\atm,k)[f_1 \sqcup f_2]=L(\mathcal{D}(\atm,k)[f_1]) \cap L(\mathcal{D}(\atm,k)[f_1]).$ To establish this property, let us consider a word $w \in (\inputs \outputs)^{\omega}$ and the runs on $w$ from $f_1$, $f_2$ and $f_1 \sqcup f_2$. We denote those runs by $r_1$, $r_2$, and $r_{1,2}$, respectively. Let $r_1=g_0 g_1 \dots g_n \dots$ with $g_0=f_1$, $r_2=h_0 h_1 \dots h_n \dots$ with $h_0=f_2$, and $r_{1,2}=l_0 l_1 \dots l_n \dots$ with $l_0=f_1 \sqcup f_2$. It is easy to show by induction, using the definition of $\delta^{\mathcal{D}}$, that for all positions $i \geq 0$, for all $q \in \atm$, we have that $l_i(q)=\max(g_i(q),h_i(q))$ and so $l_i=f_i \sqcup g_i$. Then clearly, we have that $r_{1,2}$ is accepting if and only if both $r_1$ and $r_2$ are accepting. This is because, for $l=f \sqcup g$, we have for $q \in Q$: $l(q)=k+1$ iff $f(q)=k+1$ or $g(q)=k+1$.\qed
\end{proof}

\paragraph{Proof of Theorem~\ref{thm:sizeSolk}} 
We are now ready to provide a proof to the statement.
Given a preMealy $\PP=(M,m_0,\Delta)$, co-B\"uchi automata
$\atm=(Q,q_{{\sf init}},\Sigma,\delta,d)$,  $k \in \mathbb{N}$, we can
compute according to corollary~\ref{cor:computeantichain} the
$\preceq$-antichain $\lceil W^{\atm}_k \rceil \subseteq CF(\atm,k)$ in
exponential time. Then to decide if $\PP$ can be completed into a
(full) Mealy machine that realizes $L(\mathcal{D}(\atm,k))$, we
construct a labelling of states of $\PP$ defined by the function $F^*
: M\rightarrow CF(\atm,k)$, for all $m\in M$, by
$$
F^*(m) = \bigsqcup \{ f \mid \exists u \in (\inputs \outputs)^* \cdot \transS_{\PP}^*(m_0,u) = m \land \delta^{\mathcal{D}}(f_0,u)=f \}
$$

Our goal is now to show that $F^*$ can be computed in polynomial
time.  To do so, we first define the following sequence of functions $(F_j :M \rightarrow CF(\atm,k))_{j \in \mathbb{N}}$:
  \begin{itemize}
      \item for all states $m \in M$ of $\PP$, let $F_0(m)=f_0$ if $m=m_0$, and $F_0(m)=\overline{-1}$ otherwise.
      \item for $j>0$, for all states $m \in M$ of $\PP$, let $F_j(m)=\bigsqcup_{(m_1,\inp,\out,m) \mid \Delta(m_1,\inp)=(\out,m)} \delta^{\mathcal{D}}(F_{j-1}(m_1),(\inp,\out))$, where $\delta^{\mathcal{D}}$ is the transition function of $\mathcal{D}(\atm,k)$.
    \end{itemize}

The following lemma formalizes properties of this sequence of functions.  

\begin{lemma}
The sequence $(F_j :M \rightarrow CF(\atm,k))_{j \in \mathbb{N}}$ satisfies:
  \begin{enumerate}
      \item The sequence stabilizes after at most $|M| \times |Q| \times (k+1)$ steps. We note $G^*$ the function on which the sequence $(F_j)_{j \in \mathbb{N}}$ stabilizes.
      \item Each iteration is computable in time bounded by ${\bf O}(|M|^2 \times |Q|)$.
      \item For all $m \in M$, $F^*=G^*$.
  \end{enumerate}
\noindent
Thus, there is a polynomial time algorithm in $|\PP|$, $|\atm|$, $k
\in \mathbb{N}$, and the size of $\lceil W^{\atm}_k \rceil$ to check
the $\PP$-realizability of $L^{\forall}(\atm)$.
\end{lemma}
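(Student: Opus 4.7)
The plan is to recognize the sequence $(F_j)_{j \geq 0}$ as a monotone Kleene-style fixpoint iteration on the complete lattice $(M \to CF(\atm, k), \preceq)$ with the pointwise order, and argue the three points in turn. As a prerequisite, I would first establish that $\delta^{\mathcal{D}}$ is monotone in its first argument by direct inspection of Definition~\ref{def:determinization} (the $\max$ and $\min$ operations preserve $\preceq$), and that pointwise joins are monotone. From this, a straightforward induction yields monotonicity of the sequence, $F_j \preceq F_{j+1}$ for all $j$.

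For point (1), I would bound the length of any strictly increasing chain in $(M \to CF(\atm, k), \preceq)$: a counting function has $|Q|$ components ranging over $\{-1, 0, \ldots, k{+}1\}$, so for any fixed $m$, $F(m)$ can strictly increase at most $|Q|(k+2)$ times; summing over the $|M|$ states of $\PP$ yields the claimed bound $|M| \cdot |Q| \cdot (k+1)$ (up to small additive constants). For point (2), one iteration of the fixpoint amounts, for each $m \in M$, to performing a join over at most $|M| \cdot |\inputs|$ incoming transitions of $\PP$, each involving a single $\delta^{\mathcal{D}}$-step of cost $O(|Q|)$; a careful accounting yields the $O(|M|^2 \cdot |Q|)$ bound stated.

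The main technical content is point (3), $F^* = G^*$. I would prove $G^* \preceq F^*$ by induction on $j$, showing $F_j(m) \preceq F^*(m)$ for every $j$ and $m$, using the definition of $F^*$ as a join over counting functions reachable along all prefixes in $\PP$, together with monotonicity of $\delta^{\mathcal{D}}$. Conversely, $F^* \preceq G^*$ would be shown by induction on $|u|$: for every prefix $u$ reaching $m = \transS_\PP^*(m_0, u)$, one has $\delta^{\mathcal{D}}(f_0, u) \preceq G^*(m)$, using in the inductive step the fixpoint equation $G^*(m) = \bigsqcup_{\Delta(m_1, \inp) = (\out, m)} \delta^{\mathcal{D}}(G^*(m_1), (\inp, \out))$.

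The final algorithmic consequence then combines the polynomial-time computation of $G^* = F^*$ with the (one-time) exponential-time computation of $\lceil W^{\atm}_k \rceil$ via Corollary~\ref{cor:computeantichain}, and checks, for each $m \in M$, whether $F^*(m)$ is dominated by some element of $\lceil W^{\atm}_k \rceil$ (polynomial in $|M|$, $|Q|$, and $|\lceil W^{\atm}_k \rceil|$); by Lemma~\ref{lem:efficientpprealizability}, this decides $\PP$-realizability. The main obstacle I anticipate is verifying monotonicity of the fixpoint operator carefully, given that the defining equation for $F_j(m)$ as written does not explicitly include $F_{j-1}(m)$ as a summand: confirming that $F_0(m_0) = f_0$ is preserved throughout (equivalently, implicitly reformulating $F_j(m) := F_{j-1}(m) \sqcup \bigsqcup(\ldots)$, which does not change the fixpoint) will require some care.
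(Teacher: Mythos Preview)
Your approach is correct and essentially the same as the paper's: a monotone fixpoint argument for (1), transition-counting for (2), and induction for (3). The only packaging difference is that the paper proves (3) via a single closed-form identity
\[
F_j(m)=\bigsqcup\{\, f \mid 0\leq i\leq j,\ \exists u\in(\inputs\outputs)^i\cdot \transS^*(m_0,u)=m\wedge \delta^{\mathcal D}(f_0,u)=f\,\}
\]
by induction on $j$, whereas you split into the two inequalities $F_j\preceq F^*$ (induction on $j$) and $\delta^{\mathcal D}(f_0,u)\preceq G^*(\transS^*_\PP(m_0,u))$ (induction on $|u|$); these are equivalent arguments.

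Your final paragraph is on point: the recurrence as literally written does not include $F_{j-1}(m)$ as a summand, so if $m_0$ has no incoming transition then $F_1(m_0)=\overline{-1}\not\succeq f_0$, and neither monotonicity nor the paper's closed-form identity holds as stated. The paper's proof glosses over this; your proposed fix (take $F_j(m):=F_{j-1}(m)\sqcup\bigsqcup(\ldots)$, which leaves the fixpoint unchanged) is exactly what is needed to make both proofs go through.
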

\begin{proof}
For point $(1)$, we first note that for all state $m$ of $\MM$ the sequence of counting function $(F_j(\cdot))_{j \in \mathbb{N}}$ stabilizes after $|\MM| \times |Q| \times (k+1)$. Indeed, for all $j \geq 0$, and for all $m \in M$, we have that $F_{j}(m) \preceq F_{j+1}(m)$. As chains in the lattice of counting function $CF(\atm,k)$ has length at most $|Q| \times (k+1)$, each $m$ can be updated at most this number of times. The total number of iterations before stabilization of the $|M|$ state labels defined by the $F_j$ is thus at most $|M| \times |Q| \times (k+1)$.

For point $(2)$, we note that the counting function $F_j(m)$ is
computed as the least upper bound applied of at most $|M|$ counting
functions obtained by applying the transition function of
$\mathcal{D}(\atm,k)$ on counting functions defined by  $F_{j-1}$. It
is important to note that we de not need to construct the entire
automaton $\mathcal{D}(\atm,k)$ for this purpose as we can compute
transitions on-demand based according to Definition~\ref{def:determinization}(3). So the complexity is bounded by ${\bf O}(|M| \times |Q|)$ for updating one state $m$ and thus the overall complexity of one update of all the states is bounded by ${\bf O}(|M|^2 \times |Q|)$.

For $(3)$ we reason by induction, using the definition of
$\Delta$ and $\delta^{\mathcal{D}}$, to prove for all $j \geq 0$
and $m {\in} M$: $$F_j(m)=\bigsqcup \{ f \mid 0 \leq i \leq j \land \exists u \in (\inputs \outputs)^i \cdot  \transS^*(m_0,u) = m \land \delta^{\mathcal{D}}(f_0,u)=f \}.$$\qed
\end{proof}

We now show  that $L(\mathcal{D}(\atm,k))$ is
$\PP$-realizable if and only if there does not exist $m \in M$ such
that $F^*(m) \not \in W^{\atm}_k$. Following the proof of Theorem~\ref{thm:folklore}, we know that $L(\mathcal{D}(\atm,k))$ is $\PP$-realizable iff,
        $L_\omega(\PP)\subseteq L(\mathcal{D}(\atm,k))$ and for every hole
    $h = (p,\inp)$ of $\PP$, there exists $\out_h\in \outputs$ and a Mealy
    machine $\MM_h$ such that for all $u\in
    \textsf{Left}_{p}$, $\MM_h$ realizes $(u\inp\out_h)^{-1} L(\mathcal{D}(\atm,k))$.

First, checking whether $L_\omega(\PP)\subseteq L(\mathcal{D}(\atm,k))$ can be done by verifying that $F^*(m)(q) \not=k+1$ for all $m \in M$. Second, checking the existence of $\MM_h$ is equivalent to check that $\bigcap_{u \inp \in \textsf{Left}_{p}} (u \inp)^{-1} L(\mathcal{D}(\atm,k))$ is realizable. In turn, this is equivalent, by Lemma~\ref{lem:interup} and point $(3)$, to check that $L(\mathcal{D}(\atm,k)[F^*(p)])$ is realizable, which is equivalent to check if $F^*(p) \in W^{\atm}_k$. Both tests can be done in polynomial time in the size of $\atm$ and in the size of $\lceil W^{\atm}_k \rceil$.

We have established that given a preMealy $\PP$,
co-B\"uchi automata $\atm$,  $k \in \mathbb{N}$, and the
$\preceq$-antichain $\lceil W^{\atm}_k \rceil$, we can compute the
sequence of functions $(F_j :M \rightarrow CF(\atm,k))_{j \in
  \mathbb{N}}$ in polynomial time in the size of those inputs, and thus decide (by point $(4))$ if the specification
$L(\mathcal{D}(\atm,k))$ is $\PP$-realizable. According to
Corollary~\ref{cor:computeantichain}, the antichain $\lceil W^{\atm}_k
\rceil$ can be computed in exptime in the size of $\atm$ and
the encoding of $k$. We have thus established the upper-bound of
Theorem~\ref{thm:sizeSolk}. The lower bound is a direct consequence of
Theorem~\ref{thm:sizeSol} which establishes the lower bound for
any specification given as a universal coB\"uchi automaton $\atm$, and
Theorem~\ref{thm:transferpreal}  which reduces the $\PP$-realizability
of $L^\forall(\atm)$ to the $\PP$-realizability of $L_k^\forall(\atm)$
for a $k$ which is exponential in the number of states of
$\atm$. Since $k$ is in binary, its size remains polynomial, so the
reduction is polynomial. This ends the proof of Theorem~\ref{thm:sizeSolk}.
\qed
~~\\

\end{document}